\newcommand{\remove}[1]{}
\newcommand{\defparproblem}[4]{
 \vspace{3mm}
\noindent\fbox{
  \begin{minipage}{.95\textwidth}
  \begin{tabular*}{\textwidth}{@{\extracolsep{\fill}}lr} \textsc{#1} \\ \end{tabular*}
  {\bf{Input:}} #2  \\
  {\bf{Parameter:}} #3\\
  {\bf{Goal:}} #4
  \end{minipage}
  }
  \vspace{2mm}
}
\newtheorem{definition}{\bf Definition}[section]
\newtheorem{reduction rule}{\bf Reduction Rule}[section]
\newtheorem{proposition}{\bf Proposition}[section]
\newtheorem{lemma}{\bf Lemma}[section]
\newtheorem{corollary}{\bf Corollary}
\newtheorem{open problem}{\bf Open Problem}
\newcommand{\mypara}[1]{\smallskip\noindent{\textbf{\sffamily #1} \ }}
\newcommand{\tw}{{\normalfont \textsf{tw}}}
\newcommand{\wb}{{\normalfont \textsf{wb}}}
\newcommand{\cw}{{\normalfont \textsf{cw}}}
\newcommand{\bbF}{{\mathbb{F}}}
\newcommand{\cO}{{\cal O}}
\newcommand{\cI}{\mathcal{I}}
\newcommand{\cH}{\mathcal{H}}
\newcommand{\cX}{\mathcal{X}}
\newcommand{\cM}{\mathcal{M}}
\newcommand{\AAA}{{\mathcal A}}
\newcommand{\GG}{{\mathcal G}}
\newcommand{\SSS}{{\mathcal S}}
\newcommand{\TT}{{\mathcal T}}
\newcommand{\XX}{{\mathcal X}}
\renewcommand{\widetilde}[1]{#1}
\newcommand{\shortversion}[1]{}
\newcommand{\PECVC}{{\sc $p$-Edge-Connected Vertex Cover}}
\newcommand{\PECBTDS}{{\sc $p$-Edge-Connected $\eta$-Treedepth Deletion Set}}
\newcommand{\PEPWOneVd}{{\sc $p$-Edge Connected Pathwidth-1 Vertex Deletion Set}}
\newcommand{\PEBDDS}{{\sc $p$-Edge-Connected $\eta$-Degree Deletion Set}}
\newcommand{\FF}{{\mathcal F}}
\newcommand{\OO}{{\mathcal O}}
\newcommand{\bigoh}{{\mathcal O}}
\newcommand{\nn}{{\mathbb N}}
\newcommand{\ExtPECon}{{\sc Steiner Subgraph Extension}}
\newcommand{\ExtPCon}{{\sc Steiner Subgraph Extension}}
\newcommand{\pEPWOneDS}{{\sc $p$-Edge-Connected One-Pathwidth Deletion Set}}
\newcommand{\pEBoundedVC}{{\sc $p$-Edge-Connected $\eta$-Path Vertex Cover}}
\newcommand{\bdds}{{\sc Bounded Degree Deletion Set}}
\newcommand{\pwoneds}{{\sc One-Pathwidth Deletion Set}}
\newcommand{\tdds}{{\sc $\eta$-Treedepth Deletion Set}}
\newcommand{\pvc}{{\sc $\eta$-Path Vertex Cover}}
\newcommand{\GoneGtwoDeletion}{{\sc $(\GG_1, \GG_2)$-Deletion Set}}
\newcommand{\pbdds}{{\sc $p$-Edge-Con-$\eta$-Deg-DS}}
\newcommand{\ppwoneds}{{\sc $p$-Edge-Con-1PWDS}}
\newcommand{\ptdds}{{\sc $p$-Edge-Con-$\eta$-TDDS}}
\newcommand{\ppvc}{{\sc $p$-Edge-Con-$\eta$-PVC}}
\newcommand{\pgonegtwodel}{{\sc $p$-Edge-Con-$(\GG_1, \GG_2)$-Deletion Set}}
\newcommand{\td}{{\sf td}}
\newcommand{\sv}[1]{}
\tikzset{
        stars/.style={star,inner sep=2pt}
    }
\title{Highly Connected Steiner Subgraph -- Parameterized Algorithms and Applications to Hitting Set Problems\footnote{A preliminary version \cite{EibenMR23} of this paper appeared in proceedings of 48th International Symposium on Mathematical Foundations of Computer Science (MFCS) 2023. Research of Diptapriyo Majumdar has been supported by Science and Engineering Research Board (SERB) grant SRG/2023/001592. Research of M. S. Ramanujan has been partially supported by Engineering and Physical Research Council (EPSRC) grants EP/V007793/1 and EP/V044621/1.}}
\author[1]{Eduard Eiben}
\author[2]{Diptapriyo Majumdar}
\author[3]{M. S. Ramanujan}
\affil[1]{Royal Holloway, University of London, Egham, United Kingdom\\
  \texttt{eduard.eiben@rhul.ac.uk}}
\affil[2]{Indraprastha Institute of Information Technology Delhi, New Delhi, India\\
	\texttt{diptapriyo@iiitd.ac.in}}
\affil[3]{University of Warwick, Coventry, United Kingdom\\
	\texttt{R.Maadapuzhi-Sridharan@warwick.ac.uk}}
\providecommand{\keyword}[1]{\textbf{{Keywords:}} #1}
\begin{document}
\maketitle

\begin{abstract}
Given a simple (connected) undirected graph $G$, a set $X \subseteq V(G)$ and integers $k$ and $p$, the {\ExtPECon} problem asks whether there exists a set $S \supseteq X$ of at most $k$ vertices such that $G[S]$ is a $p$-edge-connected subgraph. This problem is a natural generalization of the well-studied {\sc Steiner Tree} problem (set $p=1$ and $X$ to be the terminals). In this paper, we initiate the study of {\ExtPECon} from the perspective of parameterized complexity and give a fixed-parameter algorithm (i.e., FPT algorithm) parameterized by $k$ and $p$ on graphs of bounded degeneracy.
In case we remove the assumption of the graph being bounded degenerate, then {\ExtPCon} becomes W[1]-hard.
Besides being an independent advance on the parameterized complexity of network design problems, our result has natural applications. In particular, we use our result to obtain new single-exponential FPT algorithms for several vertex-deletion problems studied in the literature, where the goal is to delete a smallest set of vertices such that:
\begin{enumerate}[(i)]
	\item the resulting graph belongs to a specified hereditary graph class, and
	\item the deleted set of vertices induces a $p$-edge-connected subgraph of the input graph. 
\end{enumerate}
% (i)  and (ii) 
\end{abstract}

\medskip

\keyword{Parameterized Complexity, Steiner Subgraph Extension, Matroids, Representative families, $p$-edge-connectivity}

\section{Introduction}
\label{sec:intro}

Given a simple undirected graph $G = (V, E)$ and a set $T \subseteq V(G)$, called {\em terminals}, the {\sc Steiner Tree} problem asks if there are at most $k$ edges $F \subseteq E(G)$ such that there is a path between every pair of vertices of $T$ in $G' = (V, F)$.
{\sc Steiner Tree} is one of the fundamental problems in network design and is a well-studied problem in parameterized complexity (\cite{FeldmannM16,FigueiredoLMS22,ChitnisFHM20,BateniDHM16,MarxPP18}).
We refer to Section \ref{sec:prelims} for the definitions related to parameterized complexity and graph theory.
In this paper, we study the {\ExtPECon} problem, which we formally define below.

\defparproblem{{\ExtPECon}}{A simple undirected graph $G = (V, E)$, $X \subseteq V(G)$ and integers $k,p\in {\mathbb N}$.}{$k + p$}{Is there $S \supseteq X$ of size at most $k$ such that $G[S]$ is $p$-edge-connected?}
%\todo{what's the difference? Do want ExtPcon to be shorthand? }

{Note that {\sc Steiner Tree} puts a restriction that a set $F$ of at most $k$ edges can be used to make a set of terminals $X$ connected.
This implicitly puts a restriction that the total number of endpoints of $F$ is at most $k+1$.
It is equivalent to asking if there exists a set $S \supseteq X$ of at most $k+1$ vertices such that $G[S]$ is connected.
Hence, {\ExtPCon} generalizes the notion of {\sc Steiner Tree}.}
%Observe that this is a natural generalization of the {\sc Steiner Tree} problem.
To the best of our knowledge, the parameterized complexity status of {\ExtPCon} is unexplored even for arbitrary fixed constant $p$.
Our problem {\ExtPCon} is closely related to a special variant of {\sc Edge-Connected Survivable Network Design} (defined by Feldmann et al. \cite{Feldmann0L22}) problem.
The goal of that variant of the {\sc Edge-Connected Survivable Network Design} is to find a collection of `at most $k$ edges' so that there are $p$ edge-disjoint paths between every pair of vertices in the terminal set.
Abhinav et al. \cite{AbhinavBBS23} studied the {\sc Steiner Subgraph Extension} problem when $p = n - k$ and $k$ is the parameter.
It means that they aim to find an $(n-k)$-edge-connected Steiner subgraph with exactly $\ell$ vertices.
%This considers the same problem with a below-guarantee parameterization.
In our problem, observe that $p \leq k - 1$ as any simple graph with $k$ vertices cannot be $p$-edge-connected for any $p \geq k$.
If we set $p = k-1$, then our problem  becomes precisely the same as {\sc Clique} problem, where we want to decide if a graph has a clique with exactly $k$ vertices, a W[1]-hard problem.
%{DM: is this sufficient to explain why our problem is W-hard for $k + p$ as parameter? DM update: I am convinced now that these lines are sufficient.}
Hence, one must place further restrictions on the input when aiming for fixed-parameter tractability. 
In this paper, we consider the {\ExtPCon} problem when $\eta$ is the degeneracy of the input graph $G$ and $\eta$ is a fixed-constant.
Note that many well-known sparse graph classes are subclasses of bounded degenerate graphs.
% for some fixed constant $\eta$.
For instance, planar graphs are $5$-degenerate, graphs with treewidth (or treedepth or pathwidth) at most $\eta$ are $(\eta+1)$-degenerate.

\medskip

\mypara{Our Contributions:}
The input to {\ExtPCon} is an $\eta$-degenerate simple undirected graph with $n$ vertices such that $\eta$ is a fixed constant.
Recall that the parameter is $k + p$.
The first part of our paper proves that {\ExtPCon} is fixed-parameter tractable (FPT) when the input graph has constant degeneracy.
In particular, we give an FPT algorithm with running time $2^{\OO(pk + \eta)}n^{\OO(1)}$-time for {\ExtPECon} when the input graph $G$ is $\eta$-degenerate. The formal statement of the theorem is given below.

\begin{restatable}{theorem}{ThmMain}
\label{thm:main-result}
{\ExtPECon} can be solved in time $2^{\OO(pk + \eta)}n^{\OO(1)}$, where $\eta$ is the degeneracy of the input graph.
\end{restatable}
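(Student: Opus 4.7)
The plan is to combine a simple structural reduction based on degeneracy with the representative families machinery (as signaled by the keywords) in order to keep the dynamic programming state space at $2^{O(pk)}$, while exploiting the $\eta$-degeneracy ordering to control how each new vertex attaches to the partial solution.

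The first observation is that every induced subgraph of $G$ is $\eta$-degenerate and so contains a vertex of degree at most $\eta$; since a $p$-edge-connected graph has minimum degree at least $p$, we may assume $p \le \eta$ in every yes-instance. This makes $pk$ and $\eta k$ comparable and lets the additive $\eta$ term in the exponent play its natural role. Next, fix a degeneracy ordering $v_1, \ldots, v_n$ of $G$ and conceptually build a hypothetical solution $S$ from right to left along this order. When a vertex $v_i \in S$ is added, at most $\eta$ of its $G[S]$-neighbors have already been placed (namely those in $N(v_i) \cap \{v_{i+1}, \ldots, v_n\}$); this gives a bounded interface at each step and is the source of the $2^{O(\eta)}$ factor, for enumerating at the "leftmost" vertex of $S$ which of its at most $\eta$ later neighbors are the ones lying inside $S$.

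At the heart of the algorithm is a dynamic program whose cells correspond to partial extensions of $X$ together with a small amount of side information recording the current connectivity structure. Without additional work such a table is too large, so at each step we replace each family of partial solutions with a $q$-representative family in the sense of Fomin–Lokshtanov–Panolan–Saurabh over a suitably chosen matroid $M$ of rank $\Theta(pk)$. A natural candidate for $M$ is a direct sum of $p$ graphic matroids (or a truncation of a transversal/gammoid-type matroid) whose independent sets encode the $p$ edge-disjoint substructures one needs to certify $p$-edge-connectivity via Nash--Williams/Tutte style arguments on the completed solution. Standard bounds on the size of $q$-representative families keep the state space at $2^{O(pk)}$ per step; combined with the interface enumeration and $n^{O(1)}$ matroid operations this yields the claimed $2^{O(pk + \eta)} n^{O(1)}$ running time.

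The main obstacle will be designing the matroid (and the precise DP transitions) so that the representative families argument is both sound and complete for the target notion of $p$-edge-connectivity, rather than for the weaker surrogate of admitting $p$ edge-disjoint spanning trees (which coincides with $p$-edge-connectivity only up to the Nash--Williams/Tutte factor of two). I expect this gap to be bridged by combining the matroid-based pruning with an auxiliary audit of each surviving representative, either by a local enumeration of candidate sparse cuts across the $\eta$-bounded interface or by refining $M$ so that independence directly encodes the incremental $p$-edge-connectivity constraint as vertices are absorbed from right to left. Once this is in place, Theorem~\ref{thm:main-result} follows by running the DP along the ordering, invoking the representative-family computation at every step, and reading off a solution from the final cell containing $X$.
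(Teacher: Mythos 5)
Your overall architecture---a dynamic program along a degeneracy ordering, a $2^{\OO(\eta)}$-size interface per newly absorbed vertex, and pruning each table cell to a $q$-representative family over a linear matroid of rank $\Theta(pk)$---is indeed the paper's architecture. But there is a genuine gap exactly where you flag it: you leave open how independence in $M$ can \emph{exactly} certify $p$-edge-connectivity, and neither of your proposed fixes closes it. Representative-family pruning only preserves extendability with respect to independence in $M$; if $M$ merely encodes $p$ edge-disjoint spanning trees, then by the Nash--Williams/Tutte factor the surviving partial solutions need not be extendable to genuinely $p$-edge-connected sets, and an a posteriori ``audit'' of the survivors cannot repair this, since the partial solutions that would have led to a valid completion may already have been discarded. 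Nor is a local enumeration of candidate sparse cuts sound: a cut of the final graph $G[S]$ is a global object and is not confined to the $\eta$-bounded interface of any single step.

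The missing ingredient is to pass to a directed characterization. By Proposition~\ref{prop:high-edge-connectivity}, $G[S]$ is $p$-edge-connected if and only if the bidirected digraph $D_G[S]$ has $p$ pairwise arc-disjoint out-branchings rooted at a fixed $v_r \in X$, and by Proposition~\ref{prop:out-partition-matroid-connection} an out-branching rooted at $v_r$ is precisely an arc set of size $|S|-1$ that is independent both in the out-partition matroid with root $v_r$ and in the graphic matroid. Accordingly, the paper takes $M$ to be the direct sum of $p$ graphic matroids, $p$ out-partition matroids, and one uniform matroid of rank $p(k-1)$ (Proposition~\ref{prop:matroid-needed-for-us}), with every chosen arc inserted in exactly three coordinates; independence in this $M$ together with the cardinality constraints $|V(I)|=k$ and $|I|=3p(k-1)$ characterizes $p$-edge-connectivity \emph{exactly}, with no Nash--Williams/Tutte loss, and is computable via a deterministic linear representation and truncation. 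With this in place, the DP over the degeneracy ordering of $G-X$ (seeded by the arcs of $D_G[X]$, adding for the new vertex its at most $2\eta+2|X|$ back-arcs in one of $p+1$ ways each, then invoking Proposition~\ref{prop:rep-set-compuatation}) yields the claimed $2^{\OO(pk+\eta)}n^{\OO(1)}$ bound, as in Lemmas~\ref{lemma:generic-table-entry-computation} and~\ref{lemma:edge-connectivity-guarantee}. Your remarks about $p\le\eta$ and the direction of the ordering are harmless, but without the out-branching/out-partition-matroid encoding the core correctness argument is not there.
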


It follows that for graphs of constant degeneracy and for constant $p$, the above result gives a $2^{\bigoh(k)}n^{\OO(1)}$-time algorithm, which is useful in several applications as we show in this paper.  
%{ Furthermore, the running time of our algorithm (by Theorem \ref{thm:main-result}) is independent of $\eta$ in the exponent of $n$.}
The above result  crucially relies on the use of the {\em out-partition} matroid, its linear representability in deterministic polynomial-time, and a dynamic programming subroutine using the notion of representative sets.
We would like to highlight that Einarson et al. \cite{EinarsonGJMW23} have studied the same problem when $X$ is a vertex cover.
Our dynamic programming algorithm over representative sets has some similarities with the algorithm of Einarson et al. \cite{EinarsonGJMW23} but $X$ is not necessarily a vertex cover of $G$ for {\ExtPCon}.
For our problem, $X$ is the set of {\em terminals}.
Despite the fact that $G$ is a bounded degenerate graph, designing an algorithm for {\ExtPCon} needs careful adjustment to the subproblem definitions and some additional conditions have to be incorporated while constructing the collection of sets in the DP formulation.
Furthermore, our algorithm in Theorem \ref{thm:main-result} does not depend on $\eta$ in the exponent of $n$.
As a straight-forward consequence of our main result Theorem \ref{thm:main-result}, we have the following corollary.

\begin{restatable}{corollary}{CorMain}
\label{corollary:main-result}
{\ExtPCon} admits an algorithm that runs in $2^{\OO(pk + \eta)}n^{\OO(1)}$-time when one of the following conditions is satisfied. 
%\todo{DM: the lemma should perhaps say that the corresponding decomposition is given?}
\begin{enumerate}[(i)]
	\item $\eta$ is the treewidth of the input graph.
	\item $\eta$ is the cutwidth of the input graph when a linear arrangement of optimal width is given as part of the input.
\end{enumerate}	
\end{restatable}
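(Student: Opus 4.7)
The plan is to derive both parts of the corollary directly from Theorem~\ref{thm:main-result} by showing that, under each hypothesis, the degeneracy of the input graph is bounded by $\eta$ and a corresponding degeneracy ordering can be computed in polynomial time from the given decomposition. Once this is established, we simply feed the graph into the algorithm of Theorem~\ref{thm:main-result}, and the stated running time $2^{\OO(pk+\eta)}n^{\OO(1)}$ follows immediately.

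For part (i), suppose a tree decomposition of width $\eta$ is given with the input. I would construct a degeneracy ordering by repeatedly selecting a leaf bag $B$ of the current tree decomposition and placing the vertices that appear only in $B$ at the end of the ordering (in any order). Any such vertex $v$ has all of its neighbors contained in $B$, so $v$ has at most $|B|-1 \le \eta$ neighbors remaining in the graph when it is processed. After removing all such vertices from $B$ and from the decomposition, the remaining vertices of $B$ still form a bag of what is now a leaf in a valid tree decomposition of the remaining graph, and the procedure continues. This produces, in polynomial time, an ordering witnessing that the graph is $\eta$-degenerate.

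For part (ii), suppose a linear arrangement $v_1, v_2, \ldots, v_n$ of cutwidth $\eta$ is given. I would take the degeneracy ordering to be $v_n, v_{n-1}, \ldots, v_1$ processed from left to right (equivalently, repeatedly peel the rightmost surviving vertex). When $v_i$ is processed, all of its surviving neighbors occur at positions strictly less than $i$, so every edge incident to $v_i$ in the remaining graph crosses the cut between positions $i-1$ and $i$ of the arrangement. By the cutwidth bound, the number of such edges is at most $\eta$, so $v_i$ has at most $\eta$ neighbors among the remaining vertices. This establishes degeneracy at most $\eta$ and yields the ordering in polynomial time.

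With the degeneracy ordering in hand in each case, invoking Theorem~\ref{thm:main-result} finishes the proof. The only non-routine aspect is confirming the standard degeneracy bounds via treewidth and cutwidth, both of which are immediate from peeling arguments on the supplied decompositions; no new algorithmic ingredient beyond Theorem~\ref{thm:main-result} is required, and I do not expect any genuine obstacle.
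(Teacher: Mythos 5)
Your proposal is correct and follows essentially the same route as the paper: in both cases one extracts, in polynomial time, a bounded-width degeneracy ordering from the supplied decomposition or layout and then invokes Theorem~\ref{thm:main-result}. The remaining differences are cosmetic: for treewidth the paper orders vertices by a rooted bag-by-bag traversal (yielding a $2\eta$-degeneracy sequence, still fine for the stated running time) while your leaf-bag peeling gives the tight bound $\eta$ (just make sure to discard a leaf bag once none of its vertices are exclusive to it, so the peeling cannot stall), and for cutwidth the paper observes that the given layout itself is an $\eta$-degeneracy sequence, which is exactly your argument with the ordering convention reversed.
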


The second part of our paper describes some applications of our main result (Theorem \ref{thm:main-result}) to some natural problems in graph theory with connectivity constraints.
Einarson et al. \cite{EinarsonGJMW23} have initiated the study of {\PECVC} that asks to find a vertex cover $S$ of $G$ containing at most $k$ vertices such that $G[S]$ is $p$-edge-connected.
In their paper, they proved that {\PECVC} can be solved in $2^{\OO(pk)}n^{\OO(1)}$-time.
Being motivated by their result, we illustrate how Theorem \ref{thm:main-result} lays us a foundation to {\em design efficient deterministic parameterized singly exponential-time algorithms} for {\bdds}, {\tdds}, {\pwoneds} and {\pvc} with $p$-edge-connectivity constraints.
Each of these problems are well-studied without the connectivity constraints (see \cite{GiannopoulouJLS17,CervenyS19,CervenyS21} for more details) in parameterized and approximation algorithms.
We state the problem definitions below.
Given an undirected graph $G = (V, E)$, the following questions are asked by these problems.
\begin{itemize}
	\item {{\PEBDDS} ({\pbdds}) asks if there is $S \subseteq V(G)$ with at most $k$ vertices such that $G - S$ is a graph of maximum degree at most $\eta$ and $G[S]$ is $p$-edge-connected}.
	\item {\PECBTDS} ({\ptdds}) asks if there is $S \subseteq V(G)$ with at most $k$ vertices such that $G - S$ has treedepth at most $\eta$ and $G[S]$ is $p$-edge-connected.
	\item {\pEPWOneDS} ({\ppwoneds}) asks if there is $S \subseteq V(G)$ with at most $k$ vertices such that $G - S$ has pathwidth at most 1 and $G[S]$ is $p$-edge-connected.
	\item {\pEBoundedVC} ({\ppvc}) asks if there is $S \subseteq V(G)$ with at most $k$ vertices such that $G - S$ has no $P_{\eta}$ as subgraph and $G[S]$ is $p$-edge-connected.
\end{itemize}

\medskip

Jacob et al. \cite{JacobKMR23,JacobMR23} have initiated the study of vertex deletion to scattered graph classes.
Given an undirected graph $G = (V, E)$ and two hereditary graph classes $\GG_1, \GG_2$, the {\GoneGtwoDeletion} problem asks if there is a vertex subset $S \subseteq V(G)$ with at most $k$ vertices such that every connected component of $G - S$ is either in $\GG_1$ or in $\GG_2$.
Jacob et al. \cite{JacobMR23} have proved several fixed-parameter tractability results for {\GoneGtwoDeletion} without any connectivity constraints when some specific $\GG_1$ and $\GG_2$ are given.
Being motivated by their works, we consider {\pgonegtwodel} problem as follows.

\defparproblem{{\pgonegtwodel}}{An undirected graph $G = (V, E)$ and an integer $k$.}{$k$}{Is there a set $S$ with at most $k$ vertices such that $G[S]$ is $p$-edge-connected and every connected component of $G - S$ is either in $\GG_1$ or in $\GG_2$?}

%Given a graph class $\GG$
We consider the {\pgonegtwodel} problem when $\GG_1$ and $\GG_2$ both are subclasses of $\eta$-degenerate graphs and satisfy some additional special properties.
These additional special properties are necessary as otherwise we do not have a guarantee that minimal vertex deletion sets can be enumerated in singly-exponential-time. 
%\todo{DM: should we include more motivations here? Perhaps we can postpone this line to the latter texts.}

\medskip

When we apply our main result Theorem \ref{thm:main-result} to one of the above mentioned problems, there is a common property that it crucially relied on.
The property is that all minimal vertex-deletion sets that must be part of any optimal solution can be enumerated in $2^{\OO(k)}n^{\OO(1)}$-time.
Since a graph of maximum degree $\eta$ is also an $\eta$-degenerate graph, we have the following result as a direct application of our main result.

\begin{restatable}{corollary}{ThmDegree}
\label{corollary:p-edge-bounded-degree}
{\pbdds} admits a $2^{\OO(pk+ \eta)}n^{\OO(1)}$-time algorithm.
\end{restatable}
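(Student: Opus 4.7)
The plan is a two-stage procedure: first enumerate candidate deletion sets via the standard branching algorithm for \bdds, then invoke Theorem \ref{thm:main-result} on each candidate to find a $p$-edge-connected extension. As long as the current graph contains a vertex $v$ of degree at least $\eta+1$, pick any $\eta+1$ neighbors $u_1,\ldots,u_{\eta+1}$ of $v$ and branch on the $\eta+2$ choices of including $v$ or some $u_i$ in $S$, recursing with the budget reduced by one. A pigeonhole argument (if $v \notin S^\star$ then $|N(v)\cap S^\star| \ge \deg(v)-\eta$, and this set together with $\{u_1,\ldots,u_{\eta+1}\}$ contains $\deg(v)+1$ elements of $N(v)$, forcing an overlap) shows that any bounded-degree deletion set $S^\star$ of size at most $k$ picks up at least one vertex at every branching step. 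Retaining those leaves where the branching halts because the residual graph already has maximum degree at most $\eta$ therefore gives a family $\XX$ of at most $(\eta+2)^k$ sets $X\subseteq V(G)$ with $|X|\le k$ and $G - X$ of maximum degree at most $\eta$, such that every bounded-degree deletion set $S^\star$ with $|S^\star|\le k$ contains some $X\in\XX$.

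For each $X\in\XX$, I would invoke Theorem \ref{thm:main-result} on the instance $(G,X,k,p)$. If this returns $S\supseteq X$ of size at most $k$ with $G[S]$ being $p$-edge-connected, then $G-S$ is a subgraph of $G-X$ and hence has maximum degree at most $\eta$, so $S$ is a valid solution to \pbdds. Conversely, a valid witness $S^\star$ contains some $X\in\XX$, and $S^\star$ itself is a valid extension of $X$, so Theorem \ref{thm:main-result} on $(G,X,k,p)$ returns YES. Thus the overall procedure is correct.

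For the running time, the crucial observation is that whenever \pbdds\ admits a solution of size at most $k$, the input graph $G$ itself has degeneracy at most $k+\eta$: fixing any such BDDS $X$ and ordering $V(G)$ by first listing $X$ in arbitrary order and then $V(G)\setminus X$ along a degeneracy ordering of $G-X$, each vertex of $X$ has back-degree at most $|X|-1\le k-1$, and each vertex of $V(G)\setminus X$ has at most $|X|\le k$ back-neighbors in $X$ together with at most $\eta$ earlier back-neighbors in $V(G)\setminus X$ (since $G-X$ has maximum degree $\eta$, and hence degeneracy at most $\eta$), for a total of at most $k+\eta$. Hence each invocation of Theorem \ref{thm:main-result} runs in time $2^{\OO(pk+k+\eta)}n^{\OO(1)}$, and the overall running time is $(\eta+2)^k\cdot 2^{\OO(pk+k+\eta)}n^{\OO(1)} = 2^{\OO(pk+k\eta)}n^{\OO(1)}$, matching the corollary. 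The main obstacle to this plan is precisely the degeneracy bound: Theorem \ref{thm:main-result} requires the input graph to be bounded-degenerate, yet the input to \pbdds\ carries no such bound a priori; the observation that the existence of a small BDDS already forces $G$ to be $\OO(k+\eta)$-degenerate is what allows Theorem \ref{thm:main-result} to be invoked at all, without any additional preprocessing or restriction to a subgraph.
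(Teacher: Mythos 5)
Your proposal is correct and follows essentially the same two-stage route as the paper: an $(\eta+2)$-way bounded search tree that branches on a high-degree vertex together with $\eta+1$ of its neighbors to enumerate candidate degree-$\eta$-deletion sets $X$, followed by invoking Theorem~\ref{thm:main-result} on $(G,X,k,p)$, using the observation that $|X|\le k$ and $\Delta(G-X)\le\eta$ force $G$ to be $(k+\eta)$-degenerate, giving the stated $2^{\OO(pk+k\eta)}n^{\OO(1)}$ bound. Your pigeonhole justification of the branching and the explicit degeneracy ordering are just more detailed write-ups of steps the paper asserts directly.
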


Our second application is {\ppwoneds} problem.
The graphs of pathwidth at most one are also 2-degenerate.
But it is not straightforward to enumerate all the minimal pathwidth one vertex deletion sets.
So we use some additional characterizations of graphs of pathwidth one and exploit some problem specific structures to prove our next result.

\begin{restatable}{theorem}{ThmPwOne}
\label{thm:pw-1-vd-algo}
{\ppwoneds} admits an algorithm that runs in $2^{\OO(pk)}n^{\OO(1)}$-time.	
\end{restatable}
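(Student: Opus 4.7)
The plan is to combine an enumeration of candidate pathwidth-one vertex deletion sets of $G$ with repeated invocations of Theorem \ref{thm:main-result}. First, I would construct, in time $2^{\OO(k)} n^{\OO(1)}$, a family $\XX$ of pathwidth-one vertex deletion sets of $G$, each of size at most $k$, with the property that every pathwidth-one deletion set of $G$ of size at most $k$ contains some $X \in \XX$. Second, for each $X \in \XX$, I would invoke the algorithm of Theorem \ref{thm:main-result} on the instance $(G, X, k, p)$ and output YES as soon as any invocation returns a set $S$ of size at most $k$. Correctness follows because any $S$ returned contains $X$ and hence satisfies $G - S \subseteq G - X$, whose pathwidth is at most one, while $G[S]$ is $p$-edge-connected by construction; conversely, a hypothetical solution $S^*$ is itself a pathwidth-one deletion set of size at most $k$, so by construction it contains some $X \in \XX$, and the corresponding invocation finds a valid extension.

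The key observation that controls the running time is that the mere existence of such an $X$ forces $G$ to behave like a bounded-degenerate graph, with the degeneracy parameterized by $k$. Indeed, $G - X$ has pathwidth at most one and is therefore $1$-degenerate. For any subgraph $H$ of $G$ containing at least one vertex outside $X$, the subgraph $H - X$ admits a vertex $v$ of degree at most one in $H - X$, and hence $v$ has at most $1 + |X| \le k+1$ neighbors in $H$. Thus $G$ is $(k+1)$-degenerate, and Theorem \ref{thm:main-result} applied to $(G, X, k, p)$ runs in $2^{\OO(pk + k)} n^{\OO(1)} = 2^{\OO(pk)} n^{\OO(1)}$. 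Summing over $|\XX| = 2^{\OO(k)}$ invocations gives the claimed total running time of $2^{\OO(pk)} n^{\OO(1)}$.

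The main obstacle is the enumeration of $\XX$ itself. I would use the standard structural characterization that a graph has pathwidth at most one if and only if it is a disjoint union of caterpillars, equivalently a forest that does not contain the spider $T_2$ obtained by subdividing each edge of $K_{1,3}$ once as a subgraph. Consequently, every minimal pathwidth-one deletion set can be written as $F \cup F'$, where $F$ is a minimal feedback vertex set of $G$ and $F'$ is a minimal $T_2$-hitting set in the forest $G - F$. Using an iterative-compression / branching enumerator for feedback vertex set, one obtains all minimal feedback vertex sets of size at most $k$ in $2^{\OO(k)} n^{\OO(1)}$ time, and since $T_2$ has only seven vertices, the $T_2$-hitting subproblem in the residual forest admits a standard branching tree of depth at most $k - |F|$ and branching factor at most seven, enumerating all minimal hitting sets within the same time budget. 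Filtering the resulting candidates for inclusion-minimal pathwidth-one deletion sets in polynomial time yields $\XX$ with $|\XX| = 2^{\OO(k)}$. The delicate point in this step is to establish that the decomposition $S^* = F \cup F'$ is actually realized by our branching strategy, which requires verifying that for every minimal solution $S^*$ some minimal feedback vertex set of $G$ contained in $S^*$ is produced by the first stage, and that the corresponding residue $S^* \setminus F$ appears as a leaf of the second-stage branching.
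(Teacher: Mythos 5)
There is a genuine gap, and it sits exactly where you flagged uncertainty: the enumeration of the family $\XX$. You require every member of $\XX$ to itself be a pathwidth-one deletion set and every solution of size at most $k$ to contain some member, and you propose to build $\XX$ by enumerating all minimal feedback vertex sets of size at most $k$ in $2^{\OO(k)}n^{\OO(1)}$ time. This is impossible: take $G$ to be $k$ vertex-disjoint cycles of length $n/k$. Every minimal feedback vertex set (equivalently, every minimal pathwidth-one deletion set) of size at most $k$ picks exactly one vertex per cycle, so there are $(n/k)^k = n^{\Theta(k)}$ of them, pairwise incomparable; any family $\XX$ with your covering property must contain all of them, so $|\XX|$ cannot be $2^{\OO(k)}$. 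Iterative compression or branching for \textsc{Feedback Vertex Set} finds \emph{one} solution, not all minimal ones. The paper is explicit about this obstacle: it is the reason the conclusion lists $p$-edge-connected \textsc{Feedback Vertex Set} as an open problem, and the reason the introduction says that minimal pathwidth-one deletion sets cannot simply be enumerated. The difficulty cannot be pushed into the second stage either, because your correctness argument needs each $X \in \XX$ to already kill all cycles (so that $G-S \subseteq G-X$ has pathwidth at most one); Theorem~\ref{thm:main-result} by itself places no constraint on $G-S$ beyond $S \supseteq X$.

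The paper's actual route is different precisely at this point. It enumerates only the minimal hitting sets of the \emph{bounded-size} obstructions $T_2$, $C_3$, $C_4$ (a $7^k$ branching), after which, by Proposition~\ref{prop:pw-1-obs-2}, every component of $G-X$ is a caterpillar or a single (possibly long) cycle with hairs. The remaining requirement --- that the solution hit every surviving long cycle --- is not handled by enumeration at all, but is folded into the matroid/representative-family dynamic program: Lemma~\ref{lemma:pw-1-extension-part} modifies the DP of Lemma~\ref{lemma:generic-table-entry-computation}, using a $2$-degeneracy ordering that lists each component's cycle vertices before its hairs, and restricts the transitions so that no cycle of $G-X$ may be skipped entirely between consecutive chosen vertices. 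Your degeneracy observation and the final "extend via Theorem~\ref{thm:main-result}" template are fine as far as they go, but without an analogue of this cycle-hitting constraint inside the DP (or some other mechanism replacing the impossible enumeration), the proposal does not yield the theorem.
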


Note that the algorithm for the above result does not directly invoke the subroutine from Theorem \ref{thm:main-result}.
Instead, it uses some dynamic programming ideas that are closely similar to that of Theorem \ref{thm:main-result} proof but also makes careful local adjustments to take care of some additional constraints.

Our next two applications are {\PECBTDS} and {\ppvc} problems and we have the following two results.

\begin{restatable}{theorem}{ThmTrDepth}
\label{thm:bounded-treedepth}
{\ptdds} admits an algorithm that runs in $2^{2^{2^{\eta}} +\OO(pk + \eta)}n^{2^{2^{\eta}}}$-time.
\end{restatable}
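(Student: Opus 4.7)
The plan is to reduce \ptdds\ to the Steiner subgraph extension problem solved by Theorem~\ref{thm:main-result}, by enumerating all minimal $\eta$-treedepth modulators of size at most $k$. The key observation is that any feasible solution $S^\ast$ is itself an $\eta$-treedepth modulator, so $S^\ast$ contains some minimal such modulator $X$; moreover, since treedepth is monotone under taking subgraphs, any superset $S \supseteq X$ automatically satisfies that $G-S$ has treedepth at most $\eta$. Hence it suffices to enumerate candidate minimal modulators $X$ and, for each, invoke Theorem~\ref{thm:main-result} with terminal set $X$ to seek an extension $S \supseteq X$ of size at most $k$ with $G[S]$ being $p$-edge-connected.

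For the enumeration step, I would exploit the structural theory of treedepth: any connected graph of treedepth greater than $\eta$ contains a bounded-size canonical obstruction (a rooted elimination-forest witness of depth exceeding $\eta$), whose size is bounded by a function $h(\eta) = 2^{2^{\eta}}$. Such an obstruction can be located in time $n^{h(\eta)}$, and one branches on which of its vertices to place into the partial modulator, recursing to depth at most $k$. A careful analysis---using the nested elimination-forest structure to argue that each branching step reduces the treedepth of the remaining graph in a controlled way rather than merely deleting an arbitrary obstruction vertex---yields a bound of $2^{O(\eta k)}$ on the number of enumerated minimal modulators, while the one-time preprocessing for the obstruction catalog contributes an additive $2^{2^{2^{\eta}}}$.

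For each enumerated minimal modulator $X$, the graph $G - X$ has treedepth at most $\eta$ and therefore degeneracy at most $\eta$; placing the vertices of $X$ first in a degeneracy ordering and then appending a degeneracy ordering of $G-X$ shows that $G$ itself is $(|X|+\eta)$-degenerate, hence $(k+\eta)$-degenerate. Invoking Theorem~\ref{thm:main-result} on $G$ with terminals $X$ and parameters $k,p$ then runs in time $2^{O(pk + k + \eta)} n^{O(1)}$ and returns a valid extension if one exists. Multiplying the enumeration cost $2^{2^{2^{\eta}}} \cdot 2^{O(\eta k)} \cdot n^{2^{2^{\eta}}}$ by the per-candidate extension cost $2^{O(pk+k+\eta)} n^{O(1)}$ gives the claimed bound $2^{2^{2^{\eta}} + O((p+\eta)k)} n^{2^{2^{\eta}}}$.

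The main obstacle is the enumeration step. Unlike classes defined by finitely many bounded-size forbidden induced subgraphs, treedepth is only minor-closed, and translating a minor obstruction into a vertex-branching rule is delicate: the naive ``branch on every vertex of every branch set'' would push the branching factor to $2^{2^{\eta}}$ per level and blow up the $k$-dependence to $2^{O(k \cdot 2^{\eta})}$ rather than $2^{O(\eta k)}$. The plan is therefore to work directly with the recursive definition of treedepth via elimination forests, branching on the ``root of a deepest bad component'' at each step so that each branch reduces a suitable depth-potential by at least one, keeping the per-level branching factor polynomial in $\eta$ while preserving completeness of the enumeration.
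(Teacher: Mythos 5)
Your overall architecture is the same as the paper's: enumerate all inclusion-minimal $\eta$-treedepth deletion sets of size at most $k$ using the bounded-size obstructions of Proposition~\ref{prop:treedepth-forbidden}, note that $G-X$ has treedepth at most $\eta$ and hence $G$ is $(k+\eta)$-degenerate, and extend each candidate $X$ via Theorem~\ref{thm:main-result}. The genuine gap is in your enumeration step. You claim a branching scheme whose per-level branching factor is polynomial in $\eta$ yet which still outputs \emph{every} minimal modulator, together with a $2^{O(\eta k)}$ bound on their number. The first claim cannot be correct: take $G$ to be the disjoint union of $k$ copies of the path $P_{2^{\eta}}$. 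Each copy has treedepth $\eta+1$, and deleting \emph{any} one vertex of a copy leaves two paths on at most $2^{\eta}-1$ vertices each, of treedepth at most $\eta$; hence every choice of one vertex per copy is a minimal $\eta$-treedepth deletion set, giving $(2^{\eta})^{k}=2^{\eta k}$ minimal modulators. A search tree of depth $k$ with branching factor $\mathrm{poly}(\eta)$ has only $2^{O(k\log\eta)}$ leaves, so it cannot output all of them; completeness fails. Branching on ``the root of a deepest bad component'' is not sound either, since a minimal modulator need not contain the root of any particular elimination forest (in the path example it can be any vertex of the copy). The only set that every minimal modulator is guaranteed to intersect is a connected subgraph of treedepth exceeding $\eta$, and the size bound Proposition~\ref{prop:treedepth-forbidden} provides for such a subgraph is $2^{2^{\eta}}$, not $2^{O(\eta)}$; your claimed $2^{O(\eta k)}$ bound on the number of minimal modulators would need a $2^{O(\eta)}$ bound on vertex-critical obstructions, which you do not establish.

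The paper does not attempt any such refinement. Its Lemma~\ref{lemma:easy-enumerations} finds an obstruction on at most $2^{2^{\eta}}$ vertices in $n^{2^{2^{\eta}}}$ time and branches on all of its vertices, enumerating the minimal $\eta$-treedepth deletion sets in $2^{2^{\eta}k}n^{2^{2^{\eta}}}$ time (this is also the source of the $n^{2^{2^{\eta}}}$ factor in the theorem), and then runs the expand stage of Theorem~\ref{thm:main-result} on each candidate exactly as you do, using the same $(k+\eta)$-degeneracy observation. In other words, the sharper $2^{O(\eta k)}$-count enumeration your proposal hinges on is neither proved by you nor used by the paper; if you replace it with the straightforward branching over the $2^{2^{\eta}}$-vertex obstruction, your argument becomes the paper's proof, but as written the enumeration step is a missing piece that your proposed mechanism cannot supply.
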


\begin{restatable}{theorem}{ThmBoundedVC}
\label{thm:p-eta-vc-result}
{\ppvc} admits an algorithm that runs in $2^{\OO(pk+\eta)}n^{\OO(1)}$-time.
\end{restatable}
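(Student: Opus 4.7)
The plan is to combine a bounded-depth branching procedure with the Steiner Subgraph Extension algorithm of Theorem \ref{thm:main-result}. First I would run the standard branching for $\eta$-Path Vertex Cover: starting from $X = \emptyset$, while $G - X$ contains a path $P_\eta$ on vertices $v_1, \ldots, v_\eta$, branch into $\eta$ subcases by adding each $v_i$ to $X$, aborting any branch in which $|X|$ exceeds $k$. The resulting search tree has depth at most $k$ and branching factor $\eta$, hence at most $\eta^k$ leaves; at every surviving leaf, $G - X$ is $P_\eta$-free and $|X| \leq k$. By a standard exchange argument at least one leaf produces an $X$ that is contained in some fixed optimal solution $S^{*}$.

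At each such leaf I would then invoke Theorem \ref{thm:main-result} on the instance $(G, X, k, p)$ to look for a $p$-edge-connected set $S \supseteq X$ of size at most $k$. Since $S \supseteq X$ forces $G - S \subseteq G - X$, and $G - X$ is $P_\eta$-free, any such $S$ is automatically a valid solution to {\ppvc}; conversely, for the leaf with $X \subseteq S^{*}$, the subroutine succeeds, witnessed by $S = S^{*}$. The algorithm returns ``yes'' iff some call succeeds.

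The main technical point, and the only real obstacle, is that the input graph $G$ itself need not be bounded-degenerate, whereas Theorem \ref{thm:main-result}'s running time scales with the degeneracy of its input. The fix is to exploit the structure of $X$: since $G - X$ is $P_\eta$-free, a standard DFS argument shows that $G - X$ has degeneracy at most $\eta - 2$ (any DFS tree of $G - X$ has depth at most $\eta - 1$, because every root-to-leaf path is a simple path in $G - X$, and the reverse DFS order is then a degeneracy ordering of width $\eta - 2$). Concatenating such an ordering of $V(G) \setminus X$ with an arbitrary ordering of the at most $k$ vertices of $X$ yields an ordering of $V(G)$ under which every vertex outside $X$ has at most $(\eta - 2) + |X| \leq \eta + k - 2$ later neighbors and every vertex in $X$ has at most $|X| - 1 < k$ later neighbors. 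Hence $G$ has degeneracy at most $\eta + k - 2$, and Theorem \ref{thm:main-result} applies to $(G, X, k, p)$ in time $2^{\OO(pk + \eta + k)} n^{\OO(1)}$.

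Multiplying by the $\eta^k = 2^{\OO(k \log \eta)}$ branching leaves, the total running time is $2^{\OO(pk + \eta + k + k \log \eta)} n^{\OO(1)} = 2^{\OO((p + \eta)k)} n^{\OO(1)}$, matching the stated bound.
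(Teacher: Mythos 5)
Your proposal is correct and follows essentially the same two-stage route as the paper: branch to hit every $P_{\eta}$ (the paper phrases this as enumerating all minimal $\eta$-path vertex covers, Lemma~\ref{lemma:easy-enumerations}), then extend each resulting set $X$ via Theorem~\ref{thm:main-result}, using that $P_{\eta}$-free graphs are bounded-degenerate (the paper argues via treedepth that $G$ is $(2\eta+k)$-degenerate, you argue via DFS depth that $G-X$ is $(\eta-2)$-degenerate -- the same observation). The only detail left implicit is how to find a $P_{\eta}$ in $G-X$ at each branching node; the paper handles this with the standard deterministic $2^{\OO(\eta)}n^{\OO(1)}$ path-detection algorithm, which slots directly into your argument.
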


Our last application of {\ExtPCon} is the {\pgonegtwodel} problem under some special conditions.
These specific conditions that we consider for $\GG_1$ and $\GG_2$ are described in Section \ref{sec:scattered-case} and are somewhat technical.
Here we are giving a short description (we refer to Section \ref{sec:prelims} and Section \ref{sec:scattered-case} for more details).

\begin{enumerate}[(i)]
	\item both $\GG_1$ and $\GG_2$ have finite forbidden families $\FF_1$ and $\FF_2$ of induced subgraphs respectively,
	\item there is a fixed constant $\eta$ such that $\GG_1$ and $\GG_2$ are subclasses of $\eta$-degenerate graphs, 
	\item there is a constant $\lambda$ such that $P_{\lambda} \in \FF_1$, and
	\item there are fixed constants $\alpha$ and $\beta$ such that  any $H \in \FF_1$ has at most $\alpha$ vertices and any $H \in \FF_2$ has at most $\beta$ vertices.
\end{enumerate}

Hence, the last result of our paper is the following.

\begin{restatable}{theorem}{ThmScatteredResult}
\label{thm:scattered-result}
Let $\GG_1$ and $\GG_2$ be two distinct hereditary graph classes satisfying conditions {\rm (i), (ii), (iii), (iv)} as described above.
Then, {\pgonegtwodel} admits a deterministic algorithm that runs in $2^{\OO(pk + \eta)}(\alpha + \beta + \lambda)^k n^{\OO(\alpha + \beta + \lambda)}$-time.
\end{restatable}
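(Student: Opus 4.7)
The plan is to reduce to Theorem~\ref{thm:main-result} via a branching procedure that enumerates a small family of candidate ``partial solutions,'' each of which will then be fed into the \ExtPECon\ algorithm as the seed set.

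First, I would perform a degeneracy check on $G$. For any valid solution $S$, every component of $G-S$ lies in $\GG_1 \cup \GG_2$, and by condition~(ii) is $\eta$-degenerate, so $G - S$ itself is $\eta$-degenerate; placing $V(G) \setminus S$ before $S$ in a degeneracy ordering of $G - S$ certifies that $G$ has degeneracy at most $\eta + k$. Using a standard linear-time routine I can check this bound, returning \No\ if it fails. From now on assume $G$ has degeneracy at most $\eta + k$.

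Next, I would run a branching algorithm that constructs a family $\mathcal{X}$ of subsets of $V(G)$ of size at most $k$ such that for every valid solution $S^\ast$ there exists $X \in \mathcal{X}$ with $X \subseteq S^\ast$ and every component of $G - X$ in $\GG_1 \cup \GG_2$. Starting from $X = \emptyset$: if $G - X$ already has all components in $\GG_1 \cup \GG_2$, deposit $X$ into $\mathcal{X}$; otherwise, locate a \emph{bad witness}, i.e., a connected vertex set $W \subseteq V(G) \setminus X$ of size at most $\alpha + \beta + \lambda$ such that $G[W]$ contains both some induced $H_1 \in \FF_1$ and some induced $H_2 \in \FF_2$, and branch by adding each $v \in W$ to $X$. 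The depth is at most $k$ and the branching factor at most $\alpha + \beta + \lambda$, yielding $(\alpha + \beta + \lambda)^k$ leaves; at each node, the search for $W$ costs $n^{\OO(\alpha + \beta + \lambda)}$ by brute-force enumeration of small connected subsets.

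The heart of the argument is producing a small bad witness from any component $C$ of $G - X$ with $C \notin \GG_1 \cup \GG_2$. I would pick induced $H_1 \in \FF_1$ and $H_2 \in \FF_2$ inside $C$ together with a shortest path $P$ in $C$ from $V(H_1)$ to $V(H_2)$. If $|V(P)| \le \lambda$, then $W := V(H_1) \cup V(H_2) \cup V(P)$ is connected, has size at most $\alpha + \beta + \lambda$, and contains $H_1$ and $H_2$ as induced subgraphs. If $|V(P)| > \lambda$, the induced path $P$ contains on its last $\lambda$ vertices an induced $P_\lambda \in \FF_1$ whose endpoint lies in $V(H_2)$; taking $W$ to be this subpath together with $V(H_2)$ gives a connected set of size at most $\lambda + \beta$ whose induced subgraph contains $P_\lambda$ and $H_2$. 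The invariant ``some $X$ along the branching satisfies $X \subseteq S^\ast$'' is preserved, because any bad $W$ in $V(G) \setminus X$ must intersect $S^\ast$: otherwise $W$ would sit inside a single component of $G - S^\ast$ which, by heredity and membership in $\GG_1 \cup \GG_2$, cannot contain both $H_1$ and $H_2$ as induced subgraphs.

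Finally, for each $X \in \mathcal{X}$ I would invoke Theorem~\ref{thm:main-result} on $(G, X, k, p)$ to produce an extension $S \supseteq X$ of size at most $k$ with $G[S]$ being $p$-edge-connected. Because components of $G - S$ are induced subgraphs of components of $G - X$, the hereditary closure of $\GG_1$ and $\GG_2$ guarantees that \emph{any} such $S$ is automatically a valid $(\GG_1, \GG_2)$-deletion set, so no specific extension needs to be targeted. Using the degeneracy bound $\eta + k$, each invocation runs in $2^{\OO(pk + \eta + k)} n^{\OO(1)} = 2^{\OO(pk + \eta)} n^{\OO(1)}$ time, and summing over $\mathcal{X}$ gives the claimed $2^{\OO(pk + \eta)} (\alpha + \beta + \lambda)^k n^{\OO(\alpha + \beta + \lambda)}$. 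The main obstacle I anticipate is the tight size bound on a bad witness when $\FF_1$ or $\FF_2$ contains disconnected graphs; handling this needs careful iterated shortest-path extraction inside $C$, leveraging the $P_\lambda$-freeness of $\GG_1$ from condition~(iii) to cap the length of each connecting segment.
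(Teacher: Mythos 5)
Your proposal is correct and takes essentially the same route as the paper: the paper likewise first produces, by a depth-$k$ branching with branching factor $\alpha+\beta+\lambda$ (Lemma~\ref{lemma:enum-min-gonetwo-del}, which invokes the closest-pair Proposition~\ref{prop:scattered-property-with-path} of \cite{JacobMR23} for the short-connector bound that you instead re-derive inline from $P_{\lambda}\in\FF_1$), a family of candidate sets $X$ such that every component of $G-X$ lies in $\GG_1\cup\GG_2$, and then calls Theorem~\ref{thm:main-result} on each $(G,X,k,p)$, with heredity of $\GG_1,\GG_2$ making any returned extension automatically valid and the $(\eta+k)$-degeneracy bound giving the stated $2^{\OO(pk+\eta)}(\alpha+\beta+\lambda)^k n^{\OO(\alpha+\beta+\lambda)}$ running time. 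The disconnected-obstruction caveat you flag at the end is implicitly present in the paper's argument as well (its claim that every minimal deletion set hits $J_1^*\cup J_2^*\cup P$ rests on the same single-component reasoning), so your treatment is no less complete than the paper's.
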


It remains to be noted that in the absence of these special conditions (i)-(iv) described above, we cannot hope to enumerate all minimal vertex subsets deleting which results every connected component either in $\GG_1$ or in $\GG_2$.
The other cases that are not captured by these conditions are left as open problems for future research directions.
%\todo[inline]{DM: is this sufficient enough to write?}

\medskip

\mypara{Organization of our paper:}
We organize the paper as follows.
Initially in Section \ref{sec:prelims}, we introduce the basic notations related to graph theory, parameterized complexity, matroids and representative families.
In addition, we also explain in Section \ref{sec:prelims} how we can test in polynomial-time whether a feasible solution to {\ExtPCon} exists or not.
Then, in Section \ref{sec:p-edge-steiner}, we prove our main result, i.e. Theorem \ref{thm:main-result}.
Then, in Section \ref{sec:applications-p-edge-steiner}, we illustrate the applications of our main result to design singly exponential-time algorithms for {\pbdds}, {\ppwoneds}, {\ptdds}, {\ppvc} and {\pgonegtwodel} under the special conditions.
Finally, in Section \ref{sec:conclusions}, we conclude with some possible future directions of research.

\medskip

\mypara{Related Work:}
Heggernes et al. \cite{HeggernesHMMV15} studied the parameterized comlexity of {\sc $p$-Connected Steiner Subgraph} that is the vertex-connectivity counterpart of our problem.
The authors in their paper have proved that when parameterized by $k$, the above mentioned problem is FPT for $p = 2$ and W[1]-hard when $p = 3$.
Nutov \cite{NutovConnAugment22} has studied a variant of {\sc $p$-Connected Steiner Subgraph} problem in which they have studied {\sc Vertex Connectivity Augmentation} problem.
Given an undirected graph $G$, a $p$-connected subgraph $G[S]$, the {\sc Vertex Connectivity Augmentation} problem asks if at most $k$ additional edges can be added to $G[S]$ to make the subgraph $(p+1)$-connected.
In particular, Nutov \cite{NutovConnAugment22} provided a parameterized algorithm for the above mentioned problem.
Feldmann et al. \cite{Feldmann0L22} have studied parameterized complexity of {\sc Vertex/Edge-Connected Survivable Network Design Problem} where given fixed constant $p$, they want to compute a subgraph that has minimum number of edges and provides $p$-vertex/edge-connectivity between every pair of vertices in the terminals.
Komusiewicz \cite{Komusiewicz16} has conducted a survey on the FPT algorithms to compute cohesive subnetworks but without steiner variants.
This $p$-vertex-connectivity criteria also comes under ``cohesive networks''.

\section{Preliminaries}
\label{sec:prelims}

\subsection{Sets, Numbers and Graph Theory}
\label{subsec:sets-and-graphs}

We use ${\nn}$ to denote the set of all natural numbers.
For $r \in \nn$, we use $[r]$ to denote the set $\{1,\ldots,r\}$.
Given a set $S$ and an integer $k$, we use ${{S}\choose{\leq k}}$ and ${{S}\choose{k}}$ to denote the collection of all subsets of $S$ of size at most $k$ and of size exactly $k$ respectively.
We use standard graph theoretic notations from Diestel's book \cite{Diestel-Book} for all notations of undirected and directed graphs.
For undirected graphs, we use $uv \in E(G)$ to denote that there is an edge between $u$ and $v$.
On the other hand for directed graphs, we are more explicit.
We use $(u, v)$ to represent that the {\em arc} is directed from $u$ to $v$.
For the directed graphs, the directed edges are also called arcs.
We use the term arc and edge interchangeably.
In an undirected graph $G$, we use $deg_G(v)$ to denote the degree of $v$ in $G$. When the graph is clear from the context, we drop this subscript and simply use $deg(v)$.
An undirected graph $G$ is called a {\em degree-$\eta$-graph} if every vertex of $G$ has degree at most $\eta$.
We use $\Delta(G)$ to denote the $\max\{deg_G(v): v \in V(G)\}$, i.e. the maximum degree of any vertex in $G$.
It is clear from the definition that if a graph $G$ is a degree-$\eta$-graph then $\Delta(G) \leq \eta$.
When we consider directed graphs, we have {\em in-degree} and {\em out-degree} for all the vertices.
For a vertex $v$, the {\em in-degree} of $v$ is the number of arcs of the form $(u, v)$ and the {\em out-degree} of $v$ is the number of arcs of the form $(v, u)$.
If $D = (V, A)$ is a directed graph and $F \subseteq A$, then we use $V(F)$ to denote the {\em set of endpoints} of the arcs in $F$. 
{A connected undirected graph $G = (V, E)$ is {\em $p$-edge-connected} if it has at least two vertices and it remains connected after deleting at most $p-1$ edges.}
Due to the Menger's Theorem, a connected graph is said to be $p$-edge-connected if and only if there are $p$ edge-disjoint paths between every pair of vertices.
Given an undirected graph, a set $S \subseteq V(G)$ is said to be a {\em $p$-segment} of $G$ if for every $u, v \in S$, there are $p$ edge-disjoint paths from $u$ to $v$ in $G$.
Note that for a $p$-segment $S$ these $p$ edge-disjoint paths between $u$ and $v$ can use vertices outside $S$.
A $p$-segment is {\em maximal} if it is inclusion-wise maximal.
An undirected graph is said to be an {\em $\eta$-degenerate graph} if every subgraph has a vertex of degree at most $\eta$.
Given an undirected $\eta$-degenerate graph $G = (V, E)$, a sequence of vertices $\rho_G = (v_1,\ldots,v_n)$ is said to be an {\em $\eta$-degeneracy sequence} if for every $2 \leq i \leq n$, $v_i$ has at most $\eta$ neighbors from the vertices $\{v_1,\ldots,v_{i-1}\}$.
For an $\ell \in \nn$, we use $P_{\ell}$ to denote a path containing $\ell$ vertices and $C_{\ell}$ to denote a cycle containing $\ell$ vertices.
%A graph is said to be a {\em degree-$\eta$-graph} if every vertex has degree at most $\eta$.
It follows from the definition that every degree-$\eta$-graph is an $\eta$-degenerate graph, but the converse does not hold true.
An undirected graph is said to be a {\em caterpillar graph} if every connected component is an induced path with pendant vertices ({\em hairs}) attached to the vertices.
Given two vertices $u, v \in V(G)$, we use ${\rm dist}_G(u, v)$ to denote the {\em distance}, i.e. the number of edges in a shortest path between $u$ and $v$ in $G$.
For two vertex subsets $X, Y \subseteq V(G)$, we define ${\rm dist}_G(X, Y) = \min\limits_{u \in X, v \in Y} \{{\rm dist}_G(u, v)\}$.
Given a directed graph $D = (V, A)$, an {\em out-branching} of $D$ rooted at $v \in V(D)$ is a set of arcs $A' \subseteq A$ such that $v$ has in-degree 0 and every other vertex has in-degree exactly one in $D' = (V, A')$.
Given a class of graphs $\GG$, we say that $\GG$ is {\em polynomial-time recognizable}, if given a graph $G$, there is a polynomial-time algorithm that can correctly check if $G \in \GG$.
A graph class is said to be {\em hereditary} if it is closed under induced subgraphs.

\medskip

We define the following two graph parameters treedepth and pathwidth that we use in our paper.

\begin{definition}(Treedepth)
\label{defn:treedepth}
Given an undirected graph $G = (V, E)$, ${\td}(G)$, i.e. the {\em treedepth} of $G$ is defined as follows.
If $|V(G)| = 1$, then ${\td}(G) = 1$.
If $G$ is connected, then ${\td}(G) = 1 + \min_{u \in V(G)} {\td}(G - \{u\})$.
Finally, if $G_1,\ldots,G_r$ are the connected components of $G$, then ${\td}(G) = \max_{i=1}^r {\td}(G_i)$.
\end{definition}

Informally, a {\em treedepth decomposition} of an undirected graph $G$ can be considered as a rooted forest $Y$ with vertex set $V$ such that for each $uv \in E(G)$, either $u$ is an ancestor of $v$ or $v$ is an ancestor of $u$ in $Y$.
The context of treedepth is also sometimes referred to as {\em elimination tree} of $G$.
It follows from the (recursive) definition above that treedepth of a graph is referred to as the minimum depth of a treedepth decomposition of $G$, where depth is defined as the maximum number of vertices in a root to leaf path.

\begin{definition}[Path Decomposition]
\label{defn:path-decmposition}
A {\em path decomposition} of an undirected graph $G = (V, E)$ is a sequence $(X_1,\ldots,X_r)$ of bags $X_i \subseteq V(G)$ such that
\begin{enumerate}[(i)]
	\item every vertex belongs to at least one bag,
	\item for every edge $uv \in E(G)$, there is $X_i$ such that $u, v \in X_i$, and
	\item for every vertex $v$, the bags containing $v$ forms a contiguous subsequence, i.e. $(X_i,X_{i+1},\ldots,X_j)$. 
\end{enumerate}
The {\em width} of a path decomposition is $\max_{i \in [r]} |X_i| - 1$.
\end{definition}

The {\em pathwidth} of a graph is defined as the smallest number $\eta$ such that there exists a path decomposition of width $\eta$.
Informally, pathwidth is a measure how much a graph is close to a path (or a linear forest).
We use ${\sf pw}(G)$ to denote the pathwidth of $G$.
Another related measure that is widely used {\em treewidth} is how much a graph is close to a tree is defined as follows.

\begin{definition}[Tree Decomposition]
A {\em tree decomposition} of an undirected graph is a pair $(T, {\XX} = \{X_t \mid t \in V(T)\})$ where $T$ is a tree and $\XX$ is a collection of subsets of $V(G)$, called {\em bags} such that
\begin{enumerate}[(i)]
	\item for every edge $uv \in E(G)$, there is a bag $X_t \in \XX$ such that $u, v \in X_t$,
	\item for every vertex $u \in V(G)$, there is a bag $X_t \in \XX$ such that $u \in X_t$, and
	\item for every $u \in V(G)$, the subset of bags $\{t \in T \mid u \in X_t\}$ forms a connected subtree of $T$. 
\end{enumerate}
The {\em width} of a tree decomposition $(T, {\XX})$ is defined as $\max\{|X_t| - 1 : t \in V(T)\}$.
\end{definition}

The {\em treewidth} of a graph $G$ is the smallest number $\eta$ such that there is a tree decomposition of $G$ having width $\eta$.
We use ${\tw}(G)$ to denote the treewidth of $G$.
%
%\todo[inline]{DM: other graph parameters cutwidth, branchwidth, bandwidth have to be introduced here.}
Additionally, we define the following width parameter which is the following.

\begin{definition}[Cutwidth]
\label{defn:cutwidth}
Let $G$ be an undirected graph having $n$ vertices and $\sigma = (v_1,\ldots,v_n)$ be a layout of $V(G)$.
Given $\sigma$ and $i \in \{1,\ldots,n-1\}$, the {\em cut at position $i$} is defined as $\sigma_G(i)$ as the set of edges that have one endpoint in $\{v_1,\ldots,v_i\}$ and the other endpoint in $\{v_{i+1},\ldots,v_n\}$.
The {\em width} of $\sigma$ is denoted by ${\wb}(\sigma) = \max_{i \in [n-1]} \{|\sigma_G(i)|\}$.
The {\em cutwidth} of $G$ is ${\cw}(G) = \min\{{\wb}(\sigma) \mid \sigma$ is a layout of $V(G)\}$.
\end{definition}

For any graph $G$, it follows from the graph theoretic properties that ${\tw}(G) \leq {\cw}(G)$ \cite{ThilikosSB05,BodlaenderFT09,GiannopoulouJLS17}.

\subsection{Parameterized Complexity and W-hardness} 
\label{subsec:FPT-definitions}

A parameterized problem $L$ is a subset of $\Sigma^* \times \nn$ for some finite alphabet $\Sigma$. An instance of a parameterized problem is a pair $(x, k)$ where $x \in \Sigma^*$ is the {\em input} and $k$ is the {\em parameter}.

\begin{definition}[Fixed-Parameter Tractability]
\label{defn:FPT}
A parameterized problem $L \subseteq \Sigma^* \times \nn$ is said to be {\em fixed-parameter tractable} if there exists an algorithm $\AAA$ that given $(x, k) \in \Sigma^* \times \nn$, the algorithm $\AAA$ runs in $f(k)|x|^c$-time for some constant $c$ independent of $n$ and $k$ and correctly decides $L$.
\end{definition}

The algorithm $\AAA$ that runs in $f(k)|x|^{\OO(1)}$-time is called a {\em fixed-parameter algorithm} (or FPT algorithm).
A fixed-parameter algorithm is said to be a {\em singly exponential} FPT algorithm if it runs in $c^k|x|^{\OO(1)}$-time for some fixed constant $c$ independent of $|x|$ and $k$.
There is a hardness theory in parameterized complexity that is associated with the notion of {\em parameterized reduction} and the hierarchy of parameterized complexity classes.
Broadly, the {\em W-hierarchy} (of parameterized complexity classes) is denoted by ${\rm FPT \subseteq W[1] \subseteq W[2] \subseteq \ldots \subseteq XP}$.
Given two distinct parameterized problems $L_1$ and $L_2$, there is a {\em parameterized reduction} from $L_1$ to $L_2$ if given an instance $(x, k)$ of $L_1$, an algorithm $\AAA$ runs in $g(k)|x|^{\OO(1)}$-time and outputs an equivalent instance $(x', k')$ of $L_2$ such that $k' = f(k)$ for some function depending only on $k$.
For more details on parameterized complexity and its associated hardness theory, we refer to the books \cite{CyganFKLMPPS15,Niedermeier06book,DowneyF13Book}.
%{\sc Clique} and {\sc Dominating Set} problems parameterized by solution size are W[1]-hard and W-[2]-hard.

\subsection{Matroids and Representative Families} 
\label{subsec:matroid-basics}

We use the following definitions and results related to matroid theory to design our algorithms.

\begin{definition}
\label{defn:matroid}
Given a universe $U$ and a subfamily $\cI \subseteq 2^U$,
a set system $\cM = (U, \cI)$ is said to be a {\em matroid} if 
\begin{enumerate}[(i)]
	\item $\emptyset \in \cI$,
	\item if $A \in \cI$, then for all $A' \subseteq A$, $A' \in \cI$ (hereditary property), and
	\item if there exists $A, B \in \cI$ such that $|B| > |A|$, then there is $x \in B \setminus A$ such that $A \cup \{x\} \in \cI$ (exchange property).
\end{enumerate}
The set $U$ is called the {\em ground set} of $\cM$ and a set $A \in \cI$ is called the {\em independent set} of matroid $\cM$.
\end{definition} 

It follows from the definition that all maximal independent sets of a matroid $\cM$ are of the same size.
A maximal independent set of $\cM$ is called a {\em basis}.
Let $U$ be a universe with $n$ elements and $\cI = {{U}\choose{\leq r}}$.
The set system $(U, \cI)$ is called a {\em uniform matroid}.
Let $G = (V, E)$ be an undirected graph and $\cI = \{F \subseteq E(G) \mid G' = (V, F)$ is a forest $\}$.
The set system $(E(G), \cI)$ is called a {\em graphic matroid}.

A matroid ${\cM} = (U, \cI)$ is {\em representable over a field} $\bbF$ if there is a matrix $A$ over $\bbF$ and a bijection $f: U \rightarrow {\rm Col}(A)$ where ${\rm Col}(A)$ is the set of columns of $A$ such that $B \subseteq U$ is an independent set of ${\cM}$ if and only if the set of columns $\{f(b) \mid b \in B\}$ are linearly independent in the matrix $A$.
A matroid representable over a field $\bbF$ is called a {\em linear matroid}.
%Furthermore, these representations can be constructed by a deterministic\todo{E: the construction is not deterministic for all representable matroids (i.e., gammoid), as far as I know} algorithm in polynomial-time.

Given two matroids ${\cM}_1 = (U_1, \cI_1)$ and ${\cM}_2 = (U_2, \cI_2)$, the {\em direct sum} ${\cM} = {\cM}_1 \oplus {\cM}_2$ is the matroid $(U_1 \uplus U_2, \cI)$ such that $I \in \cI$ if and only if $I \cap U_1 \in \cI_1$ and $I \cap U_2 \in \cI_2$.
If ${\cM}_1$ and ${\cM}_2$ are represented by matrices $A_1$ and $A_2$ respectively then $\cM = {\cM}_1 \oplus {\cM}_2$ also admits a matrix representation and can be computed in polynomial-time.
A {\em partition matroid} is a matroid that is formed from directed sum of uniform matroids.

Given a matroid $M$, a {\em truncation} of $M$ to rank $r$ is the matroid $M' = (U, \cI')$ where a set $A \subseteq U$ is independent in $M'$ if and only if $A \in \cI$ and $|A| \leq r$.
Given a matroid $M$ with its representation (in matrix-form), the truncation of $M$ can be computed in polynomial-time.
Let $M = (U, \cI)$ be a matroid and $X, Y \subseteq U$. We say that {\em $X$ extends $Y$} in $M$ if $X \cap Y = \emptyset$ and $X \cup Y \in \cI$.
Moreover, let $\SSS \subseteq 2^U$ be a family.
A subfamily $\widehat{\SSS} \subseteq \SSS$ is a {\em $q$-representative} of $\SSS$ if the following holds: for every set $Y \subseteq U$ with $|Y| \leq q$, there is a set $X \in \SSS$ such that $X$ extends $Y$ if and only if there is a set $\widehat{X} \in \widehat{\SSS}$ such that $\widehat{X}$ extends $Y$.
We use $\widehat{\SSS} \subseteq^q_{\rm rep} \SSS$ to denote that $\widehat{\SSS}$ is a $q$-representative family of $\SSS$.
The following result holds true due to  \cite{FominLPS16,LokshtanovMPS18}.

\begin{proposition}
\label{prop:rep-set-compuatation}
Let $M = (U, \cI)$ be a linear matroid of rank $n$ and $p, q \leq n$ over a field $\bbF$ and let $\SSS = \{S_1,\ldots,S_t\} \subseteq \cI$ each having cardinality $p$.
Then, there exists an algorithm that computes a $q$-representative subfamily $\widehat{\SSS} \subseteq^q_{\rm rep} \SSS$ consisting of at most ${{p+q}\choose{q}}$ sets using $\OO({{p+q}\choose{p}}^2 tp^3 n^2 + t{{p+q}\choose{q}}^{\omega}np) + (n + |U|)^{\OO(1)}$ field operations over $\bbF$.
Here $\omega < 2.37$ is the matrix multiplication exponent.
\end{proposition}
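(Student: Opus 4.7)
The plan is to adapt the Fomin--Lokshtanov--Panolan--Saurabh approach, which sharpens Marx's original representative-sets algorithm via an exterior-algebra encoding. If $M$ is represented by an $n \times |U|$ matrix $A$ over $\bbF$, associate to each $S \in \SSS$ the vector $w_S \in \bbF^{\binom{n}{p}}$ whose coordinates are the $p \times p$ minors of the submatrix of $A$ indexed by $S$; this is the wedge product of the $p$ columns picked out by $S$. The crucial identity is that for any $Y \subseteq U$ with $|Y| = q$, the set $X \cup Y$ is independent in $M$ if and only if a specific bilinear pairing between $w_X$ and the analogous $q$-wedge vector $w_Y$ is nonzero.

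First, I would reduce the construction of $\hat{\SSS}$ to computing a subfamily whose vectors $\{w_S : S \in \hat{\SSS}\}$ have the same linear span as $\{w_S : S \in \SSS\}$: by the bilinear-pairing identity, every $Y$ that is extended by some set in $\SSS$ is then also extended by some set in $\hat{\SSS}$. To get the size bound $\binom{p+q}{q}$, I would first truncate $M$ to rank $p+q$, which suffices because the relevant witnesses $X \cup Y$ have size at most $p+q$. Computing this truncation deterministically in polynomial time is itself a separate nontrivial result; once it is performed, the ambient dimension of the wedge vectors drops to $\binom{p+q}{p}$, and hence so does the maximum size of a spanning subfamily.

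Second, I would process $S_1, \ldots, S_t$ sequentially, maintaining a current $\hat{\SSS}_i$ together with a basis of $\mbox{span}\{w_S : S \in \hat{\SSS}_i\}$. The new set $S_{i+1}$ is added to $\hat{\SSS}_i$ precisely when $w_{S_{i+1}}$ lies outside the current span, a property tested by Gaussian elimination. Implemented straightforwardly this yields a term of order $t \cdot \binom{p+q}{p}^3$; the tighter bound with matrix-multiplication exponent $\omega$ arises by batching many rank tests together and invoking fast matrix multiplication, as in the cited work.

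The main obstacle is the deterministic polynomial-time truncation of the linear matroid to rank $p+q$: without it, only a Schwartz--Zippel randomized construction is available, which would yield a Monte~Carlo rather than deterministic algorithm. The second delicate point is the speedup to the $\omega$ exponent, which requires carefully grouping rank tests rather than processing each $S_i$ in isolation; this is where most of the bookkeeping of Fomin et al.\ is concentrated, and where any naive implementation would blow the promised running time.
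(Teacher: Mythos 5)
The paper gives no proof of this proposition at all: it is imported verbatim from the cited works of Fomin et al.\ and Lokshtanov et al.\ (the latter supplying the deterministic truncation), and your sketch---wedge-vector encoding, the bilinear pairing $w_X \wedge w_Y \neq 0$ characterizing that $X$ extends $Y$ after truncating to rank $p+q$, and keeping a spanning subfamily of the wedge vectors via (batched, fast-matrix-multiplication) rank computations---is a faithful outline of exactly that argument. So your proposal is correct and takes essentially the same approach as the source the paper relies on.
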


Let $G = (V, E)$ be an undirected graph and $D_G = (V, A_E)$ is defined as follows.
For every $uv \in E(G)$, we add $(u, v)$ and $(v, u)$ into $A_E$ and fix $v_r \in V$.
Since the definition of $D_G$ is based on $G = (V, E)$, we call the pair $(D_G, v_r)$ an {\em equivalent digraph of $G$ with root $v_r$}.
Then, an {\em out-partition matroid with root $v_r$} for $D_G$ is the partition matroid with ground set $A_E$ where arcs are partitioned according to their heads and arcs $(u, v_r)$ are dependent.
Equivalently, what it means is that a set of arcs $F \subseteq A_E$ is {\em independent} in the out-partition matroid with root $v_r$ if and only if $v_r$ has in-degree 0 in $F$ and every other vertex has in-degree at most 1 in $F$.
The {\em graphic matroid on ground set $A_E$} is the graphic matroid for $G$ in which every arc $(u, v)$ represents its underlying undirected edge $uv$ and the antiparallel arcs $(u, v), (v, u)$ represent distinct copies of $uv$.
Then, $\{(u, v), (v, u)\}$ becomes a dependent set.
The following two propositions are proved by Agrawal et al. \cite{AgrawalMPS22} and Einarson et al. \cite{EinarsonGJMW23} respectively.

\begin{proposition}[Agrawal et al. \cite{AgrawalMPS22})]
\label{prop:high-edge-connectivity}
Let $G = (V, E)$ be an undirected graph, $v_r \in V$ and $D_G = (V, A_E)$ as defined above. Then, $G$ is $p$-edge-connected if and only if $D_G$ has $p$ pairwise arc-disjoint out-branchings rooted at $v_r$.
\end{proposition}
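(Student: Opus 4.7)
The plan is to reduce both directions to Edmonds' classical arborescence packing theorem, which states that for any directed graph $D$ with designated root $v_r$, the maximum number of pairwise arc-disjoint out-branchings rooted at $v_r$ equals $\min\{d^-_D(S) : \emptyset \neq S \subseteq V(D) \setminus \{v_r\}\}$, where $d^-_D(S)$ denotes the number of arcs of $D$ entering $S$. The bridge between $G$ and $D_G$ is a one-line counting observation: since every undirected edge $uv \in E(G)$ contributes exactly the two antiparallel arcs $(u,v)$ and $(v,u)$ to $D_G$, for any proper non-empty $S \subsetneq V$ precisely one of these two arcs enters $S$ whenever $uv$ crosses the cut $(S, V \setminus S)$, and neither enters $S$ otherwise. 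Hence $d^-_{D_G}(S) = |\delta_G(S)|$, the size of the undirected edge-cut in $G$.

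For the forward direction, I would assume $G$ is $p$-edge-connected, which by the cut characterization of edge-connectivity means $|\delta_G(S)| \geq p$ for every proper non-empty $S \subsetneq V$. Restricting to sets $\emptyset \neq S \subseteq V \setminus \{v_r\}$, the bridge identity yields $d^-_{D_G}(S) \geq p$, and Edmonds' theorem immediately produces $p$ pairwise arc-disjoint out-branchings of $D_G$ rooted at $v_r$. For the reverse direction, I would assume $D_G$ has $p$ pairwise arc-disjoint out-branchings rooted at $v_r$; since each such out-branching must contribute at least one arc entering every non-empty $S \subseteq V \setminus \{v_r\}$ (otherwise the vertices of $S$ could not be reached from $v_r$), we obtain $d^-_{D_G}(S) \geq p$ for every such $S$. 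To upgrade this to the two-sided cut condition needed for $p$-edge-connectivity of $G$, I take any proper non-empty $S \subsetneq V$: if $v_r \notin S$, then $|\delta_G(S)| = d^-_{D_G}(S) \geq p$ directly; if $v_r \in S$, I apply the bound to $S' := V \setminus S$, which is non-empty and avoids $v_r$, and use the symmetry $|\delta_G(S)| = |\delta_G(S')| = d^-_{D_G}(S') \geq p$.

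The main obstacle is essentially non-existent once Edmonds' branching theorem is invoked: the proof reduces to the cut-preservation identity $d^-_{D_G}(S) = |\delta_G(S)|$, which falls out of the bidirected construction, together with the mild handling of cuts in $G$ that contain $v_r$ via complementation. In effect, the proposition transports Edmonds' theorem through a natural bijection between undirected edge-cuts of $G$ and in-arc-cuts of $D_G$ relative to the root $v_r$.
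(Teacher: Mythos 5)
Your proposal is correct: the cut-correspondence $d^-_{D_G}(S)=|\delta_G(S)|$ for the bidirected graph, combined with Edmonds' arc-disjoint branching theorem (and complementation to handle cuts containing $v_r$), is exactly the standard derivation of this equivalence. The paper itself states this proposition as a cited result of Agrawal et al.\ without reproving it, and your argument matches the intended proof, so there is nothing further to reconcile.
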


%The following proposition was proved by Einarson et al. \cite{EinarsonGJMW23}.

\begin{proposition}[Einarson et al. \cite{EinarsonGJMW23}]
\label{prop:out-partition-matroid-connection}
Let $G = (V, E)$ be an undirected graph, $v_r \in V$ and $D_G = (V, A_E)$ as defined above.
Then, $F$ is the arc set of an out-branching rooted at $v_r$ if and only if $|F| = |V(G)| - 1$ and $F$ is independent both in the out-partition matroid for $D_G$ with root $v_r$ and the graphic matroid for $G$ with ground set $A_E$.
\end{proposition}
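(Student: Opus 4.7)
The plan is to verify both directions of the biconditional by matching the three hypothesized conditions---(a) $|F|=|V|-1$, (b) $F$ independent in the out-partition matroid with root $v_r$, (c) $F$ independent in the graphic matroid on $A_E$---against the definitional properties of an out-branching rooted at $v_r$: namely, that $v_r$ has in-degree $0$ in $F$, every other vertex has in-degree exactly $1$, and the underlying undirected graph is a spanning tree of $G$.

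For the forward direction, suppose $F$ is the arc set of an out-branching rooted at $v_r$. Summing in-degrees immediately yields $|F|=|V|-1$, and the in-degree bounds are precisely the requirements for independence in the out-partition matroid, so (a) and (b) hold. The underlying undirected graph of $F$ is a spanning tree, hence acyclic and free of parallel edges; this means $F$ contains no antiparallel pair $\{(u,v),(v,u)\}$ (which forms a parallel class and hence a circuit of the graphic matroid on $A_E$) and no undirected cycle, giving (c).

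For the reverse direction, I would unpack the three conditions as follows. From (b), $v_r$ has in-degree $0$ in $F$ and every other vertex has in-degree at most $1$. A degree-sum count then shows that the total in-degree equals $|F|=|V|-1$; since only the $|V|-1$ non-root vertices contribute and each contributes at most $1$, each must contribute exactly $1$. From (c), there is no antiparallel pair and no cycle among the underlying undirected edges, so the $|V|-1$ arcs of $F$ project onto $|V|-1$ distinct undirected edges forming an acyclic spanning subgraph on $|V|$ vertices---that is, a spanning tree of $G$. To conclude that $F$ is an out-branching it remains to verify reachability from $v_r$: for any $v\neq v_r$, take the unique tree path $v_r=x_0,x_1,\ldots,x_\ell=v$ and induct on $i$. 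The edge $\{x_0,x_1\}$ must be oriented as $(x_0,x_1)$ because $v_r$ has in-degree $0$; once the unique in-arc at $x_i$ has been identified as $(x_{i-1},x_i)$, the next edge cannot also be oriented into $x_i$ (else $x_i$ would have in-degree $2$), so it must be $(x_i,x_{i+1})$. This induction delivers a directed $v_r$-to-$v$ path inside $F$.

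The argument is largely bookkeeping; the one mild subtlety---where a careless proof could go wrong---is the interplay between the two matroids in step (c), specifically the need to invoke the fact that $\{(u,v),(v,u)\}$ is a parallel class of the graphic matroid on $A_E$. This is precisely what ensures that $|V|-1$ arcs independent in the graphic matroid translate into $|V|-1$ distinct underlying undirected edges, and hence a full spanning tree, rather than a forest whose effective edge count is reduced by some doubled arcs.
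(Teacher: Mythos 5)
Your proof is correct, but note that the paper itself offers no proof to compare against: Proposition~\ref{prop:out-partition-matroid-connection} is imported verbatim from Einarson et al.~\cite{EinarsonGJMW23}, so the only check available is against the statement and the surrounding definitions. Your argument is the standard one and it goes through: the forward direction is immediate bookkeeping, and in the reverse direction the degree-sum count (forcing in-degree exactly one at every non-root vertex), the translation of graphic-matroid independence into ``$|V|-1$ distinct underlying edges forming a spanning tree'' (using that an antiparallel pair is a parallel class, hence a circuit), and the orientation-propagation induction along tree paths together establish reachability from $v_r$. You correctly isolate the one genuine subtlety, namely that the graphic matroid on $A_E$ is what rules out both antiparallel pairs and undirected cycles.

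One remark worth making explicit: the preliminaries of this paper define an outbranching only by the in-degree conditions ($v_r$ has in-degree $0$, every other vertex in-degree exactly $1$), with no acyclicity or reachability requirement. Under that literal definition the forward implication would be false --- e.g.\ with $V=\{v_r,a,b\}$ and $F=\{(a,b),(b,a)\}$ the in-degree conditions hold and $|F|=|V|-1$, yet $F$ is dependent in the graphic matroid. Your proof implicitly (and correctly) uses the standard notion of out-branching, i.e.\ a spanning arborescence in which every vertex is reachable from $v_r$, which is the notion intended in \cite{EinarsonGJMW23} and the one under which the equivalence holds; it would be worth stating that reading explicitly at the outset rather than folding it into the phrase ``the underlying undirected graph is a spanning tree.''
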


We refer to Oxley \cite{OxleyMatroidBook} for more details on matroid theory and a survey by Panolan and Saurabh \cite{PanolanS16} for more information on use of matroids in FPT algorithms.

\subsection{Existence of a Highly Connected Steiner Subgraph}
\label{subsec:feasibility-test-1}

Let $G$ be an undirected graph, $X \subseteq V(G)$ and $p$ be an integer.
It is not immediate whether in polynomial-time we can check whether there exists a {\em feasible} solution, that is, a  set $S \supseteq X$ such that $G[S]$ is $p$-edge-connected.
%The number of vertices in $S$ does not have to be upper bounded by $k$.
The following lemma illustrates that the above can be achieved in polynomial-time.
%In fact, in this case the input graph does not have to be a bounded degenerate graph.

%\todo{DM: please check if the text here clearly explains what is meant by ''feasibility testing''.}

\begin{lemma}
\label{lemma:feasibility-check}
Let $G = (V, E)$ be a simple connected undirected graph and $X \subseteq V(G)$.
Then, we can check in polynomial-time if there exists a set $S \supseteq X$ such that $G[S]$ is a $p$-edge-connected subgraph.
\end{lemma}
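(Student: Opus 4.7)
The plan is to give an iterative peeling algorithm driven by minimum-cut computations. The key structural observation is: if $(A, B)$ is any edge cut of $G$ with fewer than $p$ crossing edges and $X$ meets both sides, then no valid $S$ exists. Indeed, any $S \supseteq X$ would meet both $A$ and $B$, and the induced cut $(A \cap S, B \cap S)$ in $G[S]$ would use only a subset of the original crossing edges, producing a cut of size less than $p$ in $G[S]$.

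Concretely, I would maintain a shrinking induced subgraph $H$ of $G$, initially $H = G$, together with the invariant that every valid $S$ lies inside $V(H)$. Each iteration runs a global minimum-cut computation on $H$. If the minimum cut of $H$ has size at least $p$ (and $|V(H)| \geq 2$), then $H$ is $p$-edge-connected and I output YES with $S = V(H)$. Otherwise the computation returns a cut $(A, B)$ of $H$ of size less than $p$; if $X$ intersects both sides I output NO, and otherwise I replace $H$ by $H[A]$, where $A$ is the side containing $X$, and iterate.

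Correctness has two parts. First, invariant maintenance: when $H$ shrinks to $H[A]$, any valid $S$ lying in the old $V(H)$ must in fact lie in $A$, because a vertex of $S$ in $V(H) \setminus A$ together with $X \subseteq A$ would produce a cut of $G[S] = H[S]$ of size less than $p$. Second, on termination: the YES answer is correct since $G[V(H)] = H$ is $p$-edge-connected and $X \subseteq V(H)$; the NO answer is correct since the invariant forces any valid $S$ into $V(H)$, but then the sparse cut in $H$ splits $X$ and rules out such $S$ by the initial observation. Termination in at most $|V(G)|$ iterations is clear because $|V(H)|$ strictly decreases whenever we recurse.

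The only delicate point is the identification $G[S] = H[S]$ whenever $S \subseteq V(H)$ and $H$ is an induced subgraph of $G$; this is what lets us transfer cut sizes in $H$ directly into cut sizes in $G[S]$. Given this, each iteration runs in polynomial time (one min-cut call, e.g.\ via Stoer--Wagner or max-flow), so the total running time is polynomial. I do not foresee a conceptual obstacle; the main care is in writing the peeling invariant cleanly and handling the degenerate base cases where $|V(H)| < 2$.
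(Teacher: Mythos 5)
Your argument is correct and rests on the same key observation as the paper's proof -- namely that a solution $S\supseteq X$ can never cross an edge cut of size less than $p$, so one may safely shrink the graph toward $X$ -- but you implement it differently. The paper computes, in one shot, the partition of $V(G)$ into $p$-segments (via Gomory--Hu), checks that $X$ lies in a single segment $S_i$, and recurses on $G[S_i]$; you instead peel one global minimum cut at a time, discarding the side not containing $X$, which is a more elementary routine (one Stoer--Wagner or max-flow call per iteration, at most $n$ iterations) and avoids invoking the full $p$-segment machinery. Your invariant maintenance is sound: since $H$ is an induced subgraph and $S\subseteq V(H)$, $G[S]=H[S]$, so a solution vertex on the discarded side together with $X$ on the kept side would yield a cut of $G[S]$ of size less than $p$; and the YES/NO answers on termination are justified exactly as you say (note that when the minimum cut of a connected $H$ has value less than $p$ but at least $1$, both sides induce connected subgraphs, so connectivity of $H$ is preserved along the peeling).

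The one point you should patch is the case $X=\emptyset$, which the lemma statement formally allows: then ``the side containing $X$'' is ill-defined, and committing to an arbitrary side can discard the only induced $p$-edge-connected subgraph (e.g.\ a clique hanging off a long path), producing a false NO. The fix is routine -- either branch into both sides (the recursion tree still has $O(n)$ nodes since the sides are disjoint), or guess one vertex $v$ and run your procedure with $X=\{v\}$ for each of the $n$ choices -- but as written your maintenance argument uses $X\neq\emptyset$ in an essential way. (The paper's own proof quietly makes the same assumption when it asserts a unique segment containing $X$, and the main algorithm there handles $X=\emptyset$ separately by guessing a vertex, so this is a corner case to acknowledge rather than a flaw in your overall approach.)
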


\begin{proof}
If $G$ is $p$-edge-connected, then we are done as the entire vertex is a trivial solution.
Otherwise, $S_1,\ldots,S_t$ be the collection of all { maximal} $p$-segments of $G$.
They can be computed in polynomial-time \cite{GomoryHu61}.
In particular, it holds true that $S_i \cap S_j = \emptyset$ for all $i \neq j$.
Moreover, for every $i \in [t]$, $|S_i| < |V(G)|$.
If $u \in S_i$ and $v \in S_j$ such that $i \neq j$, then there is an $(u, v)$-cut of size at most $p-1$ in $G$.
Therefore, if $u \in S_i \cap X$ but $v \in S_j \cap X$ for $i \neq j$, then there does not exist a vertex set $S \supseteq X$ such that $G[S]$ is $p$-edge-connected.
So, it must be the case that there is unique $i \in [t]$ such that $X \subseteq S_i$.
Moreover, if there exists $S^* \subseteq V(G)$ such that $G[S^*]$ is $p$-edge-connected, then $S^* \subseteq S_i$ for some $i \leq t$.
Using this observation, we design a recursive algorithm that works as follows.

The base case occurs in two situations.
First situation is when the input graph $G$ is a $p$-edge-connected graph.
Then we output it as a yes-instance and we are done.
Clearly, the entire graph $G$ is a trivial feasible solution.
The second situation occurs when the input graph $G$ has at most $p$ vertices.
As the input is a simple graph, therefore, any $p$-edge-connected subgraph must have at least $p+1$ vertices.
But the input graph $G$ has less than $p+1$ vertices.
So, we output it as a trivial no-instance.

Otherwise, {we are in the case that $G$ has more than $p$ vertices and $G$ is not a $p$-edge-connected subgraph.}
We compute the collection of all {maximal} $p$-segments $S_1,\ldots,S_t$ of $G$ in polynomial-time.
%If there exists $i \in [t]$ such that $X \subseteq S_i$ and $G[S_i]$ is $p$-edge-connected, then we are done as $S_i$ is the candidate set that induces a $p$-edge-connected subgraph.
{We have argued in the first paragraph that} there is a unique $i \in [t]$ such that $X \subseteq S_i$.
We recursively call our algorithm with input graph $G[S_i]$ such that $X \subseteq X_i$.
%In particular, if there is a feasible solution, then $X$ must be a subset of $S_i$ for a uni.
%We are in the case now that $G[S_i]$ is not $p$-edge-connected.
%We recursively call the algorithm on the graph $G[S_i]$.
%As per our previous arguments, 
So, this recursive call invokes the recursion at only one $p$-segment $S_i$ such that $X \subseteq S_i$.
%
%\todo[inline]{Diptapriyo: I feel that the next sentence is not required. But what is the termination criteria (or the base case) of this recursive algorithm? In the context of last two sentences? Should it be that $|S_i| \leq p+1$?}
%
Since $G$ is not $p$-edge-connected, $G$ has at least two $p$-segments each having at least one vertex.
Hence, this algorithm invokes the recursive call for exactly one $p$-segment with strictly smaller number of vertices.
Therefore, this recursive algorithm terminates.

Before invoking the only recursive call, the algorithm checks the base case conditions and computes all maximal $p$-segments that are polynomial-time procedures. 
Hence, this algorithm runs in polynomial-time.
This completes the proof.
\end{proof}

\section{FPT Algorithm for {\ExtPECon}}
\label{sec:p-edge-steiner}

This section is devoted to the proof of the main contribution of our paper.
We first provide a singly exponential algorithm for {\ExtPCon} (we restate below) when the input graph has bounded degeneracy.
We assume that a fixed constant $\eta$ is the degeneracy of $G$.
We restate the problem definition.

\defparproblem{{\ExtPECon}}{A simple undirected graph $G = (V, E)$, $X \subseteq V(G)$ and integers $k, p \in {\nn}$.}{$k + p$}{Is there $S \supseteq X$ of size at most $k$ such that $G[S]$ is $p$-edge-connected?}

Let $(G, X, k)$ be given as an input instance and $\sigma$ be a degeneracy sequence for the vertices of $G - X$ witnessing that $G-X$ is also an $\eta$-degenerate graph. 
Note that one can compute a degeneracy sequence easily in polynomial time by iteratively picking the minimum degree vertex, hence, we assume an ordering $\sigma = (u_1,\ldots,u_n')$ of the vertices of $G - X$ is given along with the input.
Due to Lemma \ref{lemma:feasibility-check}, we can check if there exists a feasible solution $S \supseteq X$ (not necessarily of size at most $k$) such that $G[S]$ is $p$-edge-connected subgraph.
So, we can assume without loss of generality that a feasible solution actually exists.
We first state the following proposition that combines Lemma 4 and Lemma 5 from \cite{EinarsonGJMW23}.

\begin{proposition} (Einarson et al. \cite{EinarsonGJMW23})
\label{prop:matroid-needed-for-us}
{ Let $p$ and $k$ be two integers}, $G = (V, E)$ be an undirected graph, $v_r \in V$, and $D_G = (V, A_E)$ such that $(D_G, v_r)$ is an equivalent digraph with root $v_r$ and $\widehat{M}$ be the direct sum of $2p+1$ matroids as follows.
Matroids $M_1, M_3, \ldots,M_{2p-1}$ are the copies of graphic matroid of $G$ on ground set $A_E$.
Matroids $M_2, M_4, \ldots,M_{2p}$ are the copies of out-partition matroid for $D_G$ with root $v_r$.
Matroid $M_{2p+1}$ is a uniform matroid over $A_E$ with rank $p(k-1)$.
Let $F \subseteq A_E$.
The followings are equivalent.
\begin{enumerate}[(i)]
	\item\label{prop:truncated-prop-1} $F$ is the set of $p$ pairwise arc-disjoint out-branchings rooted at $v_r$ in $D_G[S]$ for some $S \in {{V(G)}\choose{k}}$ and $v_r \in S$.
	\item\label{prop:truncated-prop-2} $|V(F)| = k$, $|F| = p(k-1)$, $v_r \in V(F)$, and there is an independent set $I$ in $\widehat{M}$ such that every arc $a \in F$ occurs in $I$ precisely in its copies in matroids $M_{2i-1}, M_{2i}$ and $M_{2p+1}$ for some $i \in \{1,\ldots,p\}$.
\end{enumerate}

In addition, a linear representation of $\widehat{M}$ can be computed by a deterministic algorithm in polynomial-time.
Subsequently, the truncation of $\widehat{M}$ to a linear matroid ${M}$ of rank $3p(k-1)$ can be computed in deterministic polynomial-time.
Furthermore, every independent set $I$  as described in (\ref{prop:truncated-prop-2}) above is a basis of ${M}$.
\end{proposition} 

Our algorithm for {\ExtPCon} works as follows.
There are two cases.
In the first case, when $X = \emptyset$, we choose an arbitrary vertex $u \in V(G)$ and set $X = \{u\}$.
There are $n = |V(G)|$ possible choices of $X$.
For each such choice we assign $v_r = u \in X$.
Otherwise, it is already the case that $X \neq \emptyset$.
Therefore, we can assume without loss of generality that $X \neq \emptyset$.
In the second case, when $X \neq \emptyset$, we first check whether $G[X]$ is $p$-edge-connected or not.
{If $|X| > k$, then the instance is a trivial no-instance.}
If $G[X]$ is $p$-edge-connected, then we can trivially output yes-instance since $|X| \leq k$.
If $|X| = k$ and $G[X]$ is not $p$-edge-connected, then the instance is a trivial no-instance.
So, we are in the situation that $G[X]$ is not $p$-edge-connected and $|X| < k$.
%%%%
In the algorithm, we have to check if $X$ can be extended to a $p$-edge-connected subgraph $G[S]$ with at most $k$ vertices.
We work with the equivalent graph $D_G = (V, A_E)$ throughout this section.
%For every $uv \in E(G)$, we add $(u, v)$ and $(v, u)$ into $A_E$.
We construct a linear matroid $M$ (along with its matrix representation) of rank $3p(k-1)$ in polynomial time that satisfies the properties described in Proposition \ref{prop:matroid-needed-for-us}.
For an independent set $I$ of $M$, we use $V(I)$ to denote the set of endpoints of the arcs that occur in $I$.

%\todo[inline]{Diptapriyo: from this part until the the next todo-note is what the reviewer says 'it would be nice to have it inside a claim'. The following lemma statement just guarantees existence. The construction of the same is guaranteed by the lemma after that.}

\begin{lemma}
\label{lemma:matroid-connection-statement-for-final-proof}
Let $v_r \in X$ and $S \supseteq X$ be such that $|S| = k$.
Then, the following statements are equivalent.
\begin{enumerate}[(i)]
	\item\label{set-existence} $G[S]$ is $p$-edge-connected.
	\item\label{ind-set-properties} there is an independent set $I$  in $M$ such that
	\begin{itemize}
	%	\item\label{number-of-arcs} $|F| = p(k-1)$, and $X \subseteq V(F) = V(I)$,
		%\item\label{size-of-I} 
		\item $X \subseteq V(I) = S$, and $|I| = 3p(k-1)$, and
		%\item\label{arcs-in-I} 
		\item every arc that is in $I$ precisely occurs in three matroids $M_{2i-1}, M_{2i}$, and $M_{2p+1}$ for some $i \in [p]$.
	\end{itemize}
\end{enumerate}
%%%  if an only if 
\end{lemma}

\begin{proof}
We fix an arbitrary vertex $v_r \in X$ and $S \supseteq X$ such that $|S| = k$.
First, we give the proof that (\ref{set-existence}) $\Rightarrow$ (\ref{ind-set-properties}).
Let item-(\ref{set-existence}) be satisfied.
%\todo{Should we choose $v_r$ from $X$ or alternatively try all possibilities for $v_r$ if $X$ is empty? Since $v_r\in S$, i.e., $v_r\in V(I)$ at the end?}.
{It means that $G[S]$ is $p$-edge-connected.
Then, due to Proposition \ref{prop:high-edge-connectivity}, $D_G[S]$ has a set $F$ of $p$ pairwise arc-disjoint out-branchings rooted at $v_r \in S$ (since $v_r \in X$).
Subsequently, since $|S| = k$ and $D_G[S]$ has a set $F$ of $p$ pairwise arc-disjoint out-branchings rooted at $v_r \in S$, due to the (\ref{prop:truncated-prop-1}) $\Rightarrow$ (\ref{prop:truncated-prop-2}) of Proposition \ref{prop:matroid-needed-for-us}, $v_r \in V(F)$, $|V(F)| = k$, $|F| = p(k-1)$, and there is an independent set $I$ in $M$ such that every arc of $F$ occurs in $I$ precisely in its copies in matroids $M_{2i-1},M_{2i}$, and $M_{2p+1}$ for some $i \in [p]$.
As $F$ is a collection of $p$ pairwise arc-disjoint out-branchings in $D_G[S]$ rooted at $v_r$, it must be that $S = V(F)$.
Since every arc $a \in F$ has precisely three copies in $I$ and $|F| = p(k-1)$, hence $|I| = 3p(k-1)$.
Therefore, the following properties are satisfied. 
%%%%Old-Enumerate
%%%New-Itemize

\begin{itemize}
	\item $X \subset V(I) = V(F) = S$ and $|V(I)| = k$,
	\item $|I| = 3p(k-1)$, and
	\item every arc that is in $I$ is represented precisely in three matroids $M_{2i-1}, M_{2i}$ and $M_{2p+1}$ for some $i \in [p]$.
\end{itemize}
Hence, we have established the conditions required for item-(\ref{ind-set-properties}).

%%%Old-sentence
%%%New sentence

\medskip

Now, we give the proof that (\ref{ind-set-properties}) $\Rightarrow$ (\ref{set-existence}).
Assume that item-(\ref{ind-set-properties}) is satisfied.
Then, there is an independent set $I$ in the matroid $M$ satisfying the properties of item-(\ref{ind-set-properties}).
Then, $X \subseteq V(I) = S$, $v_r \in V(I)$ and $|S| = k$.
Suppose that $F$ is the set of arcs each of which has precisely three copies in $I$ as described in (\ref{ind-set-properties}).
As $|I| = 3p(k-1)$ and each arc in $F$ appearing in $I$ occurs precisely in three matroids, it follows that $|F| = p(k-1)$ and $V(F) = V(I)$.
In particular, $|V(F)| = k$ and $v_r \in V(F)$.
It ensures that (\ref{prop:truncated-prop-2}) of Proposition \ref{prop:matroid-needed-for-us} is satisfied.
Therefore, due to (\ref{prop:truncated-prop-2}) $\Rightarrow$ (\ref{prop:truncated-prop-1}) of Proposition \ref{prop:matroid-needed-for-us}, $F$ is a set of $p$ pairwise arc-disjoint out-branchings rooted at $v_r \in X$ in $D_G[V(I)]$.
Then, due to Proposition \ref{prop:high-edge-connectivity}, $G[V(I)]$ is $p$-edge-connected and $X \subseteq V(I)$.
Since $|V(I)| = k$, it means that 
%that the input instance is a yes-instance.
there is a set $S = V(I) \supseteq X$ of $k$ vertices such that $G[S]$ is $p$-edge-connected.
Hence, item-(\ref{set-existence}) is satisfied.}
\end{proof}

%\todo[inline]{Diptapriyo: until this part, reviewer was willing to be inside a claim/lemma. Refer to ``end of page-9'' comment.}
%\todo[inline]{Diptapriyo: I have tried to illustrate why these two propositions are being used for the correctness.}

\medskip

\paragraph{Setting up the Dynamic Programming:}
We will build such a candidate set $I$ via dynamic programming procedure.
%{
Since $X$ is already included in $V(I)$ it allows us to simplify the condition $|V(I)| = |S| = k$ from Lemma \ref{lemma:matroid-connection-statement-for-final-proof} by $|V(I) \setminus X| = k - |X|$.
The purpose of this dynamic programming is to construct a table that keeps track of $|V(I) \setminus X|$ and $|I|$.
Let $n' = |V(G - X)|$ and $\{v_1,\ldots,v_{n'}\}$ be a degeneracy sequence of the vertices of $G - X$ and $\XX = \{A_i \mid A_i = N(v_i) \cap \{v_1,\ldots,v_{i-1}\} \setminus X\}$.
Each entry of the dynamic programming table ${\TT}[(i, j, q, Y), (Z,\ell)]$ will contain a collection of independents sets of $M$ that is a  $(3p(k-1)-q)$-representative family of all the independent sets $I$ of $M$ such that the following conditions are satisfied:
\begin{itemize}
	\item $|V(I) \setminus X| = i$,
	\item $|I| = q$,
	\item the largest index of $V(G) - X$ that occurs in $V(I)$ is $j$,
	\item $Y = A_j \cap V(I)$, and
	\item $Z = A_{\ell} \cap V(I)$ such that $\ell > j$.
\end{itemize} 
%\todo[inline]{Diptapriyo: We want to store that $Y$ should be the only left-neighbors of $v_j$ that are in $V(I)$ and $V(I)$ should include $Z$ which is a subset of the left neighbors of $v_{\ell}$ within $\{v_1,\ldots,v_j\}$ and not any left-neighbor of $v_{\ell}$ between $\{v_{j+1},\ldots,v_{\ell}\}$ is present. .}

Trivially, for a table entry $\TT[(i, j, q, Y), (Z, \ell)]$, if no independent set of $M$ exists that satisfies the above mentioned five conditions, then $\TT[(i, j, q, Y), (Z, \ell)] = \emptyset$.

 Informally, it means that every independent set $|I|$ has size $q$ in $M$, $V(I)$ intersects $A_j$ exactly in $Y$, $V(I)$ spans $i$ vertices from $G - X$, $v_j \in V(I)$, and $V(I)$ has no vertex from $\{v_{j+1},\ldots,v_{n'}\}$.
Furthermore, $V(I)$ intersects $A_{\ell}$ exactly in $Z$ when $\ell > j$.
Observe that for $j = n'$, there is no index $\ell > j$.
Then we denote $\ell = n' + 1$ and $Z = \emptyset$ to keep the DP-states well-defined.
We  prove the following lemma that illustrates how a dynamic programming algorithm can construct all the entries of a table ${\TT}[(i, j, q, Y),(Z, \ell)]$ for $i \leq k - |X|, j \leq n', q \leq 3p(k-1)$, $Y \subseteq A_j$, $Z \subseteq A_{\ell}$ and $\ell > j$.
%Indeed, if $j = n'$, then $\ell = n' + 1$ and $Z = \emptyset$.
Observe that there are at most $2^{\eta}n$ possible choices of $Y$ and $2^{\eta}n$ possible choices for $(Z, \ell)$ in the DP table ${\TT}$.
The following lemma illustrates how we compute the DP-table entries.

%\todo[inline]{I am considering to combine the next two lemmas into a single lemma.}

\begin{lemma}
\label{lemma:generic-table-entry-computation}
Let $M$ be a matroid of rank $r = 3p(k-1)$ as described in Proposition \ref{prop:matroid-needed-for-us} and $v_r \in X$.
Then, the following statements hold true.
\begin{enumerate}
	\item\label{main-table-entry-comput} The entries of the table ${\TT}[(i, j, q, Y),(Z,\ell)]$ for $i \leq k - |X|, j \leq n', j < \ell, q \leq 3p(k-1)$, $Y \subseteq A_j$ and $Z \subseteq A_{\ell}$ can be computed in $2^{\OO(pk + \eta)}n^{\cO(1)}$-time.
	\item\label{main-correctness-arguments} Moreover, there is a set $I \in \TT[(k - |X|, j, 3p(k-1), Y),(Z,\ell)]$ for some $Y \subseteq A_j$ and $Z \subseteq A_{\ell}$ such that $X \subseteq V(I)$ if and only if there exists $S \supset X$ such that $G[S]$ is a $p$-edge-connected subgraph of $G$ with $k$ vertices.
\end{enumerate} 
\end{lemma}

\begin{proof}
We prove the statements in the given order.

\begin{enumerate}
	\item
	We describe a procedure {\sf Construct}(${\TT}[(i, j, q, Y), (Z, \ell)]$) for $i \leq k - |X|, j \leq n'$ and $q \leq 3p(k-1)$ as follows.
	%Observe that every arc of $I$ occurs in three copies, one in $M_{2i-1}$, one in $M_{2i}$ and the other in $M_{2p+1}$.
	Note that for an arc $a \in A_E$, its copies are present in the out-partition matroids $M_2, M_4, \ldots, M_{2p}$.
	The arc $a$ also has a copy present in the uniform matroid $M_{2p+1}$. 
	The copies of arc $a$ are also present in the graphic matroids $M_1,M_3,\ldots,M_{2p-1}$ on ground set $A_E$.
	%\todo[inline]{Diptapriyo: do we need these above mentioned blue sentences? Or the next sentence should be easy to follow?}
	Given an arc $a \in A_E$, and $i \in [p]$, we use $F_{a, i}$ to denote the set that contains the copies of $a$ in $M_{2i-1}, M_{2i}$ and $M_{2p+1}$.
	In the first part, for every $q = 3i$ such that $i \in [p(k-1) - 1] \cup \{0\}$, we describe how to construct the table entries $\TT[(0, 0, q, Y),(Z, \ell)]$ as follows.
	
	\begin{enumerate}
		\item For all $1 \leq \ell \leq n'$, we initialize $\TT[(0, 0, 0, \emptyset),(\emptyset, \ell)] = \{ \emptyset \}$.
	
	\item Consider the set of all the arcs in $D_G[X]$. We construct $\TT[(0, 0, q+3, \emptyset),(\emptyset, \ell)]$ from $\TT[(0, 0, q, \emptyset),(\emptyset, \ell)]$ as follows.
		For every $I \in \TT[(0, 0, q, \emptyset),(\emptyset, \ell)]$, for every arc $a \in D_G[X]$, ($1 \leq \ell \leq n'$), and $i \in \{1,\ldots,p\}$, we add $I \cup F_{a, i}$ when $F_{a,i}$ extends $I$.
		
	\item Finally, we invoke Proposition~\ref{prop:rep-set-compuatation} to reduce $\TT[(0, 0, q+3, \emptyset),(\emptyset, \ell)]$ into a $(3p(k-1) - q - 3)$-representative family of size $2^{\OO(pk + \eta)}n^{\OO(1)}$.
	\end{enumerate}
	
	When we consider the table entries $\TT[(i, j, q + 3, Y), (Z, \ell)]$ such that $j = 0$, observe that $Y = Z = \emptyset$.
	The reason is that for any $I \in \TT[(i, j, q + 3, Y), (Z, \ell)]$ with the assumption of $j = 0$ implies that $i = 0$ and no vertex from $G - X$ is part of $V(I)$.
	Since $Y, Z \subseteq V(G) \setminus X$, it must be that $Y = Z = \emptyset$.
	So, we consider only those entries in this first phase.
	
	%\noindent
	We analyze the running-time of the above process.
	Given $\TT[(0, 0, q, \emptyset),(\emptyset,\ell)]$, computing $\TT[(0, 0, q+3, \emptyset),(\emptyset, \ell)]$ requires polynomial in the size of $|\TT[(0, 0, q, \emptyset),(\emptyset, \ell)]|$.
	Then, computing a $(3p(k-1) - q - 3)$-representative family requires $2^{\OO(pk)}n^{\OO(1)}$-time.
	
%	\todo[inline]{Diptapriyo: So far only the vertices of $X$ are there.}
	
	\medskip
	%\noindent
	The process of computing table entries for $\TT[(i, j, q, Y), (Z, \ell)]$ for $1 \leq i \leq k - |X|, 1 \leq j \leq n'$, $Y \subseteq A_j$ and $Z \subseteq A_{\ell}$ for $\ell > j$ is more complex and needs more careful approach.
	But the basic principle is similar.
	
	%\todo[inline]{Diptapriyo: should we explicitly state more base cases here? for example, there are DP-states for which no independent set exists at all? or we can omit these cases?}
	
	%{
	%There are a few trivial indices $i, j, \ell$ with choice of $Y$ for which ${\TT}[(i, j, q, Y), (Z, \ell)] = \emptyset$.
	%If $i = 1$ and $Y \neq \emptyset$, then for any $j \in [n']$, $\TT[(i, j, q, Y), (Z, \ell)] = \emptyset$.
	%}
	
	%{
	%}
	We process these table slots with a lexicographic ordering of the indices $(i, j, q, \ell)$.
	For every $i \geq 1$, when we compute the collection of independent sets in $\TT[(i, j, q, Y),(Z, \ell)]$, we assume that all the table entries $\TT[(i', j', q', Y'), (Z', \ell')]$ respecting the conditions have been already constructed such that $j' < j$, $j' < \ell'$, $i' < i$, $\ell' < \ell$, and $q' < q$.
	
%	\todo[inline]{Diptapriyo: I wonder what additional explanation can be added here.}
	%It must hold that there is \todo{Broken sentence! I couldn't figure out what is was supposed to say ... }
	We want to include $v_j$ into $V(I)$ such that $|V(I) \setminus X| = i$.
	%We look at all the previous subproblems in the lexicographic ordering one by one.
	Suppose that $I' \in \TT[(i', j', q', Y'), (Z', \ell')]$.
	First condition is that $Y' = V(I') \cap A_{j'}$ and $Z' = A_{\ell'} \cap V(I')$.
	We say that $I'$ is {\em extendable} to a set $I \in \TT[((i, j, q, Y),(Z, \ell))]$ if $i = i' + 1$, $\ell' = j$, $Y = Z' \subseteq V(I')$, and $Z \subseteq V(I') \cup \{v_j\}$.
	%Observe that by definition $V(I)$ and $V(I')$ both intersect $A_j$ exactly in $Y$, i.e. $V(I) \cap A_j = Y$ and $V(I') \cap A_j = Y$.
	Moreover, $j$ is the next larger index of the vertex included in $V(I)$ after $v_{j'}$ whose index is $j'$.
	
	\medskip
	\sloppy
	We construct the table entries of $\TT[(i, j, q, Y),(Z, \ell)]$ using the table entries of $\TT[(i', j', q', Y'),(Z', \ell')]$ as follows.
	
	\begin{enumerate}[(i)]
		\item Let $d$ be the number of arcs that are incident to $v_j$ such that either the other endpoints have lower indices or the other endpoint is in $X$.
	    Formally, we consider the arcs that are of one of the two forms.
	    The arcs are either of the form $(v_{j''}, v_j), (v_j, v_{j''}) \in A_E$ such that $j'' < j$, and $v_{j''} \in Y$ and $Y = Z' \subseteq V(I')$.
	    Or the arcs are of the form $(u, v_j), (v_j, u)$ for some $u \in X$.
	     Since $G$ is an $\eta$-degenerate graph, observe that $d \leq 2\eta + 2|X| \leq 2(k + \eta)$.
	     
	     \item We create a set $F$ of arcs as follows.
		For every arc $a$ incident to $v_j$ of the above forms { (as illustrated in the previous item)}, we consider the sets $F_{a, h}$ for every $h \in [p]$.
		We either add $F_{a,h}$ into $F$ for some $h \in [p]$, or do not add $F_{a,h}$ into $F$.
		This ensures that there are $(p+1)^d \leq (p+1)^{2\eta + 2k}$ possible collections of arc-sets.
		
		\item For every nonempty such arc-set $F$ and for every $I' \in \TT[(i', j', q', Y'),(Z', \ell)]$,
%		if $I'$ is extendable to $I' \cup F$, then add $I' \cup F$ into into $\TT[((i, j, q' + |F|, Y),(Z, \ell))]$.
%%%%
		if $F$ extends $I'$ in $M$, $Y = Z' \subseteq V(I')$, $i = i' + 1$, $\ell' = j$, $q = q' + |F|$, and $Z \subseteq V(I') \cup \{v_j\}$, then we add $I' \cup F$ into $\TT[(i, j, q, Y),(Z, \ell)]$.
%		\todo[inline]{Diptapriyo: need to check redundancies here.}
%		In other words $I'$ that is extendable to a $I' \cup F \in \TT[((i, j, q, Y),(Z, \ell))]$),  when $F$ extends $I'$ and $q = q' + |F|$.	
		From the description, it is clear that $I'$ is extendable to a set $I' \cup F \in \TT[(i, j, q' + |F|, Y),(Z, \ell)]$. and $q = q' + |F|$.
	\end{enumerate}
	%\todo[inline]{Diptapriyo: my edit has to review the above bullet points, and continue from here.}	
	Finally, invoke Proposition \ref{prop:rep-set-compuatation} to reduce $\TT[(i, j, q, Y),(Z,\ell)]$ into its $(3p(k-1) - q)$-representative family containing at most $2^{3p(k-1)}$ sets.
	Observe that for every set $I' \in \TT[(i', j', q', Y'),(Z',\ell')]$, there are at most $(p+1)^{2|X| + 2\eta}$ sets $I = I' \cup F$ that are added to the slot $\TT[(i, j, q, Y),(Z,\ell)]$.
	Hence we have that
	$$|\TT[(i, j, q, Y),(Z,\ell)]| \leq (p+1)^{2|X| + 2\eta} 2^{3pk} (p+k+\eta)^{\OO(1)}$$
	
	Since the number of indices is $4^{\eta}n^2$, the above implies that computing all the table entries can be performed in $2^{\OO(pk + \eta)}n^{\OO(1)}$-time.

	\item Now, we prove the second item.
	We first prove the forward direction ($\Rightarrow$).
	Let $I \in \TT[(k - |X|, j, 3p(k-1), Y),(Z, \ell)]$ be an independent set in $M$ for some $j \in [n']$, $Y = A_j \cap V(I)$ and $Z = V(I) \cap A_{\ell}$ such that $v_r \in V(I)$ and $X \subset V(I)$.
	Let $S = V(I)$ and it follows that $|S \setminus X| = k - |X|$.
	Note that $|I| = 3p(k-1)$ and $v_r \in S$.
	%By construction of the DP-table entries $\TT$, every arc of $I$ occurs precisely in three copies.
	When we grow a set $I$, we add a collection of arcs $F_{a, i}$ into it. 
	But, $F_{a, i}$ contains a copy of arc $a$ in $M_{2p+1}$.
	Therefore, construction of $I$ ensures that no two distinct sets $F_{a, i}$ and $F_{a, i'}$ are added together.
	Hence, every arc occurs three times.
	{Suppose that $F$ is the set of arcs such that each arc of $F$ occurs exactly three times in $I$.
	In particular, if $F_{a,i}$ is added into $I$, then the arc $a$ has three copies that appear in $M_{2i-1}, M_{2i}$ and $M_{2p+1}$.
	Then, the properties required by item-(\ref{ind-set-properties}) of Lemma \ref{lemma:matroid-connection-statement-for-final-proof} are satisfied for $I$.
	Due to (\ref{ind-set-properties}) $\Rightarrow$ (\ref{set-existence}) of Lemma \ref{lemma:matroid-connection-statement-for-final-proof}, $G[S]$ is $p$-edge-connected such that $S \supset X$.
	As $V(I) = S$ and $|S| = k$, the correctness follows.}

	%%%%Old-Part written here
	%\todo[inline]{Diptapriyo: I thought that this part of the argument can be shortened using one direction of Lemma \ref{lemma:matroid-connection-statement-for-final-proof}. But that does not seem to be the case.}
%%\todo[inline]{Diptapriyo: my edit has to resume from here.}

	\medskip
	Now, we give the backward direction ($\Leftarrow$) of the proof.
	Let $G[S]$ is $p$-edge-connected with $|S| = k$, such that $X \subset S$.
	% and $v_j$ is the vertex with highest index from $\{v_1,\ldots,v_n'\} = V(G) \setminus X$.
	{It means that item-(\ref{set-existence}) of Lemma \ref{lemma:matroid-connection-statement-for-final-proof} is satisfied.
	Then, due to (\ref{set-existence}) $\Rightarrow$ (\ref{ind-set-properties}) of Lemma \ref{lemma:matroid-connection-statement-for-final-proof}, there is an independent set $I$ in $M$ such that $X \subset V(I) = S$ and $|I| = 3p(k-1)$.
	Moreover, every arc that is in $I$ precisely appears three times.
	Let $F$ be the set of those arcs that are contained in $I$.
	Hence, $|F| = 3p(k-1)$.}
	
	%\todo[inline]{Diptapriyo: this existence part of the argument written in last few sentences also can be shortened using one direction of Lemma \ref{lemma:matroid-connection-statement-for-final-proof}. But the subsequent writing has to ensure the rest of the correctness proof.}
	Suppose that $j$ is the largest index from $\{1,2,\ldots,n'\}$ such that $v_j \in V(I)$.
	We complete our proof by justifying that $I$ is a candidate independent set of $M$ for the slot $\TT[(k - |X|, j, 3p(k-1), Y),(Z, \ell)]$ for some $Z = V(I) \cap A_{\ell}$ and $\ell > j$.
	If we prove that there is $Y \subseteq N(v_j) \cap \{v_1,\ldots,v_{j-1}\}$ such that $\TT[(k- |X|, j, 3p(k-1), Y),(Z, \ell)] \neq \emptyset$, then we are done.
	We prove by induction on $i, j, q$ that $\TT[(k- |X|, j, 3p(k-1), Y),(Z, \ell)] \neq \emptyset$.
	
	Note that $I$ is a disjoint union of the sets $F_{a, b}$ for arc $a$ and $b \in [p]$.
	Let us order the arcs of $I$ as per their occurences in lexicographic ordering of $(i, j, q, \ell)$ corresponding to the slots $\TT[(i, j, q, Y),(Z,\ell)]$.
	We first consider the slots $\TT[(0, 0, q, \emptyset),(\emptyset, \ell)]$ for all $1 \leq \ell \leq n'$.
	Suppose that $a_1,\ldots,a_t$ be the arcs of $D_G[X]$ that have been added in this order and we consider the sets $F_{a_i, b_i}$ that are contained in $I$.
	For $s \in [t]$, we define
	$$I_s' = I \setminus (\bigcup\limits_{j=1}^s F_{a_j, b_j})$$
	
	We also define $I_0' = I$.
	Observe that $I_s'$ is the set of arcs that have been added to $I$ after $F_{a_s, b_s}$ in the ordering.
	{Hence by notation, the set $\bigcup\limits_{j=1}^s F_{a_j, b_j}$ extends $I_s'$.}
	
	First, we prove by induction that for each $s \in [t]$, the table slot $\TT[(0, 0, 3s, \emptyset),(\emptyset, \ell)]$ contains an independent set that extends $I_s'$.
	\begin{itemize}
		\item {\bf Base Case:} The base case occurs when $s = 0$. Then, $\TT[(0, 0, 0, \emptyset),(\emptyset, \ell)] = \{\emptyset\}$ and $I_0' = I$.
		Clearly, $\emptyset$ extends $I$ and the base case is proved.
		\item {\bf Induction Hypothesis:} For all $s \leq h < t$, the statement holds true. It means that there is $J_0 \in \TT[(0, 0, 3h, \emptyset),(\emptyset, \ell)]$ that extends $I_h'$.
		\item {\bf Induction Step:}
		Let us consider $s = h+1$ and the table entry $\TT[(0, 0, 3h + 3, \emptyset),(\emptyset, \ell)]$.
		When we considered an independent set from the slot $\TT[((0, 0, 3h, \emptyset),(\emptyset,\ell))]$, the arc $a_{h+1}$ was one of the arcs that was considered for addition into the independent set.
		Since based on the induction hypothesis, $J_0 \in \TT[(0, 0, 3h, \emptyset),(\emptyset, \ell)]$ and $a_{h+1}$ was considered for addition by the algorithm, the arcs $F_{a_{h+1}, b_{h+1}}$ was added into $J_0$.
		Hence, the set $J_1 = J_0 \cup F_{a_{h+1}, b_{h+1}}$ was also added into $\TT[(0, 0, 3h + 3, \emptyset),(\emptyset, \ell)]$.
		As $a_i$'s are distinct arcs and $F_{a_{h+1},b_{h+1}} \subseteq I_h'$, it follows that $F_{a_{h+1}, b_{h+1}}$ extends $J_0$.
		Therefore, the table entry $\TT[(0, 0, 3h+3, \emptyset),(\emptyset, \ell)]$ contains $J_0 \cup F_{a_{h+1},b_{h+1}}$ hence contains $J_1$ before Proposition \ref{prop:rep-set-compuatation} was invoked and $J_1$ extends $I_{h+1}'$.
		Due to the correctness of Proposition \ref{prop:rep-set-compuatation}, it follows that there exists $J_1' \in \TT[(0, 0, 3h+3, \emptyset),(\emptyset, \ell)]$ such that $J_1'$ extends $I_{h+1}'$.
		It ensures us that $\TT[(0, 0, 3h + 3, \emptyset),(\emptyset, \ell)] \neq \emptyset$.
		This ensures that the claim holds true that for all $s \leq t, \TT[(0, 0, 3s, \emptyset),(\emptyset, \ell)] \neq \emptyset$.
	\end{itemize}
	
	Let the vertex set $S \setminus X = \{v_{j_1},\ldots,v_{j_{k- |X|}}\}$ such that $j_1 < \ldots < j_{k - |X|}$.
	We are yet to give a proof that the claim holds true for every $i, j_i, q$ and $Y \subseteq N(v_j) \cap \{v_1,\ldots,v_{j-1}\}$.
	Let ${\sf Arc}(v_{j_i})$ be the set of arcs incident on $v_{j_i}$. define 
	$$L_i = \bigcup_{b \in [p], a \in {\sf  Arc}(v_{j_i})} (F_{a,b} \cap I)$$
	Informally, $L_i$ be the arcs $F_{a, b}$ such that $F_{a, b} \cap I \neq \emptyset$ and $a$ is incident to $v_{j_i}$.
	{Let us also define $P_{i} = \bigcup\limits_{c=i}^{k-|X|} L_c$, i.e. the set of all the copies of arcs incident to the vertices $v_{j_i}, v_{j_{i+1}},\ldots,v_{j_{k - |X|}}$ that are contained in $I$.}
	If we prove by induction that, for every $i \in [k-|X|] \cup \{0\}$, the table slot $\TT[(i, j_i, q_i, Y),(Z, j_{\hat i})]$ for some ${\hat i} > i$ contains a set that extends $P_{i+1}$, then we are done.
	%\todo[inline]{Diptapriyo: correction has to start from here.}
	We have already proved the base case $\TT[(0, 0, q_0, \emptyset),(\emptyset, \ell)]$ that occurs in the beginning slots of lexicographical ordering. 
	\begin{itemize}
		\item {\bf Induction Hypothesis:} The claim holds true for all $i \leq s$, i.e. $\TT[(i, j_i, q_i, Y_i),(Z_{i+1}, j_{i+1})]$ contains a set $K_i$ that extends $P_{i+1}$.
		
		\item {\bf Induction Step:}
		Consider the case when $i = s+1$.
		Observe that a nonempty set of arcs incident to $v_{j_{s+1}}$ were added by the algorithm.
		We consider the stage of the algorithm when the entries of the slot $\TT[(s + 1, j_{s+1}, q_{s+1}, Y_{s+1}),(Z_{s+2},j_{s+2})]$ were being constructed.
		A nonempty set of arcs $L_{s+1}$ incident to $v_{j_{s+1}}$ were added to $K_s$ if the the other endpoint of those arcs are in $X$ or $v_{j_{i'}}$ for some $i' \leq s$, $L_{s+1}$ extends $K_s$.
		Subsequently, the set $L_{s+1} \cup K_s$ was added to the table slot $\TT[(s + 1, j_{s+1}, q_{s+1}, Y_{s+1}),(Z_{s+2},j_{s+2})]$.
		Moreover, it also satisfies that $Z_{s+1} = Y_{s+1}$ and $Y_{s+1} \subseteq V(K_s)$.
		
		In particular, observe that $L_{s+1}$ extends $K_s$ and it follows that $K_s \cup L_{s+1} \in \TT[(s + 1, j_{s+1}, q_{s+1}, Y_{s+1}), (Z_{s+2},j_{s+2})]$ before invoking Proposition \ref{prop:rep-set-compuatation}.
		Due to induction hypothesis, $K_s$ extends $P_{s+1}$.
		Furthermore, $P_{s+2} \subseteq P_{s+1}$, $L_{s+1} \subseteq P_{s+1}$, and $L_{s+1} \cap P_{s+2} = \emptyset$, it follows that $K_s \cup L_{s+1}$ extends $P_{s+2}$.
		Once again, due to the correctness of Proposition \ref{prop:rep-set-compuatation}, it follows that there exists $\widetilde K_{s+1}$ that extends $P_{s+2}$ and $|\widetilde K_{s+1}| = |K_s \cup L_{s+1}|$.
		Moreover, this particular set $\widetilde K_{s+1}$ is retained in the slot $\TT[(s + 1, j_{s+1}, q_{s+1}, Y_{s+1}), (Z_{s+2},j_{s+2})]$ by the algorithm in the first item.
	\end{itemize}
	This completes the proof of backward direction that for some $\ell > j_{k-|X|}$, the table slot $\TT[(k - |X|, j_{k - |X|}, 3p(k-1), Y),(Z, \ell)] \neq \emptyset$ implying that there exists some independent set of $M$ that is in $\TT[(k - |X|, j_{k-|X|}, 3p(k-1), Y),(Z,\ell)]$.
	\end{enumerate}
This completes the proof of the lemma.
\end{proof}

Using Lemma \ref{lemma:generic-table-entry-computation}, we are ready to prove our theorem statement of our main result, i.e. Theorem \ref{thm:main-result} (we restate below).

{\ThmMain*}

\begin{proof}
We assume without loss of generality that $X \neq \emptyset$ and $G[X]$ is not $p$-edge-connected.
We use $(G, X, k, p)$ to denote the input instance.
Also, we may assume that we have tested that there is no set $S \supseteq X$ of size at most $k-1$ such that $G[S]$ is $p$-edge-connected. Hence, we may assume that $(G, X, r, p)$ is a no-instance for every $r \leq k-1$.
First, we invoke Lemma \ref{lemma:feasibility-check} to check in polynomial-time whether there exists a subset $S \supseteq X$ such that $G[S]$ is a $p$-edge-connected subgraph or not.
If there is no subset $S \supseteq X$ such that $G[S]$ is a $p$-edge-connected subgraph, then we output that $(G, X, k, p)$ is a no-instance.
Otherwise, we perform the next steps as follows.
Let $\{v_1,\ldots,v_{n'}\}$ be a degeneracy sequence of the vertices of $G - X$ and consider an arbitrary vertex $v_r \in X$.
%We can also assume without loss of generality that $(G, X, k-1, p)$ is a no-instance.
In the first step, we construct a linear matroid $M$ of rank $3p(k-1)$ and its linear representation that satisfy the properties as described in Proposition \ref{prop:matroid-needed-for-us}.
Our next step is to invoke Lemma \ref{lemma:generic-table-entry-computation} and compute the table entries $\TT[(i, j, q, Y),(Z,\ell)]$ for all $i \in  \{1,\ldots,k-|X|\}$, $j \in [n']$, $q \leq 3p(k-1)$, $Y \subseteq A_j$, $Z \subseteq A_{\ell}$ and $\ell > j$.
It follows from the item (\ref{main-table-entry-comput}) of Lemma \ref{lemma:generic-table-entry-computation} that all the table entries can be computed in $2^{\OO(pk + \eta)}n^{\OO(1)}$-time.
Moreover, it follows from item (\ref{main-correctness-arguments}) of Lemma \ref{lemma:generic-table-entry-computation} that for every $I \in \TT[(k - |X|, j, 3p(k-1), Y),(Z,\ell)]$, the subgraph $G[V(I)]$ is $p$-edge-connected.
Therefore, if the table entry $\TT[(k - |X|, j, 3p(k-1), Y),(Z,\ell)] = \emptyset$, then we return that the input instance is a no-instance.
Otherwise, there is $I \in \TT[(k - |X|, j, 3p(k-1), Y),(Z,\ell)]$ and we output $S = V(I)$ as the solution to the input instance.
This completes the correctness proof of our algorithm.
\end{proof}

As we have proved our main result, we illustrate how our main result provides singly exponential FPT algorithm when $G$ is in several well-known hereditary graph classes.

{\CorMain*}

\begin{proof}
We prove the items in the given order.
For both the items, let $G$ be the input undirected graph.
Moreover, to prove both the items, we just need to ensure that we can obtain a degeneracy sequence of $G$ in polynomial-time.
\begin{enumerate}[(i)]
	\item In the first step, we invoke
	the algorithm by Korhonen \cite{Korhonen21} that computes a tree decomposition
	 $(T, \cX = \{X_t \mid t \in V(T)\})$ of $G$ of width at most $2\eta + 1$.
	 This algorithm takes $2^{\OO(\eta)}n$-time.
	%From the premise of the statement, it follows that a tree decomposition of width at most $\eta$ is given with the input.
	We root the tree $T$ in an arbitrary node $r \in V(T)$.
	We complete a pre-order traversal or breadth first search traversal of $T$ starting from root $r$.
	This traversal gives a sequence $\rho = (t_1,\ldots,t_{|V(T)|})$ of nodes of $T$. 
	Then, for every $i \in [|V(T)|]$, we output $X_{t_i}$, i.e. the vertices in the bag $X_{t_i}$ one by one (without duplicates) and obtain an ordering $\sigma$ of the vertices of the graph.
	Observe that if a vertex $v \in X_{t}$, then $v$ has at most $4\eta + 1$ neighbors, out of which at most $2\eta$ neighbors are in $X_{t}$ itself and at most $2\eta + 1$ neighbors are in $X_{t^*}$ such that $t^*$ is the parent of $t$ when the tree is rooted at $r$.
	Therefore, this gives us a $(4\eta + 1)$-degeneracy sequence of the vertices of $G$.
	Once we have this sequence of vertices, we invoke Theorem \ref{thm:main-result} to get an algorithm for {\ExtPCon} that runs in $2^{\OO(pk + \eta)}n^{\OO(1)}$-time.
	This completes the proof when $\eta$ is the treewidth of the input graph.
%	\todo[inline]{Diptapriyo: this proof needs clean-up}
	
	\item Suppose that a layout $\sigma = (v_1,\ldots,v_n)$ of width at most $\eta$ is given with the input.
	It follows from the premise that at any position $i \in [n]$, the cut at position $i$ has at most $\eta$ edges.
%	We consider the ordering $\rho = (v_n, v_{n-1},\ldots,v_1)$ that is the reverse order of $\sigma$.
	We claim that $\sigma$ is an $\eta$-degeneracy ordering of $G$.
	It follows from the definition that cut of position $i$ is the set of edges that has one endpoint in $\{v_1,\ldots,v_i\}$ and the other endpoint in $\{v_{i+1},\ldots,v_n\}$.
	Therefore, $v_{i+1}$ can have at most $\eta$ neighbors in $\{v_1,\ldots,v_i\}$.
	Hence, $\sigma$ itself is an $\eta$-degeneracy sequence of $G$.
	Next, we invoke Theorem \ref{thm:main-result} to obtain an algorithm for {\ExtPCon} that runs in $2^{\OO(pk + \eta)}n^{\OO(1)}$-time.
\end{enumerate}
This completes the proof of the corollary.
\end{proof}

\section{Applications of {\ExtPCon} to Hitting Set Problems on Graphs}
\label{sec:applications-p-edge-steiner}

In this section, we describe some applications of our main result (Theorem \ref{thm:main-result}) in parameterized algorithms.
But before that, we prove the following lemma.
The proof is similar to the proof of Lemma \ref{lemma:feasibility-check}.

\begin{lemma}
\label{lemma:feasibility-applications}
Let $G$ be a simple graph, $p$ be a fixed constant, and $\GG$ be a polynomial-time recognizable hereditary graph class.
Then, there exists a polynomial-time algorithm that can check if there is a set $S \subseteq V(G)$ such that $G - S \in \GG$ and $G[S]$ is $p$-edge-connected.
\end{lemma}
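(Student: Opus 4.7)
My plan is to mirror the proof of Lemma \ref{lemma:feasibility-check}, but with the constraint ``$S \supseteq X$'' replaced by the global constraint ``$G - S \in \GG$''. Two observations drive the argument: (i) any $p$-edge-connected induced subgraph $G[S]$ lies inside a single $p$-segment of $G$, since any two vertices in distinct $p$-segments have a cut of at most $p-1$ edges; and (ii) since $\GG$ is hereditary, whenever $S \subseteq H$, the graph $G - H$ is an induced subgraph of $G - S$, so $G - H \in \GG$ is a necessary condition for the feasibility of any $S \subseteq H$.

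Given these, I would design a recursive routine $\textsf{Find}(H)$ invoked initially with $H = V(G)$ that does the following: return ``no'' immediately if $|H| \le 1$ or $G - H \notin \GG$; return ``yes'' with certificate $S = H$ if $G[H]$ is $p$-edge-connected; otherwise compute the $p$-segments $H_1, \ldots, H_t$ of $G[H]$ in polynomial time via a Gomory-Hu tree (exactly as in Lemma \ref{lemma:feasibility-check}), and recursively invoke $\textsf{Find}(H_i)$ for each $i$, returning ``yes'' as soon as one call succeeds.

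For correctness, I would argue by induction on $|H|$ that the recursion tracks any feasible $S$ along the chain of nested $p$-segments containing it: observation (i) ensures that the unique $p$-segment at each level containing $S$ is visited by the recursion, while observation (ii) guarantees that the hereditary membership check $G - H \in \GG$ is satisfied at every node along this chain (because $G - H$ is an induced subgraph of the assumed-feasible $G - S \in \GG$). The chain must terminate at some $H^* \supseteq S$ with $G[H^*]$ being $p$-edge-connected, and the algorithm then declares ``yes''.

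The running-time analysis rests on the fact that $p$-segments partition $H$ into at least two nonempty parts whenever $G[H]$ is not $p$-edge-connected; thus at every depth the sum of segment sizes is at most $n$, the depth is at most $n$, and the number of recursive calls is $O(n^2)$. Each call does polynomial work: a $\GG$-membership test (by the polynomial-time recognizability hypothesis), a $p$-edge-connectivity test, and a Gomory-Hu computation. Putting the pieces together gives a polynomial-time procedure, and I do not foresee a serious obstacle beyond handling the base case $|H| \le 1$ correctly and recognising that the hereditary hypothesis is precisely what makes the pruning step in observation (ii) sound.
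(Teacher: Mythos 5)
Your proposal is correct and follows essentially the same route as the paper's proof: compute the $p$-segments, use the fact that any $p$-edge-connected $G[S]$ must lie inside a single segment, prune via the hereditary test $G-H\in\GG$, and recurse with a polynomial bound on the recursion tree. If anything, your formulation of the recursion on vertex subsets $H$ of the \emph{original} graph (always testing $G-H\in\GG$ rather than recursing on $G[S_i]$ as a standalone instance) makes the bookkeeping slightly cleaner than the paper's write-up, but the underlying argument is the same.
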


\begin{proof}
If $G$ is $p$-edge-connected, then the entire vertex set is a trivial set whose deletion results in a graph belonging to class $\GG$.
In such a case, we can output that a feasible solution exists.
Similarly, if $G$ has at most $p$ vertices, then there does not exist a vertex subset $S$ that is a $p$-edge-connected subgraph (as at least $p+1$ vertices are needed for a simple graph to become $p$-edge-connected).
In that case, we trivially output it as a no-instance.
Our base cases occur in these two situations.

Otherwise, our algorithm proceeds as follows.
We compute $S_1,\ldots,S_t$ that are the collection of all { maximal} $p$-segments of $G$ in polynomial-time (see \cite{GomoryHu61}).
As any two distinct maximal $p$-segments of $G$ are pairwise vertex-disjoint and for $u \in S_i$ and $v \in S_j$ with $j \neq i$, there is an $(u, v)$-cut of size at most $p-1$, the following must hold true.

If there exists $S^* \subseteq V(G)$ such that $G - S^* \in \GG$ and $G[S^*]$ is $p$-edge-connected, then there is a unique $i \in [t]$ such that $S^* \subseteq S_i$.
%Therefore, our algorithm works as follows.
%\todo[inline]{Diptapriyo: this proof has to be cleaned.}
%When $G$ is a $p$-edge-connected graph, then we are done.
%Otherwise, we compute the collection of all { maximal} $p$-segments $S_1,\ldots,S_t$ of $G$ in polynomial-time.
%If for some $i \in [t]$, it holds true that $G[S_i]$ is $p$-edge-connected and $G - S_i \in \GG$, then we output that there is a set $S_i \subseteq V(G)$ such that $G - S_i \in \GG$ and $G[S_i]$ is $p$-edge-connected.
% are done as $S_i$ induces a $p$-edge-connected subgraph and the deletion of $S_i$ results in a graph belonging to graph class $\GG$.

%When the above does not happen, it means that for every maximal $p$-segment $S_i$, $G[S_i]$ is not $p$-edge-connected or $G - S_i \notin \GG$.
If there is $i \in [t]$ such that $G - S_i \notin \GG$, then there does not exist any subset $S'' \subseteq S_i$ such that $G - S'' \in \GG$.
The reason is that $\GG$ is hereditary graph class.
Hence, we discard those $p$-segments $S_i$ such that $G - S_i \notin \GG$.
We consider the remaining $p$-segments $S_i$ one by one for which it satisfies that $G - S_i \in \GG$.
%For each of these $p$-segments $S_i$, it satisfies that $G - S_i \in \GG$.
%Hence, it must be that $G[S_i]$ is not $p$-edge-connected.
We recursively call this algorithm for each of those specific $p$-segments $S_i$ such that $G - S_i \in \GG$.

%If $|S_i| < p+1$, then we discard this choice of  
%The recursion reaches the base case when $|S_i| \leq p+1$.
%{
%If $|S_i| < p+1$ for some $p$-segment $S_i$, then neither $G[S_i]$, nor any induced subgraph of $G[S_i]$ can be a $p$-edge-connected subgraph as $G$ is simple graph.
%So, we discard this choice.
%If $|S_i| = p+1$, $G[S_i]$ is $p$-edge-connected and $G - S_i \in {\GG}$, then we output that there is a set $S_i$ such that $G - S_i \in \GG$ and $G[S_i]$ is $p$-edge-connected.}

If every recursive call outputs a no-instance, then we output that the original instance is also no-instance and the algorithm stops.
If one recursive call outputs a yes-instance, then we output yes-instance and the algorithm stops.  
Observe that, $\sum_{i=1}^t |S_i| = |V(G)| = n$ when $S_1,\ldots,S_t$ are the collection of nonempty $p$-segments.
Hence the recursion tree depth size at most $n$.
And at every level of recursion, the computation is polynomial in $n$.
Hence, the running time of this algorithm is polynomial in $n$.
\end{proof}

The above lemma implies that for any polynomial-time recognizable (hereditary) graph class $\GG$, we can test in polynomial-time if a feasible $p$-edge-connected vertex subset exists whose deletion results in a graph of class $\GG$.
Note that the class of all pathwidth-one graphs is a polynomial-time recognizable graph class.
Moreover, for every fixed constant $\eta$, the class of all degree-$\eta$-graphs, or the class of all $\eta$-treedepth graphs, or the class of all graphs with no $P_{\eta}$ as subgraphs are all polynomial-time recognizable graph classes.
So, we can test the existence of feasible solutions for all our problems in polynomial-time.

\subsection{Singly Exponential FPT Algorithm for {\PEBDDS}}
\label{sec:algo-pebdds}

Now, we explain how Theorem \ref{thm:main-result} implies a singly exponential algorithm for {\PEBDDS} problem.
We formally state the problem below.

\defparproblem{{\PEBDDS} ({\pbdds})}{An undirected graph $G = (V, E)$ and an integer $k$.}{$k$}{Is there $S \subseteq V(G)$ of size at most $k$ such that $\Delta(G - S) \leq \eta$ and $G[S]$ is $p$-edge-connected?}

The following is the first application of our main result (Theorem \ref{thm:main-result}).
As we have a guarantee from Lemma \ref{lemma:feasibility-applications} that we can test if an input graph has a feasible solution to our problems, we assume without loss of generality that the input graph actually has a feasible solution.
We use this assumption for all our subsequent problems.
If the deletion of a vertex subset $S \subseteq V(G)$ results in a degree-$\eta$-graph, then we say that $S$ is a {\em degree-$\eta$-deletion set} of $G$.
We will use this terminology for the rest of this subsection.
The following corollary is the first application of our main result Theorem \ref{thm:main-result}.

{\ThmDegree*}

\begin{proof}
If the input graph $G$ has a vertex $u$ of degree at least $\eta + 1$, then consider the vertex subset $P_u$ as follows.
Let $P_u$ denote the vertex subset that contains $u$ and an arbitrary collection of $\eta + 1$ neighbors of $u$.
At least one vertex from $P_u$ must be part of part of any minimal $\eta$-degree deletion set of $G$ of size at most $k$.

There are two stages in this algorithm. 
The first stage is ``enumeration stage'' and the second stage is ``expand stage''.
{The enumeration stage just enumerates all possible inclusion-wise minimal $\eta$-degree deletion sets of size at most $k$}.
We do this by a bounded search tree procedure as follows.
Find out a vertex $u \in V(G)$ the degree of which is at least $\eta + 1$.
This is a polynomial-time procedure and can be done by iterating every vertex one by one and its adjacency list.
Construct the set $P_u$ by adding $u$ and its arbitrary $\eta + 1$ neighbors.
We branch over the vertex subset $P_u$ containing $\eta + 2$ vertices and enumerate all minimal degree-$\eta$-deletion sets of size at most $k$ of $G$.
This procedure takes $(\eta+2)^{k} n^{\OO(1)}$-time.
Consider any arbitrary minimal $\eta$-degree deletion set $X$ of $G$.
Observe that in the graph $G - X$, every vertex has degree at most $\eta$ and $|X| \leq k$.
Therefore, $G$ is a $(k + \eta)$-degenerate graph and $G - X$ is an $\eta$-degenerate graph.
Hence, any sequence of vertices of $G - X$ is an $\eta$-degeneracy sequence.

For every minimal $\eta$-degree deletion set $X$, our `expand stage' works as follows.
We invoke the algorithm by Theorem \ref{thm:main-result} with input $(G, X, k)$ to check whether $X$ can be extended to a subset that induces a $p$-edge-connected subgraph or not.
Observe that this algorithm by Theorem 1 runs in $2^{\OO(pk + k + \eta)}n^{\OO(1)}$-time since the degeneracy of the graph is at most $k + \eta$.
Hence, the expand stage runs in $2^{\OO(pk + k + \eta)}n^{\OO(1)}$-time.
As $pk + k + \eta$ is $\OO(pk + \eta)$, running time of this expand stage is $2^{\OO(pk + \eta)}n^{\OO(1)}$.
Note that the enumeration stage runs in $(\eta+2)^{k} n^{\OO(1)}$-time, there are $(\eta + 2)^k$ minimal degree-$\eta$-deletion sets, and for every degree-$\eta$-deletion set, the expand stage works in $2^{\OO(pk + \eta)}n^{\OO(1)}$-time.
Therefore, the total running time of the algorithm is $2^{\OO(pk + \eta)} (\eta + 2)^k n^{\OO(1)}$-time.
As asymptotically $\eta + 2 < pk + \eta$, therefore the overall running time of the algorithm is $2^{\OO(pk + \eta)}n^{\OO(1)}$.

If for each of the minimal degree-$\eta$-deletion set $X$, the algorithm by Theorem \ref{thm:main-result} outputs a no-instance, then we say that $(G, k)$ is a no-instance for {\PEBDDS}.
Otherwise, for some minimal $\eta$-degree deletion set $X$, the algorithm by Theorem \ref{thm:main-result} outputs a set $S \supset X$ such that $G[S]$ is $p$-edge-connected.
{Correctness is guaranteed by the fact that any minimal solution of {\pbdds} must contain an inclusion-wise minimal vertex subset $X$ such that $G - X$ is a graph of maximum degree at most $\eta$.}
Since, all the graphs of degree at most $\eta$ is a hereditary graph class, this algorithm works correctly.
%Moreover, due to Theorem \ref{thm:main-result}, the `expand stage' runs in $2^{\OO(pk + \eta)}n^{\OO(1)}$-time.
This completes the proof.
\end{proof}

\subsection{Singly Exponential FPT Algorithm for {\ppwoneds}}
\label{sec:ppwonevds}

Now, we describe a singly exponential time algorithm for {\PEPWOneVd} problem using Theorem \ref{thm:main-result}. We formally define the problem as follows.

\defparproblem{{\ppwoneds}}{An undirected graph $G = (V, E)$ and an integer $k$.}{$k$}{Is there $S \subseteq V(G)$ of size at most $k$ such that $G[S]$ is $p$-edge-connected and ${\sf pw}(G - S) \leq 1$?}

We introduce a graph $T_2$ that is illustrated in Figure \ref{fig:T2-hairs}. This graph $T_2$ is also known as {\em long-claw} in literature.
We use the following two characterizations of graphs of pathwidth one.

\begin{figure}[t]
\centering
\captionsetup{justification = centering}
	\includegraphics[scale=0.25]{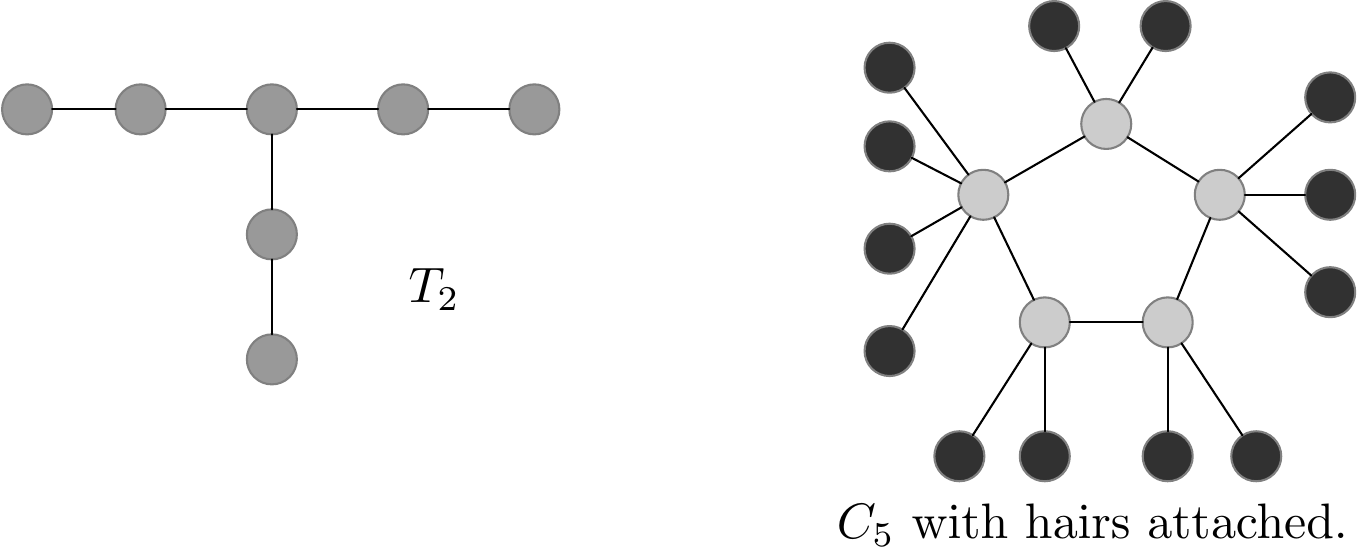}
	\caption{An illustration of $T_2$ (long-claw) and a cycle with hairs attached.}
\label{fig:T2-hairs}
\end{figure}

\begin{proposition}
\label{prop:pw-1-obs-1}
\cite{BurrageEFLMR06,FellowsL89}
A graph $G$ has pathwidth at most one if and only if it does not contain a cycle or $T_2$ as a subgraph.
\end{proposition}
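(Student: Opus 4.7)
The plan is to prove both directions separately, relying on the subgraph-monotonicity of pathwidth and the well-known structural characterization that pathwidth-one graphs are precisely the caterpillar forests.

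For the forward direction, I would show that both cycles and $T_2$ have pathwidth at least $2$, so neither can occur as a subgraph of $G$ whenever ${\sf pw}(G) \leq 1$. For a cycle $C_\ell$ with $\ell \geq 3$, any path decomposition must contain a bag of size at least $3$: otherwise each bag has size at most $2$, and considering for each vertex the leftmost and rightmost bags containing it yields a linear ordering incompatible with all the cyclic adjacencies being simultaneously covered by bags. For $T_2$ (the minimal non-caterpillar tree, obtained by subdividing each edge of $K_{1,3}$ once), a similar argument forces a bag containing the central branching vertex together with an interior vertex from at least two of its three arms, giving width at least $2$.

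For the backward direction, the absence of cycles forces $G$ to be a forest, so it suffices to prove that a forest without a $T_2$-subgraph has pathwidth at most $1$. I would use the classical fact that a tree $T$ is a caterpillar if and only if deleting all its leaves yields a path (possibly empty). The contrapositive strategy: if a component $T$ of $G$ is not a caterpillar, then leaf-stripping produces a tree $T'$ which is not a path, so $T'$ contains a vertex $v$ of degree at least $3$. The three $T'$-edges incident to $v$ extend inside $T$ to three internally disjoint paths of length at least $2$ from $v$ to leaves of $T$, and together they realize a subdivided $K_{1,3}$, i.e. a $T_2$-subgraph of $G$, contradicting the hypothesis. Conversely, a caterpillar with spine $v_1, \ldots, v_s$ admits a width-$1$ path decomposition via the bag sequence $\{v_1, v_2\}, \{v_2, v_3\}, \ldots, \{v_{s-1}, v_s\}$, into which one inserts an auxiliary bag $\{v_i, h\}$ between consecutive spine bags for each hair $h$ attached to $v_i$; concatenating such decompositions handles caterpillar forests.

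The main technical obstacle is the structural extraction step in the backward direction: one must carefully ensure that when a non-caterpillar tree is leaf-stripped to obtain a branching vertex $v$, the three paths from $v$ to leaves of $T$ are actually internally disjoint and each have length at least $2$, so that their union is genuinely isomorphic to $T_2$. Edge cases — such as $T'$ having very small diameter, or $v$ being adjacent in $T'$ to another branching vertex — must be handled by choosing $v$ to be a branching vertex that is extremal in $T'$ (for example, closest to a leaf of $T'$), which guarantees that along each of the three chosen arms one can traverse at least two edges in $T$ before reaching the next branching vertex, yielding the required $T_2$-subgraph.
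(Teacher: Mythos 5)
The paper does not prove this proposition at all: it is imported verbatim from the cited literature (Fellows--Langston and Burrage et al.) and used as a black box, so there is no in-paper argument to compare against. Your proposal is a correct, self-contained proof along the standard lines (pathwidth-one graphs are exactly the caterpillar forests), and both directions are sound. In the forward direction your obstruction bounds are only sketched, but they are right and completable: for the cycle one can simply note ${\sf pw}(C_\ell)\ge{\tw}(C_\ell)=2$, and for $T_2$ your pigeonhole claim does go through -- in a hypothetical width-one decomposition the three bags witnessing the edges from the centre $c$ to $a_1,a_2,a_3$ are distinct and lie in the interval of $c$, while the bag for the middle arm's edge $a_2b_2$ lies outside that interval, forcing $a_2$ into a bag containing $c$ and another $a_i$, i.e.\ a bag of size three. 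In the backward direction your argument is complete, but the ``main technical obstacle'' you identify is illusory: once $v$ has three neighbours $u_1,u_2,u_3$ in the leaf-stripped tree $T'$, each $u_i$ is by definition a non-leaf of $T$ and hence has a neighbour $w_i\neq v$ in $T$; since $T$ is acyclic the seven vertices $v,u_i,w_i$ are distinct and the edges $vu_i,u_iw_i$ already form a $T_2$ subgraph, the three arms being internally disjoint because the $u_i$ lie in distinct components of $T-v$. In particular the extremal choice of $v$ (closest to a leaf of $T'$) is unnecessary, and the guarantee you attribute to it (two edges before the next branching vertex) is not needed -- meeting another branching vertex on an arm is harmless since only the first two edges of each arm are used. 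The explicit width-one decomposition of a caterpillar and the concatenation over components are fine.
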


\begin{proposition}
\label{prop:pw-1-obs-2}
\cite{PhilipRV10}
If $G$ is a graph that does not contain any $T_2, C_3, C_4$ as subgraphs, then each connected component of $G$ is either a tree, or a cycle with zero or more pendant vertices (``hairs'') attached to it.
(See Figure \ref{fig:T2-hairs} for an illustration)
\end{proposition}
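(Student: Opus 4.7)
The plan is to analyze an arbitrary connected component $C$ of $G$ and split into two cases according to whether $C$ contains a cycle. If $C$ is acyclic, then $C$ is a tree and we are done immediately. Otherwise, $C$ contains at least one cycle, and the forbidden subgraphs $C_3,C_4$ force every cycle in $C$ to have length at least $5$. In this case I would prove two structural claims: first, that $C$ contains exactly one cycle $D$; and second, that every vertex of $V(C)\setminus V(D)$ is a pendant (degree-$1$ vertex) attached directly to a vertex of $D$. Combined, these yield the stated form.

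For uniqueness of the cycle, I would argue by contradiction: if $C$ contains two distinct cycles, then since $C$ is connected their union contains either a \emph{theta} subdivision (two vertices $u,v$ joined by three internally vertex-disjoint paths) or a \emph{handcuffs} subdivision (two vertex-disjoint cycles joined by an internally disjoint path). In the theta case, if some path has length $1$ then it closes with another path of length $\geq 2$ into a $C_3$ or $C_4$; if all three paths have length exactly $2$, the union is $K_{2,3}$ and hence contains $C_4$; in the remaining subcase at most one path has length $2$, and the vertex $u$ together with the first two vertices along each of the three paths forms a $T_2$ subgraph, since the six non-central vertices are pairwise distinct (the only way two of them could coincide is via the shared endpoint $v$, which requires two paths of length exactly $2$). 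The handcuffs case is handled analogously, using either an endpoint of the connecting path or a degree-$3$ vertex of one of the cycles as the center of $T_2$, exploiting that the touching cycle has length $\geq 5$.

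For the second claim, suppose for contradiction there exists $w\in V(C)\setminus V(D)$ that is not a pendant of $D$: either $w$ lies at distance $\geq 2$ from $D$, or $w$ has a neighbor outside $V(D)\cup\{v\}$ for its attachment point $v\in V(D)$. In either case, take a shortest path $P$ from a vertex $v\in V(D)$ to a vertex $w'\notin V(D)\cup\{v\}$, so $P$ has length $\geq 2$ and $v$ is the unique vertex of $P$ on $D$. Then $v$ has three distinct neighbors: its two cycle-neighbors $v_1,v_2$ on $D$ and the first internal vertex $x$ of $P$. Each of these has a further neighbor distinct from $v$ and from the others: $v_1$ has the next cycle-vertex after it on $D$, $v_2$ has the next cycle-vertex on the opposite side, and $x$ has the next vertex along $P$. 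Because $|D|\geq 5$, the two cycle-successors are distinct from each other, from $v$, and from $x$; and because $x\notin V(D)$, the successor of $x$ along $P$ lies off the cycle (or equals $w'$), hence is distinct from the two cycle-successors. This yields $T_2$ as a subgraph, a contradiction.

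The main obstacle is the careful vertex-distinctness bookkeeping in the theta/handcuffs case analysis: one must track exactly when two paths sharing an endpoint can cause the three spider-legs of $T_2$ to collide, and use the girth lower bound $|D|\geq 5$ together with the absence of $C_3,C_4$ to guarantee enough separation. Once this combinatorial casework is in place, both structural claims follow and together they imply that every component of $G$ is a tree or a cycle with hairs.
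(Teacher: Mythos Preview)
The paper does not give its own proof of this proposition: it is stated as a known structural lemma from Philip, Raman and Villanger \cite{PhilipRV10} and used as a black box. So there is nothing in the paper to compare your argument against.

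On its own merits your plan is sound, but two points in the uniqueness-of-cycle step need patching. First, in the theta case your subcase ``some path has length $1$'' does not always yield a $C_3$ or $C_4$: if $uv$ is one of the three paths and the other two both have length at least $4$, every cycle in the theta has length at least $5$. The fix is that in this situation you instead find $T_2$ centred at $u$, with one leg $u$--$v$--(the neighbour of $v$ on $P_2$) and the other two legs going two steps from $u$ along $P_2$ and $P_3$; the length-$\geq 4$ assumption keeps all seven vertices distinct. Second, your dichotomy ``theta or handcuffs'' (with handcuffs defined as two \emph{vertex-disjoint} cycles joined by a path) omits the figure-eight configuration, where two cycles share exactly one vertex. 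There the shared vertex has degree $4$ rather than $3$, but choosing any three of its four neighbours together with their cycle-successors still gives a $T_2$, since both cycles have length at least $5$. With these two small repairs your argument is complete.
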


Observe that the graphs of pathwidth one do not have $T_2, C_3, C_4$ as subgraphs.
We prove the following lemma now.
%This gives us the following lemma.

\begin{lemma}
\label{lemma:pw-1-ordering}
Let $G$ be an undirected graph that does not have any $T_2, C_3, C_4$ as subgraphs.
Then, a 2-degeneracy sequence of $G$ can be constructed in polynomial-time.
Moreover, for every connected component $D$ of $G$, a partition $D = C \uplus P$ can be computed in polynomial-time such that $C$ is an induced path (or cycle) and $P$ is the set of pendant vertices attached to $C$.
\end{lemma}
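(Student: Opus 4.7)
The plan is to handle the two claims separately, each reducing to an application of Proposition~\ref{prop:pw-1-obs-2} together with a short structural argument. For the 2-degeneracy claim I would first note that the property of containing no $T_2$, $C_3$ or $C_4$ as a subgraph is closed under taking subgraphs, so Proposition~\ref{prop:pw-1-obs-2} applies to every subgraph $H \subseteq G$; each component of $H$ is thus a tree or a cycle with hairs. A tree is either a single vertex (degree $0$) or has a leaf (degree $1$), while a cycle-with-hairs component has either a pendant (degree $1$) or, in the absence of hairs, is a plain cycle and so has a vertex of degree $2$. In all cases $H$ contains a vertex of degree at most $2$, and the 2-degeneracy sequence is produced by the standard iterative procedure: repeatedly locate a vertex of degree at most $2$ in the current graph, prepend it to the sequence and delete it. Each iteration takes polynomial time.

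For the partition I would process each connected component $D$ separately. First I test via DFS whether $D$ contains a cycle. If it does, Proposition~\ref{prop:pw-1-obs-2} forces $D$ to consist of a single cycle with hairs, so I set $C$ to be the vertex set of this unique cycle and $P := V(D) \setminus C$; the structural description immediately guarantees that every vertex of $P$ is a pendant attached to $C$. If instead $D$ is a tree, I would first argue that $D$ must be a caterpillar: otherwise the tree $D'$ obtained from $D$ by deleting all leaves contains a vertex $v$ of degree at least $3$ in $D'$, and each of its three neighbors in $D'$ has a further neighbor in $D$ (since they are non-leaves of $D$), exhibiting $T_2$ as a subgraph of $D$ --- contradicting the hypothesis. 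In the tree case I then set $C$ to be the set of vertices of degree at least $2$ in $D$ (equivalently, the spine of the caterpillar, which induces a path) and $P$ to be the leaves; the degenerate cases $|V(D)| \le 2$ are handled by declaring $C = V(D)$ and $P = \emptyset$ so that $C$ is trivially an induced path.

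The main subtlety will be the tree case, where I must justify that the set of non-leaves induces a path --- this is the defining property of a caterpillar and follows from the $T_2$-freeness argument sketched above, combined with the observation that paths in trees are automatically induced --- and where the tiny-tree corner cases need to be dispatched carefully so as to meet the induced-path-plus-pendants specification. All the operations involved (cycle detection, leaf identification and degree bookkeeping) are clearly polynomial-time, so both a 2-degeneracy sequence of $G$ and the partition $D = C \uplus P$ for each component $D$ can be produced in polynomial time, completing the proof.
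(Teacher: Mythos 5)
Your proof is correct and follows essentially the same route as the paper's: both rest on Propositions~\ref{prop:pw-1-obs-1} and~\ref{prop:pw-1-obs-2} to conclude that every component is a caterpillar or a cycle with hairs, from which both the 2-degeneracy ordering and the spine/pendant partition follow. You merely fill in details the paper asserts without proof (the $T_2$-based caterpillar argument and the small-component corner cases) and obtain the ordering via the generic minimum-degree-removal procedure instead of the paper's explicit ``cycle/spine first, then hairs'' ordering, which is an equally valid implementation of the same idea.
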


\begin{proof}
Let the premise of the statement be true.
Since $G$ has no $T_2, C_3, C_4$ as subgraphs, due to Proposition \ref{prop:pw-1-obs-2}, every connected component of $G$ is a tree or a cycle with hairs attached to it.
Since a tree cannot also have $T_2$, every tree component is a caterpillar and it is a 1-degenerate graph.
Similarly, if a component has a cycle then it is a 2-degenerate graph.

Consider a component $D$ that has a cycle.
Let $D$ be a connected component of $G$ and $D = C \uplus P$ such that $C$ is an induced cycle and $P$ is the collection of pendants (hairs) attached to it.
For these components, we can order the vertices participating in the cycle first followed by the hairs attached to it.
For example, let $(v_1,\ldots,v_t)$ be a cycle $C$ and the hairs  attached to $v_i$ are $u_i^1,u_i^2,\ldots,u_i^{\ell_i}$ for all $1 \leq i \leq t$.
Let us consider two ordered tuples $C = (v_1,\ldots,v_t)$ and $P = (\cup_{j=1}^{\ell_i} u_i^j \mid 1 \leq i \leq t)$.
Then, we first put the vertices of $C$ (in the order $v_1,\ldots,v_t$), then we put the vertices of $P$ (in the order $(\cup_{j=1}^{\ell_i} u_i^j \mid 1 \leq i \leq t)$) gives us a 2-degeneracy sequence.
This satisfies the properties of a 2-degeneracy sequence and using standard graph traversal technique they can be computed in polynomial-time.
\end{proof}

It follows from Proposition \ref{prop:pw-1-obs-1} and Proposition \ref{prop:pw-1-obs-2} that any pathwidth one vertex deletion set must intersect all the subgraphs $T_2, C_3, C_4$ of a graph.
But, once we have a set $X$ such that $G - X$ has no $T_2, C_3, C_4$ as subgraphs, then there are some connected components of $G - X$ that can have cycles.
In particular, due to Proposition \ref{prop:pw-1-obs-2}, it holds that if a connected component of $G - X$ has a cycle, then it must be a cycle with some (possibly empty set of) pendant hairs attached to it.
Then, we would need to find $S \supset X$ such that $G[S]$ is $p$-edge-connected and $S$ contains at least one vertex from each of these cycles.
This requires us to design an algorithm that uses the ideas similar to Lemma \ref{lemma:generic-table-entry-computation} but also has to satisfy an additional condition.
We state the following lemma and give a proof for completeness.

%\todo[inline]{Diptapriyo: this lemma proof has to be cleaned.}

\begin{lemma}
\label{lemma:pw-1-extension-part}
Let $G = (V, E)$ be an undirected graph and $X \subseteq V(G)$ such that $G - X$ has no $T_2, C_3, C_4$ as subgraphs and $|X| \leq k$.
Then, there exists an algorithm that runs in $2^{\OO(pk)}n^{\OO(1)}$-time and computes $S \supseteq X$ of size at most $k$ such that $G - S$ has pathwidth at most one and $G[S]$ is a $p$-edge-connected subgraph.
\end{lemma}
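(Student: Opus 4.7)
The plan is to adapt the dynamic-programming framework of Lemma~\ref{lemma:generic-table-entry-computation} so that it additionally enforces the requirement that $S$ destroys every cycle of $G - X$. Since $G - X$ has no $T_2, C_3$, or $C_4$, Proposition~\ref{prop:pw-1-obs-2} implies that each connected component of $G - X$ is either a caterpillar or a cycle (of length at least $5$) with pendant hairs, and $G - S \subseteq G - X$ inherits this property for every $S \supseteq X$. By Proposition~\ref{prop:pw-1-obs-1}, $G - S$ has pathwidth at most $1$ iff it is acyclic, so the only extra constraint on $S$ beyond those handled in Section~\ref{sec:p-edge-steiner} is that $S$ intersects the cycle of every cycle-containing component of $G - X$. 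Let $D_1, \ldots, D_r$ be the cycle-containing components of $G - X$ with cycles $C_1, \ldots, C_r$; since these cycles are pairwise vertex-disjoint and $|S| \leq k$, we immediately return no-instance if $r > k$.

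By Lemma~\ref{lemma:pw-1-ordering}, I compute in polynomial time a $2$-degeneracy sequence $\sigma$ of $V(G) \setminus X$ that enumerates the components of $G - X$ consecutively and, inside each component, lists its cycle/path vertices before its pendant hairs. The DP of Lemma~\ref{lemma:generic-table-entry-computation} is then replayed on $\sigma$, but I augment each slot $\TT[((i, j, q, Y), (Z, \ell))]$ with a bit $b \in \{0, 1\}$ whose intended semantics is: $b = 1$ precisely when the component of $G - X$ containing $v_j$ is either cycle-free or has already contributed a cycle vertex to $V(I)$. Transitions mirror those of Lemma~\ref{lemma:generic-table-entry-computation}, with two additions. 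First, when $v_{j'}$ and $v_j$ lie in the same component $D_h$, the flag is updated by $b \leftarrow b' \vee [v_j \in C_h]$; when they lie in distinct components, the transition is refused unless $b' = 1$, and the new flag is reset according to whether $v_j$ belongs to the cycle of its new component. Second, every cross-component transition must verify (via a one-time $\mathcal{O}(n^2)$ precomputation over $\sigma$) that no cycle-containing component is \emph{entirely} skipped between indices $j'$ and $j$; symmetric checks apply at the initial state (no cycle-containing component precedes the first selected index) and the final read-out (no cycle-containing component succeeds the last selected index).

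The analogue of Lemma~\ref{lemma:edge-connectivity-guarantee} goes through because the flag $b$ depends only on $V(I)$ and the precomputed component structure, so Proposition~\ref{prop:rep-set-compuatation}, applied separately within each $(Y, Z, \ell, b)$ slot, still yields a $q$-representative family with the correct semantics. A non-empty entry of the form $\TT[((k - |X|, j, 3p(k-1), Y), (Z, \ell), 1)]$ with $v_j$ in the final cycle-containing component therefore certifies a valid $S$, and conversely any valid $S$ populates such an entry. Since the flag doubles the state count and adds only polynomial preprocessing, the running-time bound of Lemma~\ref{lemma:generic-table-entry-computation} is preserved, giving $2^{\mathcal{O}(pk)} n^{\mathcal{O}(1)}$. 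The principal obstacle is handling cycle-containing components that are entirely skipped between two consecutive selected indices, since a local transition would otherwise never notice that a cycle is left unhit; capturing this with a single flag per state -- rather than one bit per component -- relies crucially on the fact that $\sigma$ processes each component contiguously, so the hitting constraint for a component is finalised exactly at the moment its vertices stop appearing in $\sigma$.
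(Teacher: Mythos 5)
Your proposal is correct and takes essentially the same route as the paper: the same component-contiguous, cycle-vertices-first $2$-degeneracy ordering from Lemma~\ref{lemma:pw-1-ordering}, the same matroid/representative-families DP of Lemma~\ref{lemma:generic-table-entry-computation}, and the same positional enforcement that no cycle of $G-X$ is left unhit, with the subfamily property of Proposition~\ref{prop:rep-set-compuatation} preserving the extra hitting constraint through the pruning step. The only difference is bookkeeping: you store the ``current component's cycle already hit'' information in an explicit state bit, whereas the paper encodes exactly the same information in its transition conditions (no cycle wholly contained in the skipped index range, and the first selected pendant of a cycle component must follow a selected cycle vertex of that component), which is an immaterial variation.
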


\begin{proof}
There are two parts in the proof of this lemma.
The first part of the proof is designing an algorithm that constructs a set of table of entries using dynamic programming over representative families of linear matroid.
The second part is to prove that these constructed table entries are sufficient to correctly find out our desired set (if exists).

We first describe the procedure to construct the table entries using dynamic programming.
In the first stage, we construct a linear matroid $M$, and its linear (matrix)  representation that satisfies the properties of Proposition \ref{prop:matroid-needed-for-us}.
Observe that the input graph $G$ has degeneracy at most $k + 2$.
Proof of this lemma has a lot of similarities with that of Lemma \ref{lemma:generic-table-entry-computation}.
%We provide a sketch with differences only.
Next step is to invoke Lemma \ref{lemma:pw-1-ordering} to obtain an ordering of the vertices $v_1,\ldots,v_{n'}$ of $G - X$ that gives us a 2-degeneracy sequence.
In particular, if $D_1,\ldots,D_m$ are the components of $G - X$, we order the vertices component-wise.
It means that first we put a 2-degeneracy sequence of a component $D_1$, then we put a 2-degeneracy sequence of $D_2$ and so on.
Moreover, if $D_i = C_i \uplus P_i$ for an induced cycle $C_i$ as in Lemma \ref{lemma:pw-1-ordering}, then we first put the vertices of $C_i$ followed by the vertices of $P_i$.
%We assume that the premise of the statement holds true.
We assume that $G - X$ has no $T_2, C_3, C_4$ as induced subgraphs and $|X| \leq k$ (as per the premise of this lemma statement).

We now provide a dynamic programming algorithm for the above problem.
We use representative sets to check if there exists $S \supseteq X$ of size at most $k$ such that $S$ intersects $C_i$ in at least one vertex for every $i$.
In particular, we fix an arbitrary vertex $v_r \in X$ and see if there is an independent set $I$ in $M$ such that
\begin{itemize}
	\item $X \subseteq V(I)$,
	\item $|V(I)| = k$,
	\item $|I| = 3p(k-1)$,
	\item every arc that is represented in $I$ precisely appears as its copies in three matroids $M_{2i-1}$, $M_{2i}$ and $M_{2p+1}$ for some $i \in [p]$.
	\item $V(I)$ intersects every $C_i$ in at least one vertex.
\end{itemize}

Observe that the fifth condition is an extra condition compared to what we had in Lemma \ref{lemma:generic-table-entry-computation} proof.
The dynamic programming follows the 2-degeneracy sequence $v_1,\ldots,v_{n'}$ be the ordering of the vertices.
For $i \leq k - |X|, j \leq n', q \leq 3p(k-1)$ and $\ell > j$, we define table entries $\TT[(i, j, q, Y),(Z,\ell)]$ that will contain a collection of independent sets $I$ of $M$ such that the following conditions are satisfied.

\begin{itemize}
	\item $|V(I) \setminus X| = i$,
	\item $|I| = q$,
	\item the largest index in $V(I)$ that occurs in $V(I)$ is $j$
	\item $Y = A_j \cap V(I)$,
	\item $Z = A_{\ell} \cap V(I)$ for any $\ell > j$, and
	\item $V(I)$ intersects every cycle that contains $v_j$ or appears before $v_j$.
\end{itemize}

%oreover,  for any $I \in \TT[((i, j, q, Y),(Z,\ell))]$ and  for any $I \in \TT[((i, j, q, Y),(Z,\ell))]$.
In the above table entry if $v_j$ is a vertex of some cycle, then $v_j$ can have at most two neighbors.
So, there are 4 possible choices of subsets $Y$.
If $v_j$ is a pendant vertex or the connected component of $G - X$ containing $v_j$ is a caterpillar, then $v_j$ has only one neighbor and there are only two possible choices of $Y$.
Moreover, accordingly, there are at most 4 possible choices of the set $Z$.
Every arc in $I$ occurs exactly in three copies $M_{2i-1}, M_{2i}$ for some $1 \leq i \leq p$ and one copy in $M_{2p+1}$.

In the first step, we construct the table entries $\TT[(0, 0, q, \emptyset),(\emptyset, \ell)]$ for all $1 \leq \ell \leq n'$ and for all $q = 3i$ where $i \in [p(k-1)]$.
The process is the same way as we had done in Lemma \ref{lemma:generic-table-entry-computation}.
\begin{itemize}
		\item For all $1 \leq \ell \leq n'$, we initialize $\TT[(0, 0, 0, \emptyset),(\emptyset, \ell)] = \{ \emptyset \}$.
	
	\item Consider the set of all arcs in $D_G[X]$. We construct $\TT[(0, 0, q+3, \emptyset),(\emptyset, \ell)]$ from $\TT[(0, 0, q, \emptyset),(\emptyset, \ell)]$ as follows.
		For every $I \in \TT[(0, 0, q, \emptyset),(\emptyset, \ell)]$, for every arc $a \in D_G[X]$, ($1 \leq \ell \leq n'$), and $i \in \{1,\ldots,p\}$, we add $I \cup F_{a, i}$ such that $F_{a,i}$ extends $I$.
		
	\item Next, we invoke Proposition \ref{prop:rep-set-compuatation} to reduce $\TT[(0, 0, q+3, \emptyset),(\emptyset, \ell)]$ into a $(3p(k-1) - q - 3)$-representative family of size $2^{\OO(pk + \eta)}n^{\OO(1)}$.
\end{itemize}

%First, we have to construct the entries in the table slots $\TT[(0, 0, q, \emptyset),(\emptyset, \ell)]$ for every $q = 3i$ ($i \in [p(k-1)]$ and $\ell \in [n']$) is same as in the algorithm of Lemma \ref{lemma:generic-table-entry-computation}.

By arguments similar to the proof of Lemma \ref{lemma:generic-table-entry-computation}, this procedure takes $2^{\OO(pk)}n^{\OO(1)}$-time.

The process to construct the entries $\TT[(i, j, q, Y), (Z,\ell)]$ is little more complex.
For every $i \geq 1$, we consider a lexicographic ordering of the indices $(i, j, q, \ell)$.
When we compute the table entries of use the table entries of $\TT[(i, j, q, Y),(Z,\ell)]$, we assume that for every $i' < i, j' < j, q' < q, \ell' < \ell$, the table entries $\TT[(i', j', q', Y'),(Z',\ell')]$ have been already computed.
We say that $I' \in \TT[(i', j', q', Y'), (Z', \ell')]$ is {\em extendable} to a set $I \in \TT[(i, j, q, Y),(Z,\ell)]$, if $i = i' + 1$, $Z' = Y \subseteq V(I')$, $\ell' = j$, and $Z \subseteq V(I') \cup \{v_j\}$.
In addition, $j$ is the next larger index vertex included in $V(I)$ after $v_{j'}$.
But in order to capture the condition that our table entries ensure at least one vertex from every cycle intersecting $v_j$ or appearing before $v_j$ is picked into $V(I)$, we have to carefully consider the following cases.

\begin{description}
	\item[Case (i):] The first case is when $v_j$ is in a cycle, i.e. $v_j \in C$.
	In such a case, the definition the table entry of $\TT[(i, j, q, Y),(Z,\ell)]$ ensures that $v_j$ is already added as part of $V(I)$ for any $I \in \TT[(i, j, q, Y),(Z,\ell)]$.
	%Let $d$ be the number of arcs incident on $v_j$ and the other endpoints have lower indices or in $X$.
	
	Two subcases also arise.
	\medskip
	
	{\bf Subcase-(i):} The first subcase is $C$ is the first cycle appearing in this $2$-degeneracy sequence of $G - X$.
	We want to include $v_j$ into $V(I)$.
	% such that $|V(I) \setminus X|$.
%	\todo[inline]{Diptapriyo: definition extendable has to be done again here. My edit has to resume from here.}
	In this subcase, consider $I' \in \TT[(i', j', q', Y'),(Z',\ell')]$ which satisfies that $Y' = V(I') \cap A_{j'}$ and $Z = V(I') \cap A_{\ell'}$.
	%\todo[inline]{Diptapriyo: my edit has to resume from here.} 
	%
	We construct the table entries of $I \in \TT[(i, j, q, Y),(Z,\ell)]$ using the table entries $I' \in \TT[(i', j', q', Y'),(Z',\ell')]$ as follows.
	\begin{itemize}
		\item Let $d$ be the number of arcs that are incident to $v_j$ such that either the other endpoints are is in $X$ or has lower indices.
			Formally, we consider the arcs that are one of the two types.
			The first type of arcs are of the form $(v_{j''}, v_j), (v_j, v_{j''})$ such that $j'' < j$, and $v_{j''} \in V(I')$.
			The second type of arcs are of the form $(u, v_j), (v_j, u)$ for some $u \in X$.
			
		We consider the arcs that are one of the two types.
		First type of arcs are of the form $(v_j, v_{j''}), (v_{j''}, v_j) \in A_E$ such that $v_{j''} \in V(I')$.
		The second type of arcs are of the form $(u, v_j), (v_j, u)$ such that $u \in X$.
		As $d \leq 2(|X| + 2)$, for every arc $a$ of one of these two types incident to $v_j$, we consider the sets $F_{a,h}$ for $1 \leq h \leq p$.
		Either we decide to add $F_{a, h}$ into $F$ or we do not add it at all.
		This ensures that there are $(p+1)^{2|X| + 4}$ choices of $F$.
		
		\item For every such nonempty arc set $F$, and every $I' \in \TT[(i', j', q', Y'),(Z',\ell')]$ if $F$ extends $I'$ in $M$, $Y = Z' \subseteq V(I')$, $\ell' = j$, $q = q' + |F|$,  and $Z \subseteq V(I') \cup \{v_j\}$, then we add $I' \cup F$ into $\TT[(i, j, q, Y), (Z, \ell)]$.
		It is clear from this process that $I'$ is extendable to a set in $\TT[(i, j, q, Y), (Z, \ell)]$ and $q = q' + |F|$. 
	\end{itemize}
	% and for every such nonempty arc set $F$ incident to $v_j$ satisfying .
%	Then we add $I' \cup F$ into $\TT[((i, j, q, Y),(Z,\ell))]$.
	
	\medskip
	
	{\bf Subcase-(ii):} The second subcase is $C$ is not the first cycle in this $2$-degeneracy sequence of $G - X$.
	In that subcase, we look at $I' \in \TT[(i', j', q', Y'),(Z',\ell')]$ with $Y \subseteq V(I')$ that is extendable to a set in $\TT[(i, j, q, Y),(Z,\ell)]$, i.e. $Z' = Y$, $\ell' = j$, and $Z \subseteq V(I') \cup \{v_j\}$.
	%In particular, $j$ is the next larger index after $j'$ planned to be included into .
	% for every such nonempty arc set $F$ incident to $v_j$ satisfying $i = i' + 1$.
	It must satisfy a crucial condition.
	Note that $v_{j'}$ is a vertex appearing before $v_j$ in the sequence and there is no vertex $v_{j''}$ such that $j' < j'' < j$, and a nonempty set of arcs is incident to $v_{j''}$ is chosen to be part of $I'$.
	What it means is that the last vertex spanned by the set $I'$ is $v_{j'}$.
	Therefore, it is crucial that there is no cycle $C'$ that is disjoint from $\{v_{j'}, v_j\}$ and uses a set of vertices containing only the indices from $\{j'+1,\ldots,j-1\}$.
	In other words, no other cycle $C'$ should appear after $v_{j'}$ and before $C$.
	Hence, if $C'$ is the previous cycle of $C$ (appearing in the $2$-degeneracy sequence), then $v_{j'}$ must appear either in $C'$ or after $C'$ but before $v_j$.
	We choose only those possible vertices $v_{j'}$ and $I' \in \TT[(i', j', q', Y'),(Z',\ell')]$ with $Y \subseteq V(I')$ that is extendable to a set in $\TT[(i, j, q, Y),(Z,\ell)]$, i.e. $Z' = Y$, $\ell' = j$ and $Z \subseteq V(I') \cup \{v_j\}$.
	
	\begin{itemize}
		\item Let $d$ be the number of arcs incident to $v_j$ that are one of the two types.
		The first type of arcs are of the form $(u, v_j), (v_j, u) \in A_E$ such that $u \in X$.
		The second type of arcs are of the form $(v_{c}, v_j), (v_j, v_c) \in A_E$ such that $c \leq j'$ and $v_c \in V(I')$.
		
		\item For every arc $a$ of the above two types incident to $v_j$, we consider the set $F_{a, h}$ for $1 \leq h \leq p$.
		Either we decide to add $F_{a,h}$ into $F$ or we decide not to add $F_{a, h}$ at all.
		As $d \leq 2(|X|+ 2)$, there are $(p + 1)^{2|X| + 4}$ choices of $F$.
	\end{itemize}
	For every nonempty set of arcs $F$ incident to $v_j$, and every $I' \in \TT[(i', j', q', Y'), (Z', \ell')]$, if $F$ extends $I'$ in $M$, and $q = q' + |F|$, then we add $I' \cup F$ to $\TT[(i, j, q, Y), (Z, \ell)]$.
	%It is clear that that is extendable to a set in $\TT[(i, j, q, Y), (Z, \ell)]$, 
		
	In both the above two subcases, after we compute $\TT[(i, j, q, F), (Z, \ell)]$, we invoke Proposition \ref{prop:rep-set-compuatation} to shrink the size of $\TT[(i, j, q, Y), (Z, \ell)]$ to a $(3p(k-1) - q)$-representative family containing at most $2^{3p(k-1)}$ sets.
	Since only nonempty arc sets that are incident to $v_j$ are added into $I'$, the definition of $(3p(k-1) - q)$-representative family ensures that even after invoking Proposition \ref{prop:rep-set-compuatation}, the collection of arcs incident on $v_j$ is nonempty.
	
	\item[Case (ii):] The other case is when $v_j$ is a pendant vertex in a component that contains a cycle $C$.	
	%\todo[inline]{Diptapriyo: my edit resumes from here.}
	%	Then, it must be that either $v_{j'} \in C_i$ or $v_{j'} \in P_i$.
	%	In either case, the construction is similar as the previous case.
	
	If $v_j$ is the first pendant vertex in the connected component containing $C$, then we only consider the sets $I' \in  \TT[(i',j',q',Y'),(Z',\ell')]$ such that $v_{j'} \in C$, and $I'$ is extendable to a set in $\TT[(i, j, q, Y),(Z,\ell)]$.
	What is crucial here is that $v_{j'} \in C$.
		
	If $v_j$ is not the first pendant vertex in the corresponding connected component containing $C$, then the previous vertex $v_{j'}$ must be either a pendant vertex from the connected component containing $C$ or from the cycle $C$ inside the same connected component.
	Additionally, we have to check if $I' \in \TT[(i',j',q',Y'),(Z',\ell')]$ is extendable to a set $I \in \TT[(i, j, q, Y),(Z,\ell)]$, i.e. $i' = i+1$, $Z' = Y$, $Z \subseteq V(I') \cup \{v_j\}$, and $\ell' = j$.
	
	Considering the above two situations, we construct $I \in \TT[(i, j, q, F), (Z, \ell)]$ from $I' \in \TT[(i', j', q', F'), (Z', \ell')]$ as follows.
	
	Consider the set of arcs incident to $v_j$ and are one of the two types.
	 The first type is the set of arcs that have other endpoints in $X$, and the second type is the arcs with other endpoint $v_{j''}$ such that $j'' < j$ and $v_{j''} \in V(I')$.
	 For each of these types of arcs $a$ incident to $v_j$, for every $1 \leq h \leq p$, we  construct a nonempty set of arcs $F$ as follows: we either decide to add $F_{a, h}$ to $F$ for some $h \in [p]$ or we do not add $F_{a, h}$ at all.
	 We add the set of arcs $I = I' \cup F$ into $\TT[(i, j, q, Y),(Z,\ell)]$ when $F$ extends $I'$ in $M$ and $q = q' + |F|$.
	 
	 Since the number of arcs incident to $v_j$ is at most $2(|X| + 2)$, therefore there are $(p+1)^{2|X| + 4}$ choices of $F$.
	 	
%	\end{itemize}
	Finally, we invoke Proposition \ref{prop:rep-set-compuatation} to reduce the size of $\TT[(i, j, q, Y),(Z,\ell)]$ into a $(3p(k-1) - q)$-representative family having at most $2^{\OO(pk)}$ sets.
	
	\item[Case (iii):] The third case is when $v_j$ appears in a component that is a caterpillar graph.
	This case is also similar to the case (ii).
	We must consider an independent set $I' \in \TT[(i', j', q', Y'), (Z', \ell')]$ such that there is no cycle that is contained in the vertices $\{v_{j'+1},\ldots,v_{j-1}\}$; i.e. no cycles appears after $v_{j'}$ and before $v_j$.
	The computation of table entry process is similar to the previous case, and works as follows.
	
	We consider an arc $a$ incident to $v_j$ that is one of the two forms.
	First type of arc is $(v_j, v_c), (v_c, v_j)$ such that $c \leq j'$ and $v_c \in V(I')$.
	The second type of arc is $(v_j, u), (u, v_j)$ such that $u \in X$.
	For each such arc $a$, consider the set $F_{a, h}$ for $1 \leq h \leq p$.
	Either we decide to add $F_{a, h}$ into $F$ or we decide not to add $F_{a, h}$ at all.
	As $d \leq 2(|X| + 2)$, there are $(p+1)^{2|X| + 4}$ choices of $F$.
	
	For every such nonempty arc set $F$, if $F$ extends $I'$ in $M$,  and $q = q' + |F|$, then we add $F \cup I'$ into $\TT[(i, j, q, Y), (Z, \ell)]$.
	Subsequently, we invoke Proposition \ref{prop:rep-set-compuatation} to reduce the number of sets in the table entry $\TT[(i, j, q, Y), (Z, \ell)]$ to $2^{3p(k-1)}$.
	%\todo[inline]{Diptapriyo: should we give the details?}  
\end{description}

%\todo[inline]{$q = q' + |F|$ condition is missing everywhere. Have to fix it.}

It is clear from the above two cases that the slots have to be processed in the lexicographical ordering of $(i, j, q, \ell)$.
The number of table entries is $2^{\OO(pk)}$ and each table entry computation takes $2^{\OO(pk)}$-time.
Finally, once we compute the table entries for the slot $\TT[(k-|X|, j, 3p(k-1), Y), (Z, \ell)]$ and this slot is nonempty, we check if there is any cycle in $G - X$ that is contained in the subgraph induced by the vertices $\{v_{j+1},\ldots,v_{n'}\}$, i.e.  appearing after $v_j$.
If there is one such cycle, then we output no-instance.
Otherwise, we output the vertex subset $V(I)$ as the solution to our problem.

\medskip

Second part of the proof is to argue the correctness of the above algorithm.
The arguments have some similarities with the proof the second item of Lemma \ref{lemma:generic-table-entry-computation}.
We give a short proof sketch that there exists a set $S \supseteq X$ of size $k$ such that $G[S]$ is $p$-edge-connected if and only if the table entry $\TT[(k - |X|, j, 3p(k-1), Y),(Z,\ell)] \neq \emptyset$.

In the backward direction ($\Leftarrow$), suppose that the table entry $\TT[(k - |X|, j, 3p(k-1), Y), (Z, \ell)] \neq \emptyset$.
Consider $I \in \TT[(k - |X|, j, 3p(k-1), Y), (Z, \ell)] \neq \emptyset$.
The construction of $I$ ensures that if an arc $a$ appears in $I$, then no two distinct sets $F_{a, i}$ and $F_{a, i'}$ was added.
Hence, it follows from Proposition \ref{prop:matroid-needed-for-us} that $D_G[V(I)]$ has $p$ pairwise arc-disjoint out-branchings rooted at $v_r$.
Moreover, due to Proposition \ref{prop:high-edge-connectivity}, $G[V(I)]$ is $p$-edge-connected.
This ensures the $p$-edge-connectivity part.
Additionally, the construction of $I$ also ensures that every cycle $C$ from $G - X$ must have at least one vertex added into $V(I)$.
This property is ensured by the fact that for every cycle $C$ in $G - X$ there exists a vertex $v \in C$ such that a nonempty set of arcs incident to $v$ was added.
Since the arc set is nonempty, the invoke of Proposition \ref{prop:rep-set-compuatation} also ensures that a nonempty set of arcs incident to $v$ was retained.
Hence, $I$ also must have a nonempty set of arcs incident to some vertex in every cycle of $G - X$.
This ensures that $V(I)$ is actually a pathwidth one deletion set of $G$ completing the backward direction of the proof.

The forward direction ($\Rightarrow$) works as follows.
{ Let $S \supseteq X$ be a vertex subset such that $|S| = k$, $G[S]$ is $p$-edge-connected and ${\sf pw}(G - S) \leq 1$.}
Due to Proposition \ref{prop:high-edge-connectivity}, $G[S]$ contains $p$ pairwise arc-disjoint out-branchings rooted at $v_r$.
{
It follows from Proposition \ref{prop:matroid-needed-for-us} that there is an independent set $I$ in $M$ such that $S = V(I)$ containing a set of $p(k-1)$ arcs and $|I| = 3p(k-1)$.}
We complete our proof by justifying that $I$ is a candidate for the table slot $\TT[((k - |X|, j, 3p(k-1), Y), (Z, \ell))]$ for some $Z = V(I) \cap A_{\ell}$ and $\ell > j$.
We use similar arguments as Lemma \ref{lemma:generic-table-entry-computation} to prove that there is $Y \subseteq N(v_j) \cap \{v_1,\ldots,v_{j-1}\}$ such that $\TT[(k - |X|, j, 3p(k-1), Y), (Z, \ell)] \neq \emptyset$ and any candidate for this table entry must contain at least one vertex from every cycle in $G - X$. 

We prove this by induction on $i, j, q$.
Since $I$ is a disjoint union of arc sets $F_{a, b}$ for arc $a$ and $b \in [p]$, we order the arcs as per their occurence in the lexicographic ordering of $\TT[(i, j, q, Y), (Z, \ell)]$.
For the table slots $\TT[(0, 0, q, \emptyset),(\emptyset, \ell)]$, the proof is similar to the second item of Lemma \ref{lemma:generic-table-entry-computation}.

Let the vertex set $S \setminus X = \{v_{j_1},\ldots,v_{j_{k - |X|}}\}$ such that $j_1 < j_2 < \ldots < j_{k - |X|}$.
Suppose that {\sf Arc}($v_{j_i}$) be the set of arcs incident on $v_{j_i}$.
Define

$$L_i = \bigcup\limits_{b \in [p], a \in {\sf Arc}(v_{j_i})} (F_{a, b} \cap I)$$

Equivalently, $L_i$ is the set of arcs $F_{a, b}$ that intersects $I$ and $a$ is incident to $v_{j_i}$.
Same as before, we also define $P_i = \bigcup\limits_{c=i}^{k - |X|} L_c$, i.e. the set of all the copies of arcs incident to the vertices $v_{j_i}, v_{j_{i+1}},\ldots,v_{j_{k - |X|}}$ that are contained in $I$.
We are yet to give a proof by induction that - for every $i \in [k - |X|] \cup \{0\}$, the table slot $\TT[(i, j_i, q_i, Y),(Z, j_{\hat i})]$ for some $\hat i > i$ contains a set that extends $P_{i+1}$ and the vertices spanned by that set contains at least one vertex from every cycle intersecting $\{v_1,\ldots,v_{j_i}\}$.

By induction hypothesis, it already holds true that for all $i \leq s$, the slot $\TT[(i, j_i, q_i, Y_i), (Z_{i+1}, j_{i+1})]$ contains a set $K_i$ that extends $P_{i+1}$ and $V(K_i \cup P_{i+1})$ contains at least one vertex from every cycle that intersects $\{v_1,\ldots,v_{j_i}\}$.
In fact, it is implied by this induction hypothesis that $K_i$ contains a nonempty set of arcs incident to $v_{j_i}$.

At the induction step, we consider $i = s+1$.
Note that our algorithm has added a nonempty set of arcs incident to $v_{j_{s+1}}$.
In this stage, the algorithm has ensured that there is no cycle of $G - X$ should appear after $v_{j_s}$ and before $v_{j_{s+1}}$.
Therefore, every cycle intersecting $\{v_1,\ldots,v_{j_{s+1}}\}$ is covered by the algorithm.
A nonempty set of arcs incident to $v_{j_{s+1}}$ and extends $K_s \in \TT[(s, j_s, q_s, Y_s),(Z_{s+1}, j_{s+1})]$ was added to $K_s$ and the same set was added to the table entry $\TT[(s+1,j_{s+1},q_{s+1}, Y_{s+1}),(Z_{s+2}, j_{s+2})]$ such that $Z_{s+1} = Y_{s+1}$.
Also observe that $L_{s+1}$ extends $K_s$ and $K_s \cup L_{s+1} \in \TT[(s+1, j_{s+1}, q_{s+1}, Y_{s+1}),(Z_{s+2}, j_{s+2})]$ before Proposition \ref{prop:rep-set-compuatation} was invoked.
By induction hypothesis, $K_s$ extends $P_{s+1}$ and $L_{s+1} \subseteq P_{s+1}$.
Hence, it follows that $L_{s+1} \cup K_s$ extends $P_{s+2}$.
Finally, due to the correctness of Proposition \ref{prop:rep-set-compuatation}, it follows that there exists $K_{s+1}$ that extends $P_{s+2}$ and $K_{s+1}$ is also a nonempty set of arcs incident to $j_{s+1}$ and $|K_{s+1}| = |K_s \cup L_{s+1}|$.
Moreover, such a set $K_{s+1}$ is retained by the algorithm.
This completes the proof of the forward direction that the table slot $\TT[(k - |X|, j_{k - |X|}, 3p(k-1), Y),(Z, \ell)] \neq \emptyset$ for some $\ell > j_{k - |X|}$ and an independent set in this table slot must contain at least one vertex from every cycle of $G - X$.
This completes the proof of this lemma.
\end{proof}

Using the above lemma, we provide an $2^{\OO(pk)}n^{\OO(1)}$-time algorithm for {\PEPWOneVd} problem as follows.

{\ThmPwOne*}

\begin{proof}
Let $(G, k)$ be an instance of {\PEPWOneVd} problem.
First we enumerate all minimal vertex subsets $X$ of size at most $k$ such that $G - X$ has no $T_2, C_3, C_4$ as subgraphs.
Since $T_2$ has 7 vertices, $C_3$ has 3 vertices and $C_4$ has 4 vertices, it takes $\OO^*(7^k)$-time to enumerate all such subsets the deletion of which results in a graph that has no $C_3, C_4, T_2$ as subgraphs. Let $X$ be one such set such that $G - X$ has no $C_3, C_4, T_2$ as subgraphs.
Due to Propositions \ref{prop:pw-1-obs-1} and \ref{prop:pw-1-obs-2}, if $D$ is a connected component of $G - X$, then either $D$ is a caterpillar, or a cycle with hairs attached to it.
It follows from Lemma \ref{lemma:pw-1-ordering} that there is a polynomial-time algorithm that gives a $2$-degeneracy sequence $\rho$ of the vertices of $G - X$.
Moreover, if $D$ is a connected component of $G - X$, then $\rho$ provides a partition of $D = C \uplus P$ such that $C$ is a cycle and $P$ is the set of hairs attached to $C$.
In particular, the vertices of $C$ are put first, followed by the vertices of $P$ in $\rho$.
Furthermore, putting the vertices of $C$ first followed by the vertices of $P$ gives a 2-degeneracy ordering of $D$. 
For each such subset $X$, we invoke Lemma \ref{lemma:pw-1-extension-part} to give an algorithm that runs in $2^{\OO(pk)}n^{\OO(1)}$-time and outputs $S$ such that $X \subseteq S$ and $G - S$ has pathwidth at most one.
The correctness of this algorithm also follows from the proof of Lemma \ref{lemma:pw-1-extension-part}.
This completes the proof of this theorem.
\end{proof}

\subsection{Singly Exponential Time FPT Algorithms for {\ptdds} and {\ppvc}}
\label{sec:pedgetdds}

Finally, we describe how we can get $2^{\OO(pk)}n^{\OO(1)}$-time algorithm for {\PECBTDS} and {\pEBoundedVC} problems.
Note that both $p$ and $\eta$ are fixed constants.
 We restate the problem definitions below.

\defparproblem{{\PECBTDS} ({\ptdds})}{An undirected graph $G = (V, E)$ and an integer $k$.}{$k$}{Is there $S \subseteq V(G)$ of size at most $k$ such that $G[S]$ is $p$-edge-connected and ${\td}(G - S) \leq \eta$.}

\defparproblem{{\pEBoundedVC}}{An undirected graph $G = (V, E)$ and an integer $k$.}{$k$}{Is there an $S \subseteq V(G)$ with at most $k$ vertices such that $G - S$ has no $P_{\eta}$ as subgraphs and $G[S]$ is $p$-edge-connected?}

A set $S$ is called an {\em $\eta$-path vertex cover} of $G$ if $G - S$ has no $P_{\eta}$ as subgraph.
{\pEBoundedVC} asks to find an $\eta$-path vertex cover with at most $k$ vertices that induces a $p$-edge-connected subgraph.
The following proposition holds true for graphs of treedepth at most $\eta$.

\begin{proposition}(\cite{DvorakGT12})
\label{prop:treedepth-forbidden}
If a graph $G$ has treedepth at least $\eta + 1$, then it has a connected subgraph $H$ such that ${\td}(H) > \eta$ and $|V(H)| \leq 2^{2^{\eta}}$.
\end{proposition}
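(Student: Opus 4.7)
The plan is to prove the statement by induction on $\eta$, writing $f(\eta) = 2^{2^{\eta}}$ for the target bound. The base case $\eta = 0$ is immediate: any graph with ${\td}(G) \geq 1$ contains a vertex, and a single-vertex subgraph $H$ satisfies ${\td}(H) = 1 > 0$ and $|V(H)| = 1 \leq 2 = f(0)$.

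For the inductive step, I would assume the claim for $\eta$ and consider $G$ with ${\td}(G) \geq \eta + 2$. By replacing $G$ with one of its connected components of maximum treedepth, I may assume $G$ is connected. The connected-graph recursion ${\td}(G) = 1 + \min_{v \in V(G)} {\td}(G - v)$ of Definition \ref{defn:treedepth} then forces ${\td}(G - v) \geq \eta + 1$ for every $v \in V(G)$, so each $G - v$ contains some connected component $C_v$ with ${\td}(C_v) \geq \eta + 1$. Applying the inductive hypothesis to $C_v$ supplies a connected subgraph $H_v \subseteq C_v$ with ${\td}(H_v) \geq \eta + 1$ and $|V(H_v)| \leq f(\eta)$.

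The heart of the proof is to amalgamate a bounded collection of these local witnesses into a single connected $H \subseteq G$ with the stronger property ${\td}(H) \geq \eta + 2$, which, by the same recursion, is equivalent to ${\td}(H - u) \geq \eta + 1$ for every $u \in V(H)$. I would initialize $H := H_{v_0}$ for an arbitrary $v_0 \in V(G)$, and while there exists a vertex $u$ of the original $H_{v_0}$ whose removal leaves ${\td}(H - u) < \eta + 1$, attach a fresh copy of $H_u$ (drawn from a component of $G - u$ that meets $H \setminus \{u\}$) together with a minimal bridging path inside $G - u$. Vertices added during enlargement are covered automatically: if $w \notin V(H_{v_0})$, then $H - w$ contains all of $H_{v_0}$ and hence has treedepth at least $\eta + 1$. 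So only the at most $f(\eta)$ vertices of $H_{v_0}$ are processed, each contributing at most $f(\eta)$ additional vertices, giving a total bound of order $f(\eta)^2 = f(\eta + 1)$ as required.

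The main obstacle is controlling the lengths of the bridging paths so that the size stays within $f(\eta + 1)$. A priori, attaching $H_u$ inside $G - u$ to an existing vertex of $H \setminus \{u\}$ may require an arbitrarily long detour, since nothing forces $H_u$ and $H \setminus \{u\}$ to sit near each other in the chosen component of $G - u$. The delicate part of the argument is to show that one can simultaneously choose $H_u$ as the witness for the component of $G - u$ that already meets the current $H$, near a neighbor of $u$ in $H$; equivalently, one strengthens the inductive statement so that the witness returned by the hypothesis has bounded diameter inside its ambient component. Once this is in place, the quadratic blow-up across inductive levels yields the doubly exponential bound $2^{2^{\eta}}$ claimed in the statement.
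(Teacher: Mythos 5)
Your overall skeleton is sound as far as it goes: reducing to connected $G$, using the recursion to get ${\td}(G-v)\geq \eta+1$ for every $v$, targeting the equivalent condition ``${\td}(H-u)\geq \eta+1$ for all $u\in V(H)$'' for connected $H$, and observing that vertices added later are handled for free because $H-w\supseteq H_{v_0}$. But the proof has a genuine gap exactly at the step you flag and then do not close. The inductive hypothesis only guarantees a small witness in \emph{some} component of $G-u$ of treedepth at least $\eta+1$; nothing entitles you to ``draw $H_u$ from a component of $G-u$ that meets $H\setminus\{u\}$'' (that component may well have treedepth at most $\eta$), and nothing places the witness within bounded distance of a neighbour of $u$ in $H$, so your ``minimal bridging path'' inside $G-u$ may contain arbitrarily many vertices. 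Those bridge vertices are part of $V(H)$, so as written the construction gives no bound on $|V(H)|$ whatsoever. The sentence ``one strengthens the inductive statement so that the witness \ldots has bounded diameter inside its ambient component'' is not a fix but a restatement of the missing lemma: what you actually need is a \emph{rooted} version of the proposition (a witness of bounded order within bounded distance of a prescribed vertex of a prescribed component), which is a strictly stronger claim and is never proved; in particular you would have to exclude that every small subgraph of treedepth exceeding $\eta$ in the relevant component lies far from $N(u)\cap V(H)$, and your argument says nothing about this.

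Two further points. Even granting zero-length bridges, the accounting does not deliver the stated constant: with $f(\eta)=2^{2^{\eta}}$ you get $|V(H)|\leq f(\eta)+f(\eta)\cdot f(\eta)$ plus bridge vertices, which already exceeds $f(\eta)^2=2^{2^{\eta+1}}$, so at best you would prove a slightly weaker bound than the one claimed. Finally, note that the paper itself does not prove this proposition but imports it from \cite{DvorakGT12}; the argument there bounds the order of subgraph-minimal obstructions for treedepth directly, rather than assembling a witness by local augmentation, so the missing rooted lemma cannot be discharged by appealing to the cited source either.
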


The above proposition implies that {\sc $\eta$-Treedepth Deletion Set} problem can be characterized as {\sc ${\cH}$-Hitting Set} problem where ${\cH}$ contains only subgraphs of bounded size.
It follows from the definition that {\ppvc} can be formulated in {\sc $p$-edge-connected $\cH$-Hitting Set} problem.
It means that every minimal $p$-edge-connected $\eta$-treedepth deletion set contains a minimal $\eta$-treedepth deletion set and every minimal every minimal $p$-edge connected $\eta$-path vertex cover contains a minimal $\eta$-path vertex cover.
We prove the following lemma that explains how we can construct a collection of all the minimal such solutions of size at most $k$.

\begin{lemma}
\label{lemma:easy-enumerations}
Given a (connected) undirected graph $G = (V, E)$ and an integer $k$, the collection of all minimal $\eta$-treedepth deletion sets and the collection of all minimal $\eta$-path vertex covers can be computed in $2^{2^{\eta}k}n^{2^{2^{\eta}}}$-time and ${2}^{\OO(\eta)}{\eta}^k n^{\OO(1)}$-time respectively.
\end{lemma}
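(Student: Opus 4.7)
The plan is to enumerate both collections via the standard bounded-search-tree paradigm for hitting-set problems with bounded-size obstructions. Since both $\eta$-treedepth deletion sets and $\eta$-path vertex covers can be phrased as hitting sets for a family $\cH$ of forbidden small connected subgraphs, any minimal deletion set of size at most $k$ must intersect every copy of a member of $\cH$ in $G$. So I would recursively find a forbidden copy, branch on its vertices (each branch adds one of them to the partial solution and deletes it from the graph), decrement the budget, and collect the accumulated sets at the leaves where no forbidden copy remains. Finally, a single post-processing sweep discards the sets that are not inclusion-minimal.

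For the $\eta$-treedepth deletion set case, I would invoke Proposition~\ref{prop:treedepth-forbidden}: whenever the current graph $G'$ has $\td(G') > \eta$, it contains a connected subgraph $H$ with $\td(H) > \eta$ and $|V(H)| \le 2^{2^{\eta}}$. To locate such an $H$, I would enumerate every vertex subset $U \subseteq V(G')$ with $|U| \le 2^{2^{\eta}}$ (at most $n^{2^{2^{\eta}}}$ of them) and test connectivity and treedepth of $G'[U]$ in time depending only on $\eta$. Branching on the at most $2^{2^{\eta}}$ vertices of the obstruction gives a search tree of depth at most $k$ and branching factor at most $2^{2^{\eta}}$, for a total of $2^{2^{\eta}k}$ leaves and overall running time $2^{2^{\eta}k}\, n^{2^{2^{\eta}}}$, which matches the claimed bound.

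For the $\eta$-path vertex cover case, the forbidden subgraph is simply $P_{\eta}$, which has $\eta$ vertices. A copy of $P_\eta$ (or a certificate of its absence) can be found in $n^{\OO(1)}$ time for fixed $\eta$ by brute enumeration of $\eta$-tuples. Branching on the $\eta$ vertices of a found copy gives a search tree of depth at most $k$, branching factor $\eta$, and $\eta^k = 2^{\OO(k\eta)}$ leaves, each doing polynomial work; the total running time is $2^{\OO(k\eta)} n^{\OO(1)}$ as required.

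The only conceptual subtlety is justifying that the output collection contains \emph{every} minimal deletion set of size at most $k$, and this is routine: for any minimal solution $S^*$ and any obstruction $W$ that the algorithm inspects, minimality of $S^*$ forces $S^* \cap V(W) \neq \emptyset$, so one of the branches at that node picks a vertex of $S^*$; inducting on the depth, some root-to-leaf path accumulates a superset of $S^*$, and since $|S^*| \le k$ the leaf is reached without running out of budget. The final filter for inclusion-minimality then retains $S^*$ itself. I do not anticipate any genuine obstacle here; the only cost worth tracking carefully is the $n^{2^{2^{\eta}}}$ factor from locating the treedepth obstruction, which is exactly the non-FPT contribution to the first running time bound.
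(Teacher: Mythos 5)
Your approach is the same as the paper's: bounded search tree over bounded-size obstructions (treedepth obstructions on at most $2^{2^{\eta}}$ vertices via Proposition~\ref{prop:treedepth-forbidden}, located by brute force in $n^{2^{2^{\eta}}}$ time, and $P_{\eta}$ for the path-vertex-cover case), followed by the standard argument that every minimal solution of size at most $k$ survives some branch. The treedepth half matches the paper's bound exactly. Two points, however, need attention. First, for the second half you detect a copy of $P_{\eta}$ by brute enumeration of $\eta$-tuples, which costs $n^{\Theta(\eta)}$ per node and hence gives $2^{\OO(k\eta)}n^{\OO(\eta)}$ overall; the stated bound $2^{\OO(k\eta)}n^{\OO(1)}$ (with the polynomial exponent independent of $\eta$, which is how it is used in Theorem~\ref{thm:p-eta-vc-result}) requires the known deterministic $2^{\OO(\eta)}n^{\OO(1)}$-time algorithms for finding a path on $\eta$ vertices (color coding / representative families, as cited in the paper); with that substitution your argument yields the claimed running time.

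Second, your correctness induction is stated backwards: following, at each node, a branch that picks a vertex of the minimal solution $S^*$, the accumulated set $Y$ stays a \emph{subset} of $S^*$ (it is not a superset), and at a leaf $Y$ is itself a deletion set, so minimality of $S^*$ forces $Y=S^*$; thus $S^*$ literally appears as a leaf set, and no appeal to the final minimality filter is needed to ``recover'' it. As written (``some root-to-leaf path accumulates a superset of $S^*$, and the filter retains $S^*$''), the conclusion does not follow: discarding non-minimal leaf sets would not insert $S^*$ into the collection if only a proper superset of it had been generated. Also note that $S^*\cap V(W)\neq\emptyset$ follows from $S^*$ being a deletion set (the obstruction $W$ lives in $G$ minus the current subset of $S^*$, and the target classes are subgraph-closed), not from its minimality. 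Both issues are easily repaired and do not affect the overall strategy, which coincides with the paper's.
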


\begin{proof}
We first prove the statement for enumerating all $\eta$-treedepth deletion sets of $G$.
Due to Proposition \ref{prop:treedepth-forbidden} that the forbidden subgraphs for $\eta$-treedepth graph has at most $2^{2^{\eta}}$ vertices.
Finding an explicit forbidden subgraph takes $\OO(n^{2^{2^{\eta}}})$-time.
Therefore, by bounded search tree method, we can enumerate all the minimal $\eta$-treedepth deletion sets in $2^{2^{\eta}k}n^{2^{2^{\eta}}}$-time.

Similarly, every feasible $\eta$-path vertex cover must intersect all the paths with at most $\eta$ vertices as subgraphs.
%\todo{DM: need to cite the paper by color-coding or representative sets. Maybe paper by Fahad, Meirav etc? need to check.}
%%%%
It is well known that a path with $\eta$ vertices can be found by a deterministic algorithm in $2^{\OO(\eta)}n^{\OO(1)}$-time (see \cite{ChenKLMRiRoSZ09,GoyalMPZ15} for more details).
Therefore, by bounded search tree method, we can enumerate all the minimal $\eta$-path vertex cover sets in $2^{\OO(\eta)}{\eta}^k n^{\OO(1)}$-time.
\end{proof}

The above lemma implies the next two results two additional applications of our main result.

{\ThmTrDepth*}

\begin{proof}
Let $(G, k)$ be an instance of {\ptdds} problem.
%It follows from Proposition \ref{prop:treedepth-forbidden} that any subgraph forbidden for $\eta$-treedepth graph has at most $2^{2^{\eta}}$ vertices.
First, we invoke Lemma \ref{lemma:easy-enumerations} to enumerate all possible minimal $\eta$-treedepth deletion sets of $G$ of size at most $k$.
Since the forbidden subgraphs have at most $2^{2^{\eta}}$ vertices, the collection of all minimal $\eta$-treedepth deletion sets of $G$ can be constructed in $2^{2^{\eta}}n^{2^{2^{\eta}}}$-time.
Since ${\td}(G - X) \leq \eta$, this graph $G - X$ is an $\eta$-degenerate graph.
Then, the degeneracy of $G$ is at most $\eta + k$.
Then, for every minimal $\eta$-treedepth deletion set $X$ of $G$ of size at most $k$, we invoke Lemma \ref{lemma:generic-table-entry-computation} and check if there exists a set $S \supseteq X$ such that $G[S]$ is $p$-edge-connected and $|S| \leq k$.
The correctness of this procedure is guaranteed by the second item of Lemma \ref{lemma:generic-table-entry-computation}.
This expand stage takes $2^{\OO(pk + \eta)}n^{\OO(1)}$-time.
Hence, the fixed-parameter algorithm for {\ptdds} runs in $2^{2^{\eta} + \OO(pk+\eta)}n^{2^{2^{\eta}}}$-time.
\end{proof}

Observe that a graph with no $P_{\eta}$ as subgraph has treedepth at most $\eta + 1$. It means that such a graph also has bounded degeneracy. So, we have the following theorem.

{\ThmBoundedVC*}

\begin{proof}
The algorithm has two stages.
First stage is to enumerate all possible minimal $P_{\eta}$-vertex covers of $G$ with at most $k$ vertices.
This can be done by invoking Lemma \ref{lemma:easy-enumerations} and this step takes at most $2^{\OO(\eta)}\eta^k n^{\OO(1)}$-time.
The second stage is expand stage at which the algorithm checks if every minimal $P_{\eta}$-vertex $X$ cover can be extended to a set $S \supseteq X$ such that $|S| \leq k$ and $G[S]$ is $p$-edge-connected.
Note that $G$ is $2\eta + k$ degenerate.
We invoke Theorem \ref{thm:main-result} to perform this step in $2^{\OO(pk +\eta)} n^{\OO(1)}$-time.
Since there are at most $\eta^k n^{\OO(1)}$ distinct minimal $P_{\eta}$-vertex covers, therefore the overall algorithm runs in $2^{\OO(pk + \eta)} \eta^k 2^{\OO(\eta)} n^{\OO(1)}$-time and correctly solves {\pEBoundedVC} problem.
As asymptotically, ${\eta}^k < 2^{\OO(pk + \eta)}$, therefore, the overall running time of the algorithm is $2^{\OO(pk +\eta)} n^{\OO(1)}$-time.
\end{proof}

%\todo[inline]{Diptapriyo: to look at the edits above.}

\subsection{Singly Exponential FPT Algorithm for Deletion to Pairs of Graph Classes}
\label{sec:scattered-case}

In this section, we illustrate how Theorem \ref{thm:main-result} can be used to obtain singly exponential-time algorithms for {\pgonegtwodel} problem when $\GG_1$ and $\GG_2$ satisfy some special properties.
We restate the problem below.

\defparproblem{{\pgonegtwodel}}{An undirected graph $G = (V, E)$ and an integer $k$.}{$k$}{Is there a set $S$ with at most $k$ vertices such that every connected component of $G - S$ is either in $\GG_1$ or in $\GG_2$ and $G[S]$ is $p$-edge-connected?}

\medskip

Given a hereditary class of graphs $\GG$, we say that $\FF$ is the {\em forbidden family} of $\GG$ if a graph $G \in \GG$ does not contain any $H \in {\FF}$ as induced subgraphs.
If for all $H \in \FF$, the number of vertices in $H$, i.e. $|H|$ is a fixed constant independent of the number of vertices in the input graph, then we say that $\FF$ is a {\em finite forbidden family}.
An undirected graph $G$ is said to be a {\em $P_{\eta}$-free} graph if it has no $P_{\eta}$ as subgraph.
Let $\GG$ be the class of all $P_{\eta}$-free graphs.
Then, the size of any obstruction is at most $\eta$ implying that the number of graphs in $\FF$ should be $f(\eta)$.
So, the class of all $P_{\eta}$-free graphs have finite forbidden family.
Similarly, both the classes of all $\eta$-treedepth graphs and the class of all degree-$\eta$-graphs also have finite forbidden families each.
But observe that the forbidden family for graphs having pathwidth at most one is not finite.
The reason behind this is that a pathwidth one graph cannot have any cycle as subgraph and the length of the cycle is dependent on the number of vertices of the input graph.

\medskip

In this section, we provide fixed-parameter tractability results for {\pgonegtwodel} when ${\GG}_1$ and ${\GG}_2$ satisfy the following conditions.
\begin{enumerate}[(i)]
	\item both $\GG_1$ and $\GG_2$ have finite forbidden families $\FF_1$ and $\FF_2$ respectively,
	\item there is a fixed constant $\lambda$ such that $P_{\lambda} \in \FF_1$,
	\item there are fixed constants $\alpha$ and $\beta$ such that the number of vertices in any $H \in \FF_1$ is at most $\alpha$ and the number of vertices in any $H \in \FF_2$ is at most $\beta$, and
	\item there is a fixed constant $\eta$ such that $\GG_1$ and $\GG_2$ are subclasses of $\eta$-degenerate graphs, 
\end{enumerate}

When $\GG_1$ and $\GG_2$ with forbidden families $\FF_1$ and $\FF_2$ respectively satisfy the above four properties, it follows that the number of forbidden pairs $\FF_1 \times \FF_2$ is also finite.
Given a graph $G$, we look at all the pairs of obstructions in $\FF_1 \times \FF_2$.
We say that $(H_1, H_2) \in \FF_1 \times \FF_2$ is a {\em closest} forbidden pair in $G$ if $G$ contains vertex subsets $J_1$ and $J_2$ isomorphic to $H_1$ and $H_2$ respectively such that for any other pair of vertex subsets $J_1'$ isomorphic to forbidden subgraph $H_1' \in \FF_1$, and $J_2'$ isomorphic to forbidden subgraph $H_2' \in \FF_2$, it holds that ${\rm dist}(J_1, J_2) \leq {\rm dist}(J_1', J_2')$.
We use the following property proved by Jacob et al. \cite{JacobMR23}.

\begin{proposition}\cite{JacobMR23}
\label{prop:scattered-property-with-path}
Let $(H_1, H_2) \in \FF_1 \times \FF_2$ be a closest forbidden pair in $G$ such that $(J_1, J_2)$ be the vertex subsets corresponding to the forbidden pair $(H_1, H_2)$ and $P$ be a shortest path between $J_1$ and $J_2$ in $G$.
Moreover, there is a fixed constant $\lambda$ such that $P_{\lambda} \in \FF_1$.
Then, the number of vertices in $P$ is at most $\lambda$.
\end{proposition}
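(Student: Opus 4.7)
The plan is to argue by contradiction, using two ingredients: (a) a shortest path in any graph is automatically an induced path, and (b) since $P_\lambda \in \FF_1$ by condition (ii), any induced copy of $P_\lambda$ in $G$ is a legitimate candidate for the $\FF_1$-component of a forbidden pair. Suppose, toward a contradiction, that $|V(P)| = \ell > \lambda$, and write $P = (w_0, w_1, \ldots, w_{\ell-1})$ with $w_0 \in J_1$ and $w_{\ell-1} \in J_2$. The goal is to exhibit an alternative pair $(J_1', J_2')$ of vertex sets isomorphic to some forbidden subgraphs in $\FF_1$ and $\FF_2$ respectively whose distance in $G$ is strictly smaller than ${\rm dist}_G(J_1, J_2) = \ell - 1$, contradicting the ``closest'' property.

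First, I would verify that $P$ is induced in $G$: otherwise a chord $w_iw_j$ with $j > i+1$ would yield a strictly shorter $J_1$-$J_2$ walk, violating the minimality of $P$. I would also note that any internal vertex of $P$ lying in $J_1 \cup J_2$ would again allow a shortcut, so $\{w_1, \ldots, w_{\ell-2}\} \cap (J_1 \cup J_2) = \emptyset$. Second, I would construct the desired replacement: put $J_1' := \{w_{\ell-1-\lambda}, w_{\ell-\lambda}, \ldots, w_{\ell-2}\}$, a set of exactly $\lambda$ consecutive vertices of $P$, and set $J_2' := J_2$. Since $P$ is induced, $G[J_1']$ is isomorphic to the forbidden graph $P_\lambda \in \FF_1$; by the disjointness observation above, $J_1' \cap J_2' = \emptyset$; and since $w_{\ell-2} \in J_1'$ is adjacent to $w_{\ell-1} \in J_2'$ in $G$, we obtain ${\rm dist}_G(J_1', J_2') \leq 1$.

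Third, I would close the contradiction by comparing distances. Since $\ell \geq \lambda + 1$ and the statement is trivial when $\lambda \leq 1$ (so we may assume $\lambda \geq 2$), we get ${\rm dist}_G(J_1, J_2) = \ell - 1 \geq \lambda \geq 2 > 1 \geq {\rm dist}_G(J_1', J_2')$, which contradicts the fact that $(H_1, H_2)$ is a closest forbidden pair witnessed by $(J_1, J_2)$. Hence $|V(P)| \leq \lambda$, as desired. The step I expect to require the most care is the inducedness of $P$: this is precisely what guarantees that the contiguous subset $J_1'$ induces a genuine copy of $P_\lambda$ (rather than some denser graph on $\lambda$ vertices that need not belong to $\FF_1$), so without it the comparison would fall apart and one would have to choose $J_1'$ more cleverly among possibly non-consecutive vertices of $V(P)$.
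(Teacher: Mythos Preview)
The paper does not prove this proposition; it merely cites it from \cite{JacobMR23}. Your argument is correct and is precisely the natural one: a shortest $J_1$--$J_2$ path is induced, so if it has more than $\lambda$ vertices then its last $\lambda$ internal-plus-endpoint vertices induce a copy of $P_\lambda\in\FF_1$ sitting at distance at most $1$ from $J_2$, contradicting minimality. One small remark: the case $\lambda=1$ is not literally ``trivial'' but is dispatched by the same idea---take $J_1'$ to be any single vertex of $J_2$, giving a forbidden pair at distance $0$; your own construction $J_1'=\{w_{\ell-2}\}$ fails to yield a strict improvement when $\ell=2$, so this separate observation is indeed needed.
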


If $X \subseteq V(G)$ such that every connected component of $G - X$ is either in $\GG_1$ or in $\GG_2$, we say that $X$ is a {\em $(\GG_1, \GG_2)$-deletion set} of $G$.
We use the above proposition to prove the following lemma.

\begin{lemma}
\label{lemma:enum-min-gonetwo-del}
Let $\GG_1$ and $\GG_2$ be hereditary graph classes satisfying the properties (i), (ii), (iii), and (iv) as stated above.
Given an undirected graph $G = (V, E)$, the collection of minimal $(\GG_1, \GG_2)$-deletion sets of $G$ can be enumerated in $(\alpha + \beta + \lambda)^k n^{\OO(\alpha + \beta + \lambda)}$-time.
\end{lemma}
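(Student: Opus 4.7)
The plan is a bounded search tree branching, driven by Proposition~\ref{prop:scattered-property-with-path}. Given the current instance $(G, X)$ with $|X| \le k$, the algorithm first checks whether every connected component of $G - X$ belongs to $\GG_1$ or $\GG_2$; if so, $X$ is a candidate deletion set and is recorded. Otherwise, there is a connected component $C$ of $G - X$ that contains both a vertex subset $J_1$ inducing some $H_1 \in \FF_1$ and a vertex subset $J_2$ inducing some $H_2 \in \FF_2$. Among all such pairs $(J_1, J_2)$ inside components of $G-X$, we pick one that is a closest forbidden pair in $C$ (in the sense used just before Proposition~\ref{prop:scattered-property-with-path}), and we also take a shortest $J_1$--$J_2$ path $P$ in $C$. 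The branching set is $B := J_1 \cup V(P) \cup J_2$, and we recursively call the algorithm on each $(G, X \cup \{v\})$ for $v \in B$.

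The correctness step is the claim that every minimal $(\GG_1,\GG_2)$-deletion set $S$ of $G$ with $|S|\le k$ appears among the recorded candidates. Induct on $|S \setminus X|$. If $X$ already works, then minimality forces $X = S$. Otherwise, by the choice of the bad component $C$ in $G - X$, the set $S$ must contain some vertex of $C$: indeed, $C$ is not in $\GG_1$ (because of $J_1$) and not in $\GG_2$ (because of $J_2$), so $S$ must hit $C$ in such a way that in $G - S$ the component originating from $C$ no longer simultaneously contains an $\FF_1$-obstruction and an $\FF_2$-obstruction. I claim $S \cap B \ne \emptyset$: if $S$ avoided $B$ entirely, then $J_1$ and $J_2$ would still induce obstructions in $G - S$ (since $\FF_1, \FF_2$ are closed under induced subgraph conditions for a hereditary class), and the path $P$ would still connect them in $G - S$, leaving a connected component of $G - S$ that violates the deletion property. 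So $S$ hits some $v \in B$, and by induction applied to the child $(G, X \cup \{v\})$ and the remaining budget $|S| - |X| - 1$, the set $S$ is eventually recorded.

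For the running-time bound, the branching factor is $|B| \le \alpha + \beta + \lambda$ by Proposition~\ref{prop:scattered-property-with-path}, and the depth is at most $k$ since each recursive call adds a vertex to $X$ and any minimal solution has at most $k$ vertices, yielding at most $(\alpha + \beta + \lambda)^k$ leaves. At each node we must (a) detect a bad component and (b) find a closest forbidden pair together with a shortest path between them. Since every $H \in \FF_1 \cup \FF_2$ has at most $\max\{\alpha,\beta\}$ vertices and $\FF_1, \FF_2$ are finite, we can enumerate all induced copies of obstructions in $G - X$ in $n^{\OO(\alpha + \beta)}$ time by brute force, then compute pairwise distances in polynomial time and take the minimum, and finally extract a shortest connecting path. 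Altogether the per-node cost is $n^{\OO(\alpha + \beta + \lambda)}$, giving the claimed $(\alpha + \beta + \lambda)^k n^{\OO(\alpha + \beta + \lambda)}$ bound. A final pass discards any recorded set that strictly contains another recorded set, leaving exactly the minimal $(\GG_1,\GG_2)$-deletion sets.

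The main obstacle I anticipate is justifying that the ``closest forbidden pair'' step is legitimate at every branching node, because after deleting some vertices the remaining graph $G - X$ can change which obstructions appear and where the shortest connecting path lies; the invariant I rely on is that Proposition~\ref{prop:scattered-property-with-path} applies to the current graph $G - X$ rather than to $G$, and that $\GG_1, \GG_2$ being hereditary ensures that forbidden subgraphs in the current graph remain forbidden after further deletions outside the obstruction, so the induction on $|S \setminus X|$ goes through cleanly. A secondary subtlety is that the condition $P_\lambda \in \FF_1$ is what makes the length bound on the connecting path uniform across all recursive calls, and this must be verified to hold in $G - X$ (it does, since the forbidden family does not depend on $X$).
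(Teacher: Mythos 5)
Your proposal is correct and follows essentially the same route as the paper: a bounded search tree that, at each node, locates a closest forbidden pair $(J_1,J_2)$ together with a shortest connecting path $P$ (of at most $\lambda$ vertices by Proposition~\ref{prop:scattered-property-with-path}) and branches on the at most $\alpha+\beta+\lambda$ vertices of $J_1\cup V(P)\cup J_2$, with the same correctness argument that a deletion set avoiding this set would leave a component containing obstructions from both families. Your write-up is in fact slightly more careful than the paper's (working in $G-X$ at each node, the explicit induction on $|S\setminus X|$, and the final minimality filter), but these are details the paper leaves implicit rather than a different approach.
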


\begin{proof}
Suppose that the premise of the statement holds true and let $\FF_1$ and $\FF_2$ be the forbidden families for the graph classes $\GG_1$ and $\GG_2$ respectively.
Let $J_1^*$ be a vertex subset isomorphic to $H_1$ and $J_2^*$ be a vertex subset isomorphic to $H_2$ such that $H_1 \in \FF_1$ and $H_2 \in \FF_2$.
Moreover, let $(J_1^*, J_2^*)$ be a closest forbidden subgraph pair in $G$ such that $P$ be a shortest path between $J_1^*$ and $J_2^*$ in $G$.
It follows from Proposition \ref{prop:scattered-property-with-path} that $P$ has at most $\lambda$ vertices.
Therefore, any mimimal $(\GG_1, \GG_2)$-deletion set must contain at least one vertex from $J_1^* \cup J_2^* \cup P$.
Such a closest obstruction pair $(J_1^*, J_2^*)$ can be identified in $n^{\OO(\alpha + \beta)}$-time.
We branch over every vertex $u \in J_1^* \cup J_2^* \cup P$ one by one and compute all the minimal $(\GG_1, \GG_2)$-deletion sets of size at most $k$.
Since, the number of vertices in $P$ is at most $\lambda$, the bounded search tree has $\alpha + \beta + \lambda$ branches and depth $k$.
Moreover every node, the algorithm takes $n^{\OO(1)}$ additional time since $\alpha, \beta, \lambda$ are all fixed constants.
Hence, the set of all these minimal $(\GG_1, \GG_2)$-deletion sets can be enumerated in $(\alpha + \beta + \lambda)^k n^{\OO(1)}$-time.
\end{proof}

Using the above lemma, we are ready to prove the following result.

{\ThmScatteredResult*}

\begin{proof}
Let $(G, k)$ be an instance to the {\pgonegtwodel} problem such that $\GG_1$ and $\GG_2$ are two hereditary graph classes satisfying the properties (i), (ii), (iii), and (iv).
Our first step is to invoke Lemma \ref{lemma:enum-min-gonetwo-del} to enumerate all minimal $(\GG_1, \GG_2)$-deletion sets of $G$.
If $X$ is a minimal $(\GG_1, \GG_2)$-deletion set, then the correctness of Lemma \ref{lemma:enum-min-gonetwo-del} ensures that every connected component of $G - X$ is either in $\GG_1$ or in $\GG_2$.
Due to condition (iv), every connected component of $G - X$ is an $\eta$-degenerate graph.
In the second step, for every minimal $(\GG_1, \GG_2)$-deletion set $X$ of $G$ with at most $k$ vertices, we invoke our main result, i.e. Theorem \ref{thm:main-result} to compute a vertex subset $S \supset X$ such that $G[S]$ is $p$-edge-connected and $|S| \leq k$.
This completes the proof.
\end{proof}

The following application is a natural candidate example for {\pgonegtwodel} problem.
Consider $\GG_1$ as the class of all graphs of degree at most $\alpha$ and $\GG_2$ as the class of all graphs that has no $P_{\beta}$ as subgraphs.
Observe that both $\GG_1$ and $\GG_2$ satisfy the properties (i), (ii), (iii), and (iv) described above.
Therefore, we have the following corollary as an application to {\pgonegtwodel} problem.

\begin{corollary}
\label{corollary:scattered-first-application}
Let $G$ be an undirected graph with $\alpha$ and $\beta$ being fixed constant.
Then, there is an algorithm that runs in $2^{\OO(pk + \alpha + \beta)}n^{\OO(\alpha + \beta)}$-time and either computes a set $S \subseteq V(G)$ of at most $k$ vertices such that $G[S]$ is $p$-edge-connected and every connected component of $G - S$ is either a degree-$\alpha$-graph or has no path with $\beta$ vertices as subgraph or outputs that no such set exists.
\end{corollary}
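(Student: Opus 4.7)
The approach is to invoke Theorem~\ref{thm:scattered-result} after assigning $\GG_1$ to be the class of $P_\beta$-free graphs and $\GG_2$ to be the class of degree-$\alpha$-graphs. The bulk of the work is verifying the four hypotheses (i)--(iv) of that theorem for this particular pair of classes.

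For condition~(i), both classes admit natural finite forbidden families: $\FF_1 = \{P_\beta\}$ for $P_\beta$-free graphs, and $\FF_2 = \{K_{1,\alpha+1}\}$ for degree-$\alpha$-graphs, since containing a star $K_{1,\alpha+1}$ as a subgraph is precisely the witness that some vertex has degree at least $\alpha+1$. Condition~(ii) then holds with $\lambda = \beta$, because $P_\beta \in \FF_1$ by construction. Condition~(iii) is immediate from $|V(P_\beta)| = \beta$ and $|V(K_{1,\alpha+1})| = \alpha + 2$, so the required size bounds are $\beta$ and $\alpha+2$.

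The most substantive step is condition~(iv), which demands that both classes lie inside the $\eta$-degenerate graphs for some fixed $\eta$. Degree-$\alpha$-graphs are trivially $\alpha$-degenerate. For $P_\beta$-free graphs, the classical Erd\H{o}s--Gallai extremal bound guarantees that any $n$-vertex graph with no $P_\beta$ as subgraph has at most $\tfrac{(\beta-2)n}{2}$ edges; since the property of being $P_\beta$-free is hereditary, the same bound applies to every subgraph, so every subgraph has a vertex of degree at most $\beta - 2$. Thus $P_\beta$-free graphs are $(\beta-2)$-degenerate, and we may take $\eta = \max\{\alpha, \beta - 2\}$. I expect this degeneracy bound to be the only step beyond bookkeeping.

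With all four hypotheses established, Theorem~\ref{thm:scattered-result} directly yields an algorithm with running time
$$2^{\OO(pk + \eta)}\bigl(\beta + (\alpha+2) + \beta\bigr)^k n^{\OO(\beta + (\alpha+2) + \beta)}.$$
Since $\alpha$ and $\beta$ are fixed constants, the factor $(2\beta + \alpha + 2)^k$ collapses into $2^{\OO(k)}$, which is in turn absorbed by $2^{\OO(pk)}$; combined with $\eta = \OO(\alpha + \beta)$, the overall bound simplifies to $2^{\OO(pk + \alpha + \beta)} n^{\OO(\alpha + \beta)}$, as claimed. This finishes the plan.
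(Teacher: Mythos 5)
Your proposal is correct and follows essentially the same route as the paper: verify hypotheses (i)--(iv) for the pair (degree-$\alpha$ graphs, $P_\beta$-subgraph-free graphs) and invoke Theorem~\ref{thm:scattered-result}. You are in fact slightly more careful than the paper's own proof: you assign the $P_\beta$-free class to $\GG_1$ so that the condition $P_\lambda\in\FF_1$ is literally met, and you justify the bounded degeneracy of $P_\beta$-free graphs via Erd\H{o}s--Gallai rather than merely asserting it (the only cosmetic slip, shared with the paper, is treating the subgraph obstructions $P_\beta$ and $K_{1,\alpha+1}$ as if they were the induced-subgraph forbidden families, which changes nothing asymptotically).
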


\begin{proof}
Let $G$ be an undirected graph and $k$ being a positive integer.
Moreover, suppose that $\alpha$ and $\beta$ are fixed constants and let $\eta = \max\{\alpha,\beta\}$.
Let $\GG_1$ denotes the class of all graphs of degree at most $\alpha$ and $\GG_2$ denotes the class of all graphs that have no path having $\beta$ vertices.
Let $\FF_1$ denotes the forbidden families for $\GG_1$ and $\FF_2$ denotes the forbidden families for $\GG_2$.
Since both $\FF_1$ and $\FF_2$ are finite forbidden families of size at most $\alpha$ and $\beta$ respectively, $\FF_2$ contains $P_{\beta}$, and both $\GG_1$ and $\GG_2$ are subclasses of $\eta$-degenerate graphs, it follows that $\GG_1$ and $\GG_2$ satisfy the conditions (i)-(iv).
Hence, we invoke the algorithm by Theorem \ref{thm:scattered-result} that solves {\pgonegtwodel} when $\GG_1$ is the class of all degree-$\alpha$-graphs and $\GG_2$ is the class of all graphs having no path of $\beta$ vertices.
 This algorithm runs in $2^{\OO(pk + \alpha + \beta)}n^{\OO(\alpha + \beta)}$-time and either it compute a set $S \subseteq V(G)$ such that $G[S]$ is $p$-edge-connected and every connected component of $G - S$ is either a graph of degree at most $\alpha$ or a graph that has no path with $\beta$ vertices.
This completes the proof.
\end{proof}

\section{Conclusions and Future Work}
\label{sec:conclusions}

There are several possible directions of future work.
Our main result proves that {\ExtPCon} is FPT when the removal of terminals results in a bounded degenerate graph.
But this does not cover several other well-known graph classes.
So, the following open problems are obvious.

\begin{open problem}
\label{opp1}	
{\rm Is {\ExtPCon} fixed-parameter tractable when $G$ is an arbitrary graph and $p$ is a fixed constant?}
\end{open problem}

Since $\eta$-degenerate graphs do not capture the class of all $\eta$-cliquewidth graphs, the following open problem arises.

\begin{open problem}
\label{opp2}
{\rm Is {\ExtPCon} fixed-parameter tractable when $G$ is a bounded cliquewidth graph?}	
\end{open problem}

A positive answer to Open Problem \ref{opp2} will imply an FPT algorithm for {\sc $p$-Edge-Connected Cluster Vertex Deletion} and {\sc $p$-Edge-Connected Cograph Vertex Deletion}.

If $p = 2$, then finding $p$-vertex/edge-connected steiner subgraph admits $k^{\OO(k)}n^{\OO(1)}$-time algorithm \cite{HeggernesHMMV15,Feldmann0L22}.
%\todo{DM: Reference \cite{Feldmann0L22} was mentioned in the introduction also. Needs a check.}
Einarson et al. \cite{EinarsonGJMW23} proved that $p$-Vertex-Connected Steiner Subgraph is FPT when the set of terminals forms a vertex cover of the input graph.
But their algorithm is super-exponential with running time $2^{\OO(k^2)}n^{\OO(1)}$.
So, the following is a natural open problem.

\begin{open problem}
\label{opp3}
{\rm When the set of terminals is a vertex cover of the input graph, can we compute a $p$-vertex-connected steiner subgraph in singly exponential-time?}	
\end{open problem}

We want to emphasize that the status of Open Problem \ref{opp3} is unknown even for $p = 2$.
On the perspective of applications of Theorem \ref{thm:main-result}, we have been successful in designing singly exponential-time FPT algorithms for {\pbdds}, {\ptdds}, {\ppwoneds} and {\ppvc}.
But, our results do not capture several other graph classes.
For instance, the above algorithm crucially relies that all minimal vertex deletion sets without connectivity requirements can be enumerated in $2^{\OO(k)}n^{\OO(1)}$-time and that a bounded degeneracy sequence can be computed in polynomial-time and this does not hold true for {\sc Feedback Vertex Set}.
So, the last and final open problem is the following.

\begin{open problem}
\label{opp4}
{\rm Given an undirected graph $G = (V, E)$ and a fixed-constant $p$, can we compute a $p$-edge-connected feedback vertex set of $G$ with at most $k$ vertices in $2^{\OO(pk)}n^{\OO(1)}$-time (or even in $f(pk)n^{\OO(1)}$-time)?}	
\end{open problem}

%\paragraph{Acknowledgement:} 

%\bibliography{references.bib}

\end{document}